\newcommand{\nc}{\newcommand}
\nc{\rnc}{\renewcommand}
\nc{\nn}{\nonumber}
\nc{\der}{{\partial}}
\rnc{\Im}{{\rm{Im}\,}}
\rnc{\Re}{{\rm{Re}\,}}
\nc{\db}{\displaybreak[0]\\}
\nc{\bra}{\langle}
\nc{\ket}{\rangle}
\nc{\bs}{\boldsymbol}
\nc{\tcr}{\textcolor{red}}
\newtheorem{theorem}{Theorem}[section]
\newtheorem{lemma}[theorem]{Lemma}
\newtheorem{proposition}[theorem]{Proposition}
\theoremstyle{definition}
\newtheorem{definition}[theorem]{Definition}
\numberwithin{equation}{section}
\numberwithin{equation}{section}
\begin{document}%
%
\title{
Quantum inverse scattering method and
generalizations of \\
symplectic Schur functions and Whittaker functions
}

\author{
Kohei Motegi$^1$\thanks{E-mail: kmoteg0@kaiyodai.ac.jp} \,
\ \
Kazumitsu Sakai$^2$\thanks{E-mail: k.sakai@rs.tus.ac.jp} \,
and
Satoshi Watanabe$^3$\thanks{E-mail: watanabe@gokutan.c.u-tokyo.ac.jp}
\\\\
$^1${\it Faculty of Marine Technology,}\\ 
{\it Tokyo University of Marine Science and Technology,}\\
 {\it Etchujima 2-1-6, Koto-Ku, Tokyo, 135-8533, Japan} \\
\\
$^2${\it Department of Physics,
Tokyo University of Science,}\\
 {\it Kagurazaka 1-3, Shinjuku-ku, Tokyo, 162-8601, Japan} \\
\\
$^3${\it Institute of physics, University of Tokyo,} \\ 
{\it Komaba 3-8-1, Meguro-ku, Tokyo 153-8902, Japan}
\\\\
\\
}


\maketitle

\begin{abstract}
We introduce generalizations of type $C$ and $B$ ice models
which were recently introduced by Ivanov and Brubaker-Bump-Chinta-Gunnells,
and study in detail the partition functions of the models
by using the quantum inverse scattering method.
We compute the explicit forms of the wavefunctions and their duals
by using the Izergin-Korepin technique, which can be applied to both models.
For type $C$ ice, we show the wavefunctions are expressed using
generalizations of the symplectic Schur functions. This gives
a generalization of the correspondence by Ivanov.
For type $B$ ice, we prove that the exact expressions of the
wavefunctions are given by generalizations of the Whittaker functions
introduced by Bump-Friedberg-Hoffstein. The special case
is the correspondence conjectured by Brubaker-Bump-Chinta-Gunnells.
We also show the factorized forms for
the domain wall boundary partition functions for both models.
As a consequence of the studies of the partition functions,
we obtain dual Cauchy formulas for the generalized symplectic Schur functions
and the generalized Whittaker functions.
\end{abstract}

\section{Introduction}
Integrable lattice models, which are a
special class of mathematical models in statistical physics,
have been playing important roles in the 
developments of combinatorics and representation theory in modern mathematics.
For example, investigating the mathematical structure of the $R$-matrices
and $L$-operators
lead to the discovery of quantum groups \cite{Dr,J}
and the developments of the quantum inverse scattering method
\cite{FST,Baxter,KBI}.

From the point of view of statistical physics,
the most important quantities are partition functions.
Partition functions of integrable lattice models are global quantities
consisting of $L$-operators and under specified boundary conditions.
A particular class called the domain wall boundary partition functions
\cite{Ko,Iz} have found applications to
the enumeration of alternating sign matrices
(see \cite{Bre,Ku} for examples) in the 1990s.

Partition functions are now gaining attention again
in combinatorial representation theory.
It has been realized that analyzing the details of partition functions
lead us to the discovery and refinements of algebraic identities
such as the Cauchy and Littlewood identities
for various types of symmetric functions
like the Schur, Grothendieck, Hall-Littlewood polynomials,
noncommutative versions and so on.
Recently, there are various studies on this topic,
including
\cite{Bogo,BBF,BMN,Lascoux,Mcnamara,KS,Korff,GK2,MSvertex,MSboson,MSW,BW,BWZ,WZ,Borodin,BP1,TakeyamaHecke,Takeyama,vDE,Motegi}.

The most fundamental and important thing to accomplish the detailed study is to
find the exact correspondence between the wavefunctions 
and the symmetric functions,
which depends on the type of the local $L$-operators
and the global boundary conditions.
In the correspondence,
the spectral parameters of the integrable lattice models
play the role of symmetric variables of the corresponding
symmetric functions.

One of the seminal works are by Brubaker-Bump-Friedberg \cite{BBF},
which they showed that the wavefunctions of some
free-fermionic six-vertex model are given as the product
of some factors and the Schur functions.
A free parameter lives in the free-fermionic six-vertex model, which plays
the role of refining the combinatorial formula for the
Schur polynomials. The Tokuyama formula \cite{To,OkTo,HK}
is a combinatorial formula which gives a deformation
of the Weyl character formula, and Brubaker-Bump-Friedberg
found that the wavefunctions of the free-fermionic six-vertex model
gives an integrable model realization of the Tokuyama formula.
The gauge transformed version of the $L$-operator
are related with the Felderhof model 
\cite{Felderhof,Mu,DA}
or the Perk-Schultz model \cite{PS}
which the underlying quantum group structure
are the colored representation of $U_q(sl_2)$ \cite{Mu,DA}
or superalgebra representation \cite{Yamane}.

Today, there are developments on studying the variations and generalizations
of the correspondence. See
\cite{OkTo,HK1,HK2,ZZ,FCWZ,ZYZ,Iv,Ivthesis,Tab,BBCG,BS,BBB,BBBG,LMP,MoIK,Moelliptic} for examples on this topic as well as former related works on this model.
For example, the correspondence between the wavefunctions of the
free-fermionic six-vertex models and the
Schur functions was generalized to the factorial Schur functions
by Bump-McNamara-Nakasuji \cite{BMN}
(including \cite{Lascoux,Mcnamara} as special cases)
by introducing inhomogeneous parameters in the quantum spaces.
They also showed that comparing two expressions
for the domain wall boundary partition functions,
one can derive a dual Cauchy identity for the factorial Schur functions.

One of the most recent developments are to twist the
higher rank Perk-Schultz model \cite{PS}
to construct the metaplectic ice \cite{BBB},
as done by Brubaker-Buciumas-Bump,
which gives combinatorial descriptions
of metaplectic Whittaker functions.
The dual version of the metaplectic ice giving the
same partition functions as the original one was recently constructed by
Brubaker-Buciumas-Bump-Gray \cite{BBBGduality}.
There are also related works on Hecke modules \cite{BBBF},
and the notion of metaplectic ice
was extended to type $C$ by Gray \cite{Gray}
by introducing the metaplectic ice under reflecting boundary.
There is also a recent work on
understanding the partition functions of metaplectic objects
in terms of vertex operators by
Brubaker-Buciumas-Bump-Gustafsson \cite{BBBG}.
Another development is the extension of the work by Brubaker-Bump-Friedberg,
Bump-McNamara-Nakasuji to the elliptic case \cite{Motegi,Moelliptic} by 
introducing and computing the wavefunctions of the
elliptic generalization of the Felderhof model and the Perk-Schultz model,
introduced by Foda-Wheeler-Zuparic \cite{FWZ}
(see also Deguchi-Akutsu \cite{DA})
and Okado \cite{Okado}, Deguchi-Fujii \cite{DF} and Deguchi-Martin \cite{DM}.

As for variations of the correspondence,
the seminal works are by Ivanov \cite{Iv,Ivthesis} and
Brubaker-Bump-Chinta-Gunnells \cite{BBCG} (see also Hamel-King \cite{HK1,HK2} 
for former related works),
in which they studied the wavefunctions under reflecting boundary conditions.
Ivanov \cite{Iv,Ivthesis} introduced and studied the wavefunctions of the six-vertex model
under reflecting boundary (type $C$ ice), which a diagonal boundary $K$-matrix 
is used at the boundary,
and showed that the explicit forms of the wavefunctions
are expressed as a product of factors and symplectic Schur functions.
By using a different boundary $K$-matrix (type $B$ ice),
Brubaker-Bump-Chinta-Gunnells \cite{BBCG}
conjectured the explicit forms of the wavefunctions.
They conjectured that the wavefunctions are given
by the product of factors and
the Whittaker functions introduced by Bump-Friedberg-Hoffstein \cite{BFH}.

In this paper, we introduce and study
in detail generalizations of the partition functions
of the free-fermionic six-vertex model under reflecting boundary,
and derive algebraic identities for
symmetric functions as a consequence of the detailed study.
We use an $L$-operator \cite{MSW,MoIK}
which is a specialization of the one in
Brubaker-Bump-Friedberg \cite{BBF} and which is a one-parameter
deformation of the one used in Bump-McNamara-Nakasuji \cite{BMN}
as the bulk weights, and deal with the partition functions under
reflecting boundary, using two types of $K$-matrices as the boundary weights
which were used in  Brubaker-Bump-Chinta-Gunnells \cite{BBCG} and Ivanov
\cite{Iv,Ivthesis}.
See also \cite{OkTo,Ts,FK,Filali,RK,CMRV,MoRMP,Mouqreflecting,Chinesegroup,Chinesegroup2,Galleasone,GL,Lamers}
for examples on works
on the domain wall boundary partition functions under reflecting boundary
of the $U_q(sl_2)$ six-vertex model and the free-fermionic six-vertex model,
and extensions to the elliptic model and to the wavefunctions.
The elliptic models were first analyzed for the
basic domain wall boundary partition functions without reflecting boundary conditions in \cite{PRS,Ros,YZ}.

We analyze the wavefunctions, the dual wavefunctions
and the domain wall boundary
partition functions in detail using the quantum inverse scattering method.
We use the Izergin-Korepin technique, which was originally initiated by
Korepin \cite{Ko} as a trick to characterize the properties of the polynomials
representing the domain wall boundary partition functions of the $U_q(sl_2)$ six-vertex model.
Izergin later found the explicit determinant representation (Izergin-Korepin determinant) in \cite{Iz}.
The Izergin-Korepin technique was applied to analyze variations
of the domain wall boundary partition functions (see Tsuchiya and Kuperberg \cite{Ku,Ts} for examples on seminal works of variations),
and the method was recently extended to the scalar products by Wheeler
\cite{Wheeler}
and to the wavefunctions by one of the authors \cite{Motegi,MoIK,Moelliptic}.

We use the Izergin-Korepin technique developed recently 
for the wavefunctions to prove the
exact correspondence between the wavefunctions and the symmetric functions.
As for the generalization of the type $C$ ice by Ivanov \cite{Iv,Ivthesis}, the wavefunctions can be also analyzed using
the argument of Ivanov which extends the one by
Brubaker-Bump-Friedberg \cite{BBF}.
However, the argument does not lead to the final answer
for type $B$ ice, as mentioned in the paper by Brubaker-Bump-Chinta-Gunnells
\cite{BBCG}.
We use the Izergin-Korepin technique for the analysis
as this method is applicable to the type $B$ ice as well as the type $C$ ice.
For the Izergin-Korepin technique to work,
one needs to generalize the $L$-operator, which is the one
used in this paper.
We show the exact correspondences between the wavefunctions,
the dual wavefunctions
and the symmetric functions by using this technique,
and the symmetric functions which appear are
generalizations of the (factorial) symplectic Schur functions
and the Bump-Friedberg-Hoffstein
Whittaker functions \cite{BFH}
for the type $C$ ice and type $B$ ice, respectively.
We also show the factorized expressions
for the domain wall boundary partition functions
by using the Izergin-Korepin technique.
Combining the above correspondences, we derive dual Cauchy formulas
for the generalized symplectic Schur functions
and the generalized Bump-Friedberg-Hoffstein Whittaker functions,
using the idea of Bump-McNamara-Nakasuji \cite{BMN}.

This paper is organized as follows.
In the next section, we introduce a generalized $L$-operator
of the free-fermionic six-vertex model.
In section 3, we introduce two types of the wavefunctions under
reflecting boundary, based on two different types 
of diagonal $K$-matrices and
call them as the type I wavefunctions and the type II wavefunctions.

Sections 4 and 5 are devoted to the detailed study
of the type I wavefunctions.
In section 4, we make the Izergin-Korepin analysis
on the wavefunctions and their duals, and derive their explicit forms
and show the correspondence with the generalized symplectic Schur functions.
In section 5, we prove the complete factorized form for
the domain wall boundary partition functions by the Izergin-Korepin analysis,
and derive the dual Cauchy formula
for the generalized symplectic Schur functions as a consequence
of evaluating the domain wall boundary partition functions
in two ways.

In sections 6 and 7, we perform a similar analysis on the
type II partition functions.
We derive the exact correspondence between the wavefunctions
, the dual wavefunctions of type II
and the generalized Bump-Friedberg-Hoffstein Whittaker functions.
We also derive the factorized form of 
the type II domain wall boundary partition functions,
and derive the dual Cauchy formulas for
the generalized Bump-Friedberg-Hoffstein Whittaker functions.
Section 8 is devoted to conclusion. Some calculations are deferred to
Appendix.

\section{The free-fermionic six-vertex model}

In this section, we introduce the $R$-matrix and the $L$-operator
of the free-fermionic six-vertex model \cite{MSW,MoIK},
and explain that the $L$-operator
$L_{aj}(z,t,\alpha_j,\gamma_j)$ \eqref{generalizedloperator}
which we use as the bulk elements of the wavefunctions in this paper
is a specialization of the one in
Brubaker-Bump-Friedberg \cite{BBF} and is a one-parameter
deformation of the one used in Bump-McNamara-Nakasuji \cite{BMN}.

The $R$-matrix of the free-fermionic six-vertex model is given as
\begin{eqnarray}
R(z,p,q)=\left( 
\begin{array}{cccc}
1-pqz & 0 & 0 & 0 \\
0 & -p^2(1-p^{-1}qz) & 1-q^2 & 0 \\
0 & (1-p^2)z & z-p^{-1}q & 0 \\
0 & 0 & 0 & z-pq
\end{array}
\right), \label{generalizedrmatrix}
\end{eqnarray}
which acts on the tensor product $W \otimes W$ of the
complex two-dimensional vector space $W=\mathbb{C}^2$
spanned by the 
``empty state"  $|0\ket={1 \choose 0}$ and the 
``particle occupied state" $|1\ket={0 \choose 1}$. 
The generalized $R$-matrix \eqref{generalizedrmatrix}
can be shown to satisfy the Yang-Baxter relation
\begin{align}
R_{12}(z_1/z_2,p_1,p_1)&R_{13}(z_1,p_1,p_2)R_{23}(z_2,p_1,p_2) \nn \\
&=R_{23}(z_2,p_1,p_2)R_{13}(z_1,p_1,p_2)R_{12}(z_1/z_2,p_1,p_1),
\label{generalizedyangbaxter}
\end{align}
holding in $\mathrm{End}(W_1 \otimes W_2 \otimes W_3)$.

There is one explanation
why one can introduce at least two parameters $p$ and $q$
besides the spectral parameter $z$ in the generalized $R$-matrix
\eqref{generalizedrmatrix} from the point of view of quantum groups.
There is a class of an exotic quantum group called the
colored representation or the nilpotent representation \cite{Mu,DA}.
The colored representation is a finite-dimensional
highest weight representation which exists when the
parameter of the quantum group is fixed at roots of unity.
Each colored representation space is allowed to have
a free parameter, and since the $R$-matrix is understood as
an intertwiner acting on the tensor product of two representation spaces,
one can include two free parameters.

The two types of the Boltzmann weights in \cite{BBF}
can be given as reductions from the
generalized $R$-matrix \eqref{generalizedrmatrix}.
The first type of the Boltzmann weight in \cite{BBF} comes by specializing $q$
to $q=p$:
\begin{eqnarray}
R(z,t)=R(z,p,p)=\left( 
\begin{array}{cccc}
1+tz & 0 & 0 & 0 \\
0 & t(1-z) & t+1 & 0 \\
0 & (t+1)z & z-1 & 0 \\
0 & 0 & 0 & z+t
\end{array}
\right), \label{rmatrix}
\end{eqnarray}
where we set $t=-p^2$.

The second type of Boltzmann weight is given by
taking $q=0$:
\begin{eqnarray}
L(z,t)=R(z,p,0)=\left( 
\begin{array}{cccc}
1 & 0 & 0 & 0 \\
0 & t & 1 & 0 \\
0 & (t+1)z & z & 0 \\
0 & 0 & 0 & z
\end{array}
\right). \label{loperator}
\end{eqnarray}

Under this specialization, the generalized Yang-Baxter relation 
\eqref{generalizedyangbaxter} is rewritten as
\begin{align}
R_{ab}(z_1/z_2,t)L_{aj}(z_1,t)L_{bj}(z_2,t)
=L_{bj}(z_2,t)L_{aj}(z_1,t)R_{ab}(z_1/z_2,t), \label{RLL}
\end{align}
acting on $W_a \otimes W_b \otimes V_j$.
Here, $V$ and $W$ are both complex two-dimensional
vector spaces $V=W=\mathbb{C}^2$.
By convention we call $W$ and $V$ the auxiliary space and
quantum space, respectively.
The operator \eqref{rmatrix} acting on the tensor product
of two auxiliary spaces $W \otimes W$ is called the $R$-matrix,
and the operator \eqref{loperator} acting on the tensor product
of one auxiliary and one quantum spaces $W \otimes V$
is called the $L$-operator.
A class of the Yang-Baxter relation \eqref{RLL}
is usually referred to as the $RLL$ relation.

A certain class of partition functions
which we call wavefunctions, which we describe in more detail
in the next section, is constructed from the $L$-operator.
It was shown by Brubaker-Bump-Friedberg \cite{BBF} that the wavefunctions for the basic case
(without reflecting boundary)
constructed from the $L$-operator \eqref{loperator}
are expressed in terms of Schur functions.
Later, the construction was extended to the factorial Schur functions
by Bump-McNamara-Nakasuji \cite{BMN}.
From the quantum integrable point of view, the crucial point
to construct the factorial Schur functions
was to generalize the $L$-operator \eqref{loperator}
by keeping the $RLL$ relation \eqref{RLL}.

We find a further generalization of the $L$-operator \cite{MSW,MoIK}
satisfying the $RLL$ relation which is given by
\begin{eqnarray}
L_{aj}(z,t,\alpha_j,\gamma_j)=\left( 
\begin{array}{cccc}
1-\gamma_j z & 0 & 0 & 0 \\
0 & t+\gamma_j z & 1 & 0 \\
0 & (t+1)z & \alpha_j+(1-\alpha_j \gamma_j)z & 0 \\
0 & 0 & 0 & -t \alpha_j+(1-\alpha_j \gamma_j)z
\end{array}
\right), \label{generalizedloperator}
\end{eqnarray}
acting on $W_a \otimes V_j$. The parameters $\alpha_j$ and $\gamma_j$ can be
regarded as parameters associated with the quantum space $V_j$.
See Figure \ref{pictureloperator} for a graphical representation
of \eqref{generalizedloperator}. 
Hereafter we call this $L$-operator 
type $\Gamma$ $L$-operator. The $L$-operator whose basic wavefunctions
give the factorial Schur functions
\cite{BMN}
is a special limit of the generalized $L$-operator
\eqref{generalizedloperator}
\begin{eqnarray}
L_{aj}(z,t,\alpha_j)=L_{aj}(z,t,\alpha_j,0)=\left( 
\begin{array}{cccc}
1 & 0 & 0 & 0 \\
0 & t & 1 & 0 \\
0 & (t+1)z & \alpha_j+z & 0 \\
0 & 0 & 0 & -t \alpha_j+z
\end{array}
\right)
\label{factorialloperator}. 
\end{eqnarray}

Let us make some comments on the $L$-operator
\eqref{generalizedloperator}.
The $L$-operator
\eqref{generalizedloperator} is in fact a special case of
a class of $L$-operators satisfying the free-fermion condition
and satisfying the $RST$ relation of Theorems 3 and 4 in
Brubaker-Bump-Friedberg \cite{BBF}
(see also Proposition 1 in Bump-McNamara-Nakasuji \cite{BMN}).
What we call as the $L$-operator or the vertex which the $L$-operator
is associated with corresponds to
the operators $S$, $T$ in Theorem 3 in \cite{BBF},
and the vertices with free-fermionic Boltzmann weights $v$ and $w$
in Proposition 1 in \cite{BMN}.
The $R$-matrix we call in this paper
or the vertex which the $R$-matrix is associated with
essentially corresponds to the operator $R$ in Theorem 3 in \cite{BBF},
and the vertex with the Boltzmann weights $u$
in Proposition 1 in \cite{BMN} (note that there is
a freedom to multiply the $R$-matrix by overall factors
which keeps the $RLL$ relation).
For example, it is explained in the Proof of Lemma 1
in \cite{BMN} that to check the
$L$-operator \eqref{factorialloperator} and the $R$-matrix \eqref{rmatrix}
(multiplied by an overall factor) satisfy the relation
\begin{eqnarray}
a_1(u)=a_1(v)a_2(w)+b_2(v)b_1(w),
\end{eqnarray}
which is one of the relations first stated in \cite{BBF}
($a_1(u), a_1(v), a_2(w),b_2(v), b_1(w)$ are notations used in
\cite{BMN})
is nothing but to check the relation
\begin{eqnarray}
tz_i+z_k=1\cdot(z_k-t \alpha_j)
+(z_i+\alpha_j)t.
\end{eqnarray}
For the case of the $L$-operator \eqref{generalizedloperator}
which is a one-parameter deformation of the one \eqref{factorialloperator},
checking the relation now becomes to check the following relation holds:
\begin{eqnarray}
tz_i+z_k=(1-\gamma_j z_i)(-t \alpha_j+(1-\alpha_j \gamma_j)z_k)
+(\alpha_j+(1-\alpha_j \gamma_j)z_i)(t+\gamma_j z_k).
\end{eqnarray}
One can easily see that this identity still holds.
One can also check the other five relations in \cite{BBF}, \cite{BMN}.
It is also easy to check that the $L$-operator \eqref{generalizedloperator}
satisfy the free-fermion condition.
We use the $L$-operator \eqref{generalizedloperator}
since we investigate the partition functions in this paper
by the Izergin-Korepin method which we view as functions
in the variables $\gamma_j$s.

\begin{figure}[hh]
\centering
\includegraphics[width=0.95\textwidth]{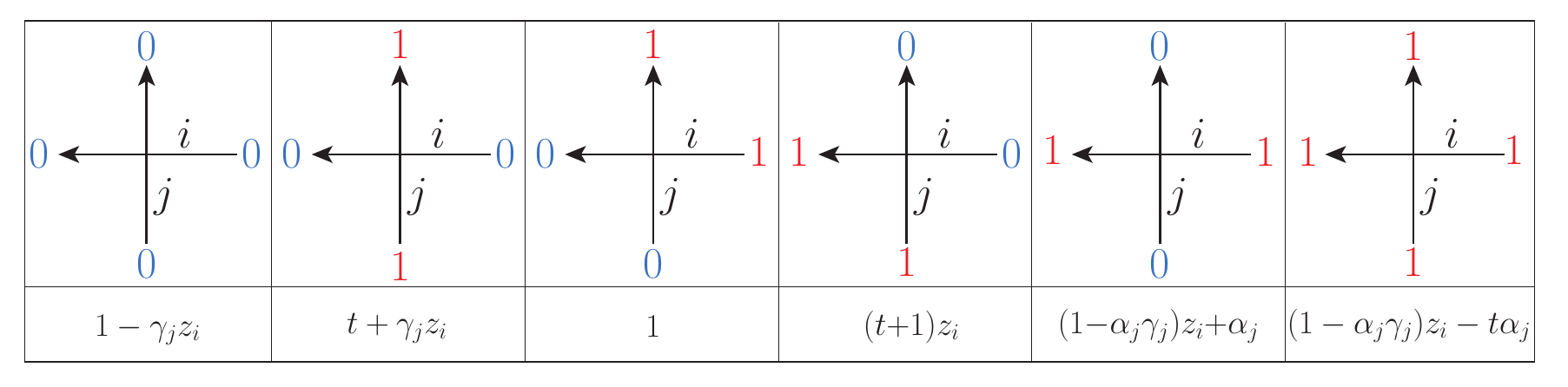}
\caption{Type $\Gamma$ $L$-operator 
\eqref{generalizedloperator}.}
\label{pictureloperator}
\end{figure}

\section{Type I and type II
wavefunctions under reflecting boundary}
In this section, we introduce two types of the wavefunctions
under reflecting boundary
by using $K$-matrices of
Ivanov \cite{Iv,Ivthesis} and Brubaker-Bump-Chinta-Gunnells
\cite{BBCG}.
The wavefunctions are constructed from
the double monodromy matrices.
To introduce double monodromy matrices,
we introduce the following another $L$-operator
(we call it type $\Delta$ $L$-operator 
$\widetilde{L}_{aj}(z,t,\alpha_j,\gamma_j)$
at the $j$th site
in the quantum space (Figure \ref{pictureanotherloperator})
\begin{eqnarray}
\widetilde{L}_{aj}(z,t,\alpha_j,\gamma_j)=\left( 
\begin{array}{cccc}
\alpha_j+(1-\alpha_j \gamma_j)z & 0 & 0 & 0 \\
0 & t(1-\alpha_j \gamma_j)z-\alpha_j & 1 & 0 \\
0 & (t+1)z & 1-\gamma_j z & 0 \\
0 & 0 & 0 & t \gamma_j z+1
\end{array}
\right). \label{anothergeneralizedloperator}
\end{eqnarray}
Note that the type $\Gamma$ and $\Delta$ operators satisfy the
following Yang-Baxter relations
\begin{equation}
\includegraphics[width=0.75\textwidth]{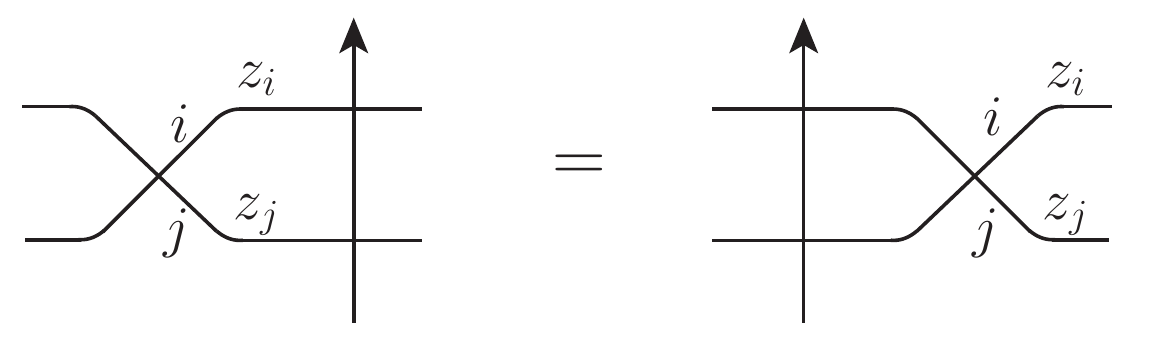}
\label{YBR}
\end{equation}
where the $R$-matrices are defined in Figure \ref{pic-vertex}.
At the boundary, we use boundary weights
called as the $K$-matrix. We use two types of $K$-matrices
in this paper.
The first $K$-matrix $K^{\rm I}_{a}(z,t,\alpha_0,\gamma_0)$ is given by
(Figure \ref{picturekmatrixone})
\begin{eqnarray}
K^{\rm I}_{a}(z,t,\alpha_0,\gamma_0)=\left( 
\begin{array}{cc}
(1-\alpha_0 \gamma_0)tz-\alpha_0 & 0 \\
0 & (1-\alpha_0 \gamma_0)z^{-1}+\alpha_0 \\
\end{array}
\right), \label{generalizedkmatrix}
\end{eqnarray}
where $\alpha_0$, $\gamma_0$ is a free parameter.
\eqref{generalizedkmatrix} is a generalization 
of the one given by Ivanov.
Setting $\alpha_0=\gamma_0=0$, \eqref{generalizedkmatrix}
reduces to the $K$-matrix used in \cite{Iv,Ivthesis}. The $K$-matrix \eqref{generalizedkmatrix}
satisfies the reflection equation \cite{Sklyanin}
\begin{equation}
\includegraphics[width=0.75\textwidth]{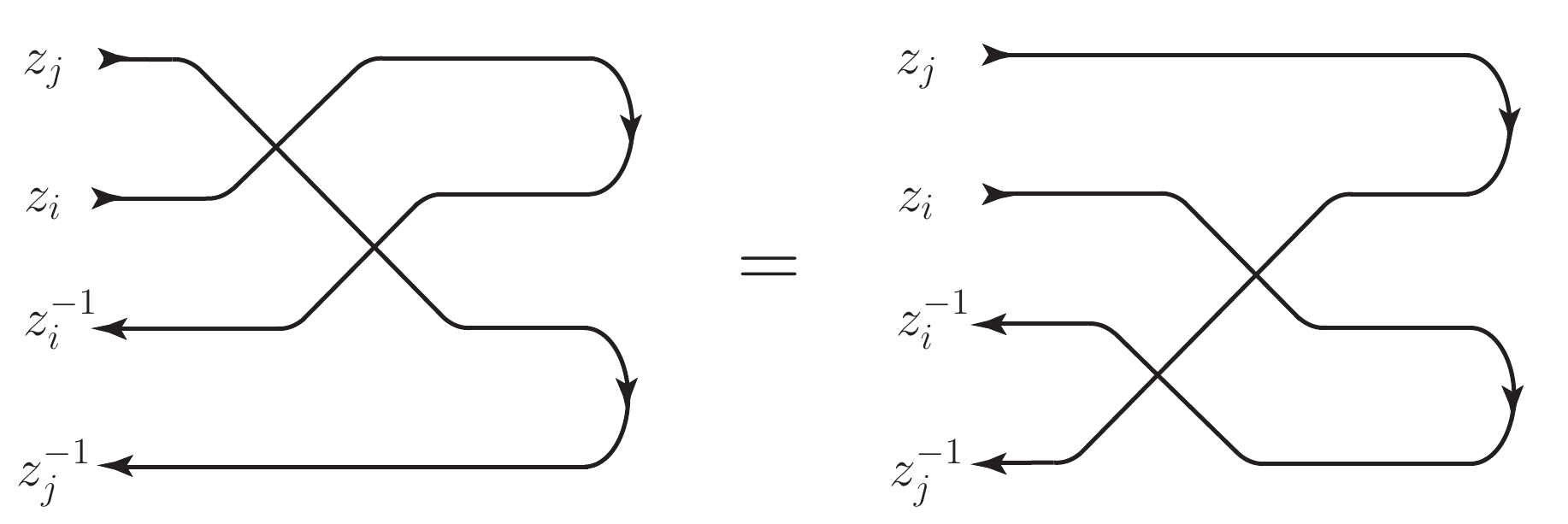}
\label{RE}
\end{equation}

\begin{figure}[b]
\centering
\includegraphics[width=0.95\textwidth]{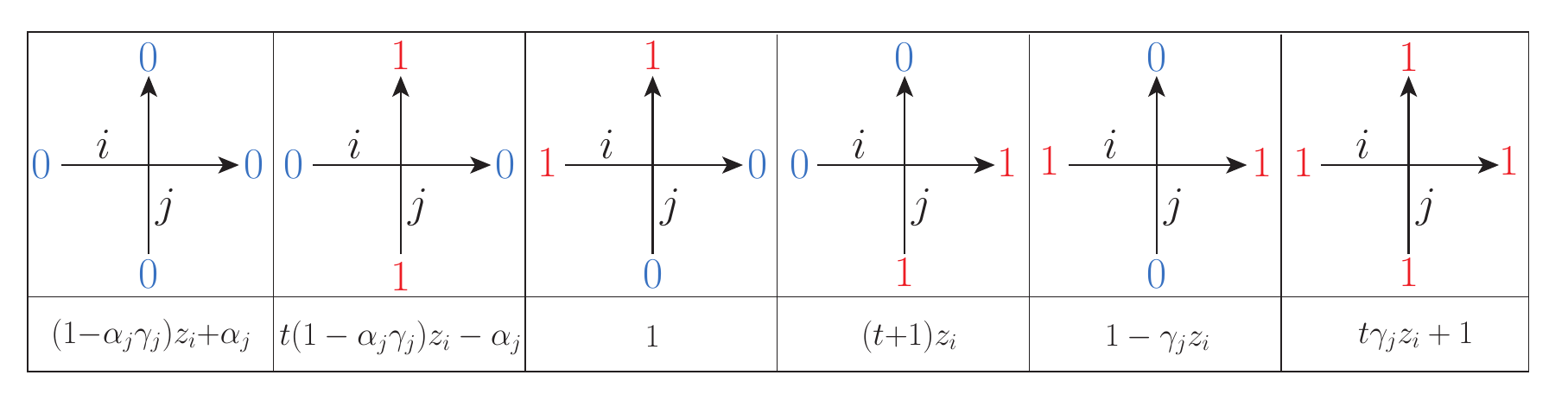}
\caption{Type $\Delta$ $L$-operator
\eqref{anothergeneralizedloperator}.}
\label{pictureanotherloperator}
\end{figure}

\begin{figure}[ttt]
\centering
\includegraphics[width=0.95\textwidth]{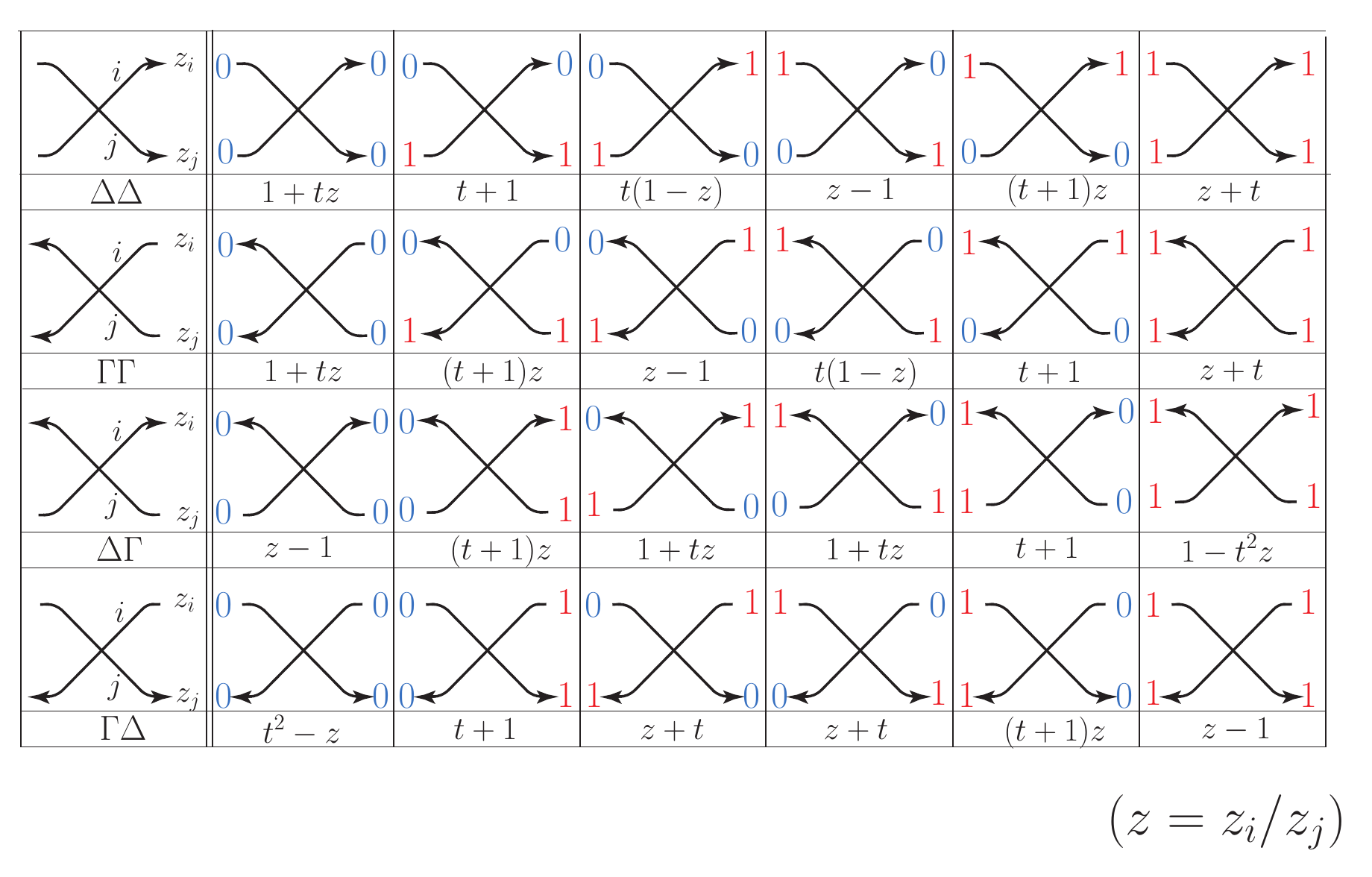}
\caption{$R$-matrices.}
\label{pic-vertex}
\end{figure}

\begin{figure}[b]
\centering
\includegraphics[width=0.3\textwidth]{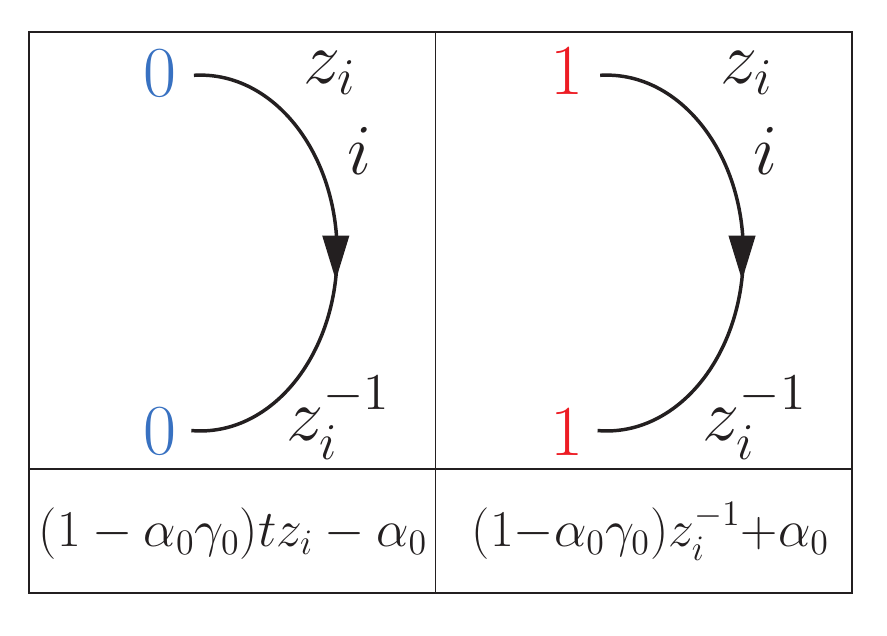}
\caption{The type I $K$-matrix
\eqref{generalizedkmatrix}.}
\label{picturekmatrixone}
\end{figure}

\begin{figure}[b]
\centering
\includegraphics[width=0.3\textwidth]{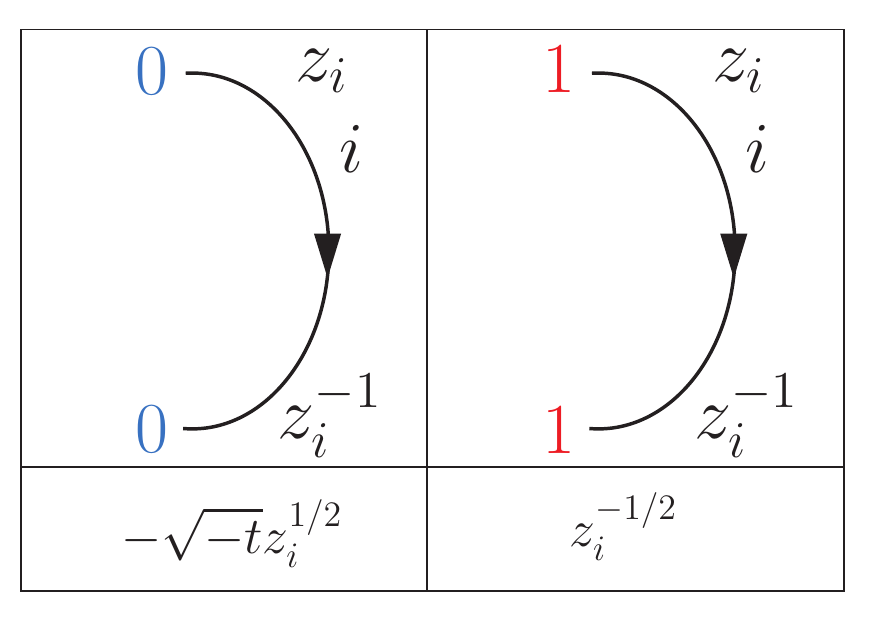}
\caption{The type II $K$-matrix
\eqref{generalizedkmatrixsecond}.}
\label{picturekmatrixtwo}
\end{figure}

\begin{figure}[ht]
\centering
\includegraphics[width=0.7\textwidth]{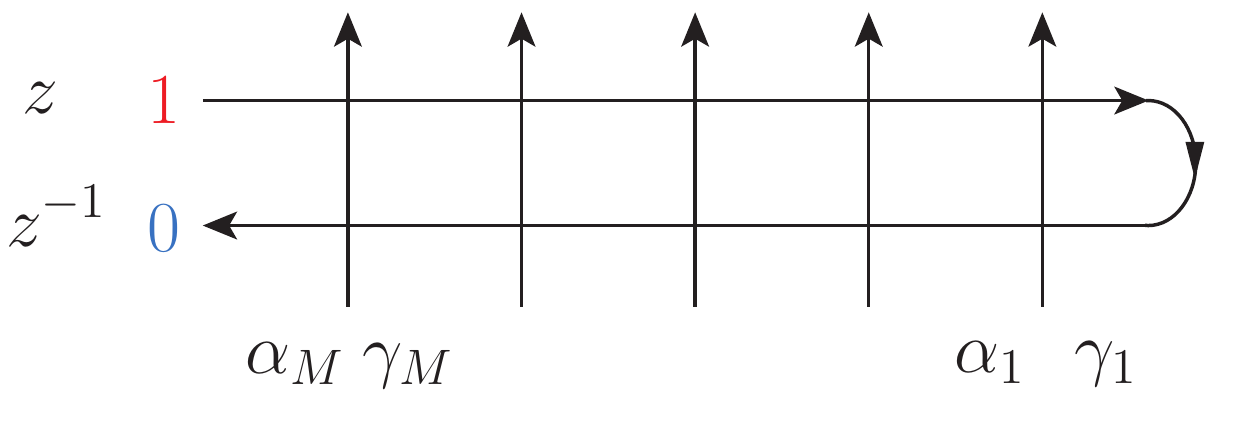}
\caption{The double row monodromy matrices $\mathcal{B}^{\rm I}(z,\{ \overline{\alpha} \},\{ \overline{\gamma} \})$
\eqref{generalizeddoublerow},
$\mathcal{B}^{\rm II}(z,\{ \alpha \},\{ \gamma \})$
\eqref{generalizeddoublerowsecond}.}
\label{picturedoublerow}

\end{figure}

We use another diagonal $K$-matrix in this paper.
The second $K$-matrix $K^{\rm II}_{a}(z,t)$ is the one introduced by
Brubaker-Bump-Chinta-Gunnells \cite{BBCG}
(Figure \ref{picturekmatrixtwo})
\begin{eqnarray}
K^{\rm II}_{a}(z,t)=\left( 
\begin{array}{cc}
-\sqrt{-t}z^{1/2} & 0 \\
0 & z^{-1/2} \\
\end{array}
\right), \label{generalizedkmatrixsecond}
\end{eqnarray}
which also satisfies the reflection equation \eqref{RE}.

From the generalized $L$-operators
\eqref{generalizedloperator} and \eqref{anothergeneralizedloperator},
one constructs two types of monodromy matrices
\begin{align}
T_{a}(z,\{ \alpha \},\{ \gamma \})=L_{a M}(z^{-1},t,\alpha_M,\gamma_M) \cdots L_{a 1}(z^{-1},t,\alpha_1,\gamma_1)
,
\label{generalizedmonodromy1}
\end{align}
and
\begin{align}
\widetilde{T}_{a}(z,\{ \alpha \},\{\gamma \})=\widetilde{L}_{a 1}(z,t,\alpha_1,\gamma_1) \cdots \widetilde{L}_{a M}(z,t,\alpha_M,\gamma_M)
,
\label{generalizedmonodromy2}
\end{align}
which act on $W_a \otimes (V_1\otimes\dots\otimes 
V_{M})$,
We denote the matrix elements of the two monodromy matrices as
\begin{align}
A(z,\{ \alpha \},\{\gamma \})={}_a \langle 0|T_{a}(z, \{ \alpha \},\{\gamma \})|0 \rangle_{a}, \nn \\
B(z,\{ \alpha \},\{\gamma \})={}_a \langle 0|T_{a}(z, \{ \alpha \},\{\gamma \})|1 \rangle_{a},
\end{align}
and
\begin{align}
\widetilde{A}(z,\{ \alpha \},\{\gamma \})={}_a \langle 1|\widetilde{T}_{a}(z,\{ \alpha \},\{\gamma \})|1 \rangle_{a}, \nn \\
\widetilde{B}(z,\{ \alpha \},\{\gamma \})={}_a \langle 0|\widetilde{T}_{a}(z,\{ \alpha \},\{\gamma \})|1 \rangle_{a},
\end{align}
where, $\{ \alpha \}=\{\alpha_1,\dots,\alpha_M \}$
and $\{\gamma \}=\{\gamma_1,\dots,\gamma_M \}$.

Next, we introduce the double row $B$-operators.
We define two types depending on which $K$-matrix we use
at the boundary.
For type I $K$-matrix \eqref{generalizedkmatrix},
we define the type I double row $B$-operator
$\mathcal{B}^{\rm I}(z,\{ \overline{\alpha} \},\{ \overline{\gamma} \})$ as
\begin{align}
\mathcal{B}^{\rm I} (z,\{ \overline{\alpha} \},\{ \overline{\gamma} \}) 
&=\widetilde{B}(z,\{\alpha \},\{\gamma \}) {}_a \langle 0| K_a^{\rm I}
(z,t,\alpha_0,\gamma_0)|0 \rangle_a A(z,\{\alpha \},\{\gamma \}) \nonumber \\
&\qquad +\widetilde{A}(z,\{ \alpha \},\{ \gamma \}) {}_a \langle 1| 
K_a^{\rm I}(z,t,\alpha_0,\gamma_0)|1 \rangle_a B(z,\{ \alpha \},\{ \gamma \}) \nn \\ 
&=
\left\{(1-\alpha_0 \gamma_0)tz-\alpha_0\right\}
\widetilde{B}(z,\{ \alpha \},\{ \gamma \})A(z,\{ \alpha \},\{ \gamma \})
\nonumber \\
&\qquad +
\left\{(1-\alpha_0 \gamma_0)z^{-1}+\alpha_0\right\}
\widetilde{A}(z,\{ \alpha \},\{ \gamma \}) B(z,\{ \alpha \},\{ \gamma \}).
\label{generalizeddoublerow}
\end{align}
Here, the set of parameters $\{ \overline{\alpha} \}$
and $\{ \overline{\gamma} \}$ means
$\{ \overline{\alpha} \}=\{\alpha_0\}\sqcup\{\alpha\}$
and
$\{ \overline{\gamma} \}=\{\gamma_0\}\sqcup\{\gamma\}$,
respectively.

For type II $K$-matrix \eqref{generalizedkmatrixsecond},
we similarly define $\mathcal{B}^{\rm II}(z,\{ \alpha \},\{ \gamma \})$ as
\begin{align}
\mathcal{B}^{\rm II}(z,\{ \alpha \},\{ \gamma \}) 
=&\widetilde{B}(z,\{\alpha \},\{\gamma \}) {}_a \langle 0| K_a^{\rm II}(z,t)|0 \rangle_a A(z,\{\alpha \},\{\gamma \}) \nonumber \\
&\qquad +\widetilde{A}(z,\{ \alpha \},\{ \gamma \}) {}_a \langle 1| K_a^{\rm II}(z,t)|1 
\rangle_a B(z,\{ \alpha \},\{ \gamma \}) \nn \\
=&
-\sqrt{-t}z^{1/2}
\widetilde{B}(z,\{ \alpha \},\{ \gamma \})A(z,\{ \alpha \},\{ \gamma \})
\nonumber \\
&\qquad +
z^{-1/2}
\widetilde{A}(z,\{ \alpha \},\{ \gamma \}) B(z,\{ \alpha \},\{ \gamma \}).
\label{generalizeddoublerowsecond}
\end{align}
See Figure \ref{picturedoublerow} for graphical representations
of the double row $B$-operators
\eqref{generalizeddoublerow}, \eqref{generalizeddoublerowsecond}.

\begin{figure}[ttt]
\centering
\includegraphics[width=0.6\textwidth]{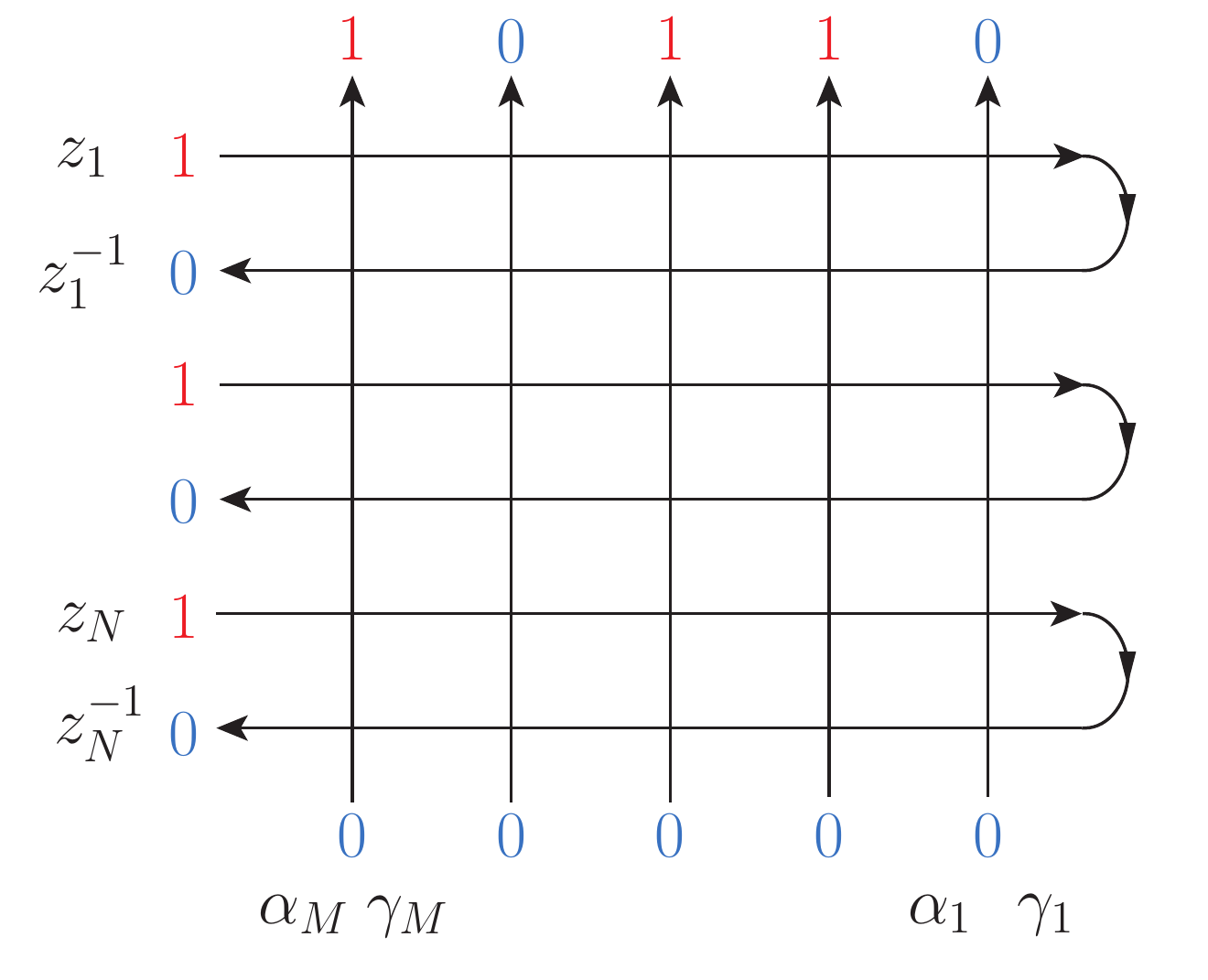}
\caption{The wavefunctions $\Phi^{\rm I}_{M,N}$ \eqref{wavefunctions},
$\Phi^{\rm II}_{M,N}$
\eqref{typetwowavefunctions}
under reflecting boundary.
The figure illustrates the case $M=5$, $N=3$, $x_1=2$, $x_2=3$, $x_3=5$.
}
\label{picturewavefunction}

\end{figure}

\begin{figure}[ht]
\centering
\includegraphics[width=0.6\textwidth]{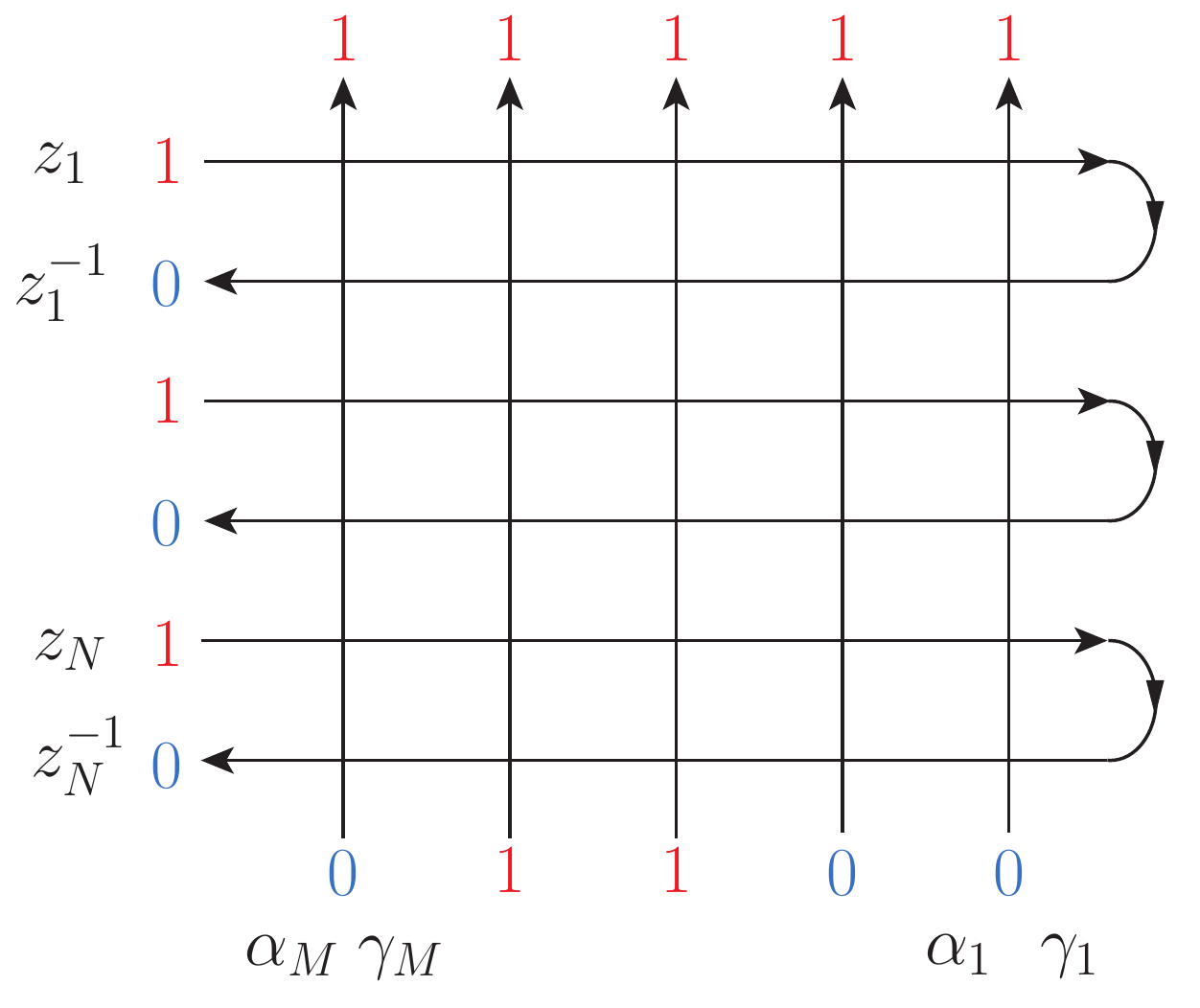}
\caption{The dual wavefunctions $\overline{\Phi}^{\rm I}_{M,N}$ 
\eqref{dualwavefunctions},
$\overline{\Phi}^{\rm II}_{M,N}$
\eqref{typetwodualwavefunctions}
under reflecting boundary.
The figure illustrates the case $M=5$, $N=3$, $\overline{x_1}=1$,
$\overline{x_2}=2$, $\overline{x_3}=5$.
}
\label{picturedualwavefunction}

\end{figure}

We also introduce notations for special states
in the tensor product of the Fock spaces
$V_1 \otimes \cdots \otimes V_M$
and its dual $(V_1 \otimes \cdots \otimes V_M)^*$
as
\begin{alignat}{2}
| 0^{M} \rangle&:=|0\ket_1\
\otimes \dots \otimes |0\ket_M, \qquad
&& | 1^{M} \rangle:=|1\ket_1\
\otimes \dots \otimes |1\ket_M, \\
\langle 0^{M}|&:=
{}_1\bra 0|\otimes\dots \otimes{}_M\bra 0|, \qquad
&& \langle 1^{M}|:=
{}_1\bra 1|\otimes\dots \otimes{}_M\bra 1|.
\end{alignat}
By acting the operators $\sigma_j^+$ and $\sigma_j^-$ defined by
\begin{alignat}{4}
&\sigma^+_j|1 \rangle_j= |0 \rangle_j, \quad
&&\sigma^+_j|0 \rangle_j= 0, \quad
&&{}_j \langle 1|\sigma^+_j=0, \quad
&&{}_j \langle 0|\sigma^+_j={}_j \langle 1|, \nn \\
&\sigma^-_j|0 \rangle_j=|1 \rangle_j, \quad
&&\sigma^-_j|1 \rangle_j =0, \quad
&&{}_j \langle 0|\sigma^-_j=0, \quad
&&{}_j \langle 1|\sigma^-_j={}_j \langle 0|,
\end{alignat}
on $|0^M \rangle$, $\langle 0^M|$,  $| 1^M\ket$,
we introduce states
\begin{align}
\langle x_1 \cdots x_N|
=
\langle 0^M|
\prod_{j=1}^N \sigma^+_{x_j},\quad 
|x_1 \cdots x_N \rangle
=
\prod_{j=1}^N \sigma^-_{x_j}|0^M \rangle,
\label{particleconfiguration} 
\end{align}
for integers $x_1,\dots,x_N$ satisfying
$1 \le x_1 < x_2 < \cdots < x_N \le M$,
and
\begin{align}
|\overline{x_1} \cdots \overline{x_N} \rangle
&=
\prod_{j=1}^N \sigma^+_{\overline{x_j}}|1^M \rangle,
\label{dualholeconfiguration}
\end{align}
for integers $\overline{x_1},\dots,\overline{x_N}$
satisfying
$1 \le \overline{x_1}< \overline{x_2} < \cdots < \overline{x_N} \le M$.

We are now in a position to introduce the wavefunctions.
We introduce the type I wavefunctions
$\Phi^{\rm I}_{M,N}(z_1,\dots,z_N|\gamma_1,\dots,\gamma_M|x_1,\dots,x_N)$
by acting the type I double-row $B$-operators
$\mathcal{B}^{\rm I}(z_j,\{ \overline{\alpha} \},\{ \overline{\gamma} \})$
($j=1,\dots,N$) \eqref{generalizeddoublerow}
on the state $|0^M \rangle$
and taking the inner product with
$\langle x_1 \cdots x_N|$ (Figure \ref{picturewavefunction}):
\begin{align}
\Phi^{\rm I}_{M,N}&(z_1,\dots,z_N|\gamma_1,\dots,\gamma_M|x_1,\dots,x_N) \nonumber \\
&=\langle x_1 \cdots x_N|\mathcal{B}^{\rm I}(z_1,\{ \overline{\alpha} \},\{ \overline{\gamma} \}) \cdots
\mathcal{B}^{\rm I}(z_N,\{ \overline{\alpha} \},\{ \overline{\gamma} \})|0^M \rangle.
\label{wavefunctions}
\end{align}
Likewise, we define the dual wavefunctions of type I
$\overline{\Phi}^{\rm I}_{M,N}(z_1,\dots,z_N|\gamma_1,\dots,\gamma_M|x_1,\dots,x_N)$ as
(Figure \ref{picturedualwavefunction})
\begin{align}
\overline{\Phi}^{\rm I}_{M,N}&(z_1,\dots,z_N|\gamma_1,\dots,
\gamma_M|\overline{x_1},\dots,\overline{x_N}) \nonumber \\
&=\langle 1^M|\mathcal{B}^{\rm I}(z_1,\{ \overline{\alpha} \},\{ \overline{\gamma} \}) 
\cdots \mathcal{B}^{\rm I}(z_N,\{ \overline{\alpha} \},\{ \overline{\gamma} \})|
\overline{x_1} \cdots \overline{x_N} \rangle.
\label{dualwavefunctions}
\end{align}

Similarly, 
we define the type II wavefunctions
$\Phi^{\rm II}_{M,N}(z_1,\dots,z_N|\gamma_1,\dots,\gamma_M|x_1,\dots,x_N)$
(Figure \ref{picturewavefunction})
and the dual wavefunctions
$\overline{\Phi}^{\rm II}_{M,N}
(z_1,\dots,z_N|\gamma_1,\dots,\gamma_M|\overline{x_1},\dots,\overline{x_N})$
(Figure \ref{picturedualwavefunction})
by using the type II double-row $B$-operators 
$\mathcal{B}^{\rm II}(z_N,\{ \alpha \},\{ \gamma \})$ 
\eqref{generalizeddoublerowsecond} as
\begin{align}
\Phi^{\rm II}_{M,N}&(z_1,\dots,z_N|\gamma_1,\dots,\gamma_M|x_1,\dots,x_N) \nonumber \\
&=\langle x_1 \cdots x_N|\mathcal{B}^{\rm II}(z_1,\{ \alpha \},\{ \gamma \}) \cdots
\mathcal{B}^{\rm II}(z_N,\{ \alpha \},\{ \gamma \})|0^M \rangle,
\label{typetwowavefunctions}
\end{align}
and
\begin{align}
\overline{\Phi}^{\rm II}_{M,N}&(z_1,\dots,z_N|\gamma_1,\dots,\gamma_M|\overline{x_1},
\dots,\overline{x_N}) \nonumber \\
&=\langle 1^M|\mathcal{B}^{\rm II}(z_1,\{ \alpha \},\{ \gamma \}) \cdots
\mathcal{B}^{\rm II}(z_N,\{ \alpha \},\{ \gamma \})|\overline{x_1} \cdots \overline{x_N} \rangle.
\label{typetwodualwavefunctions}
\end{align}
The wavefunctions of type I
$\Phi^{\rm I}_{M,N}(z_1,\dots,z_N|\gamma_1,\dots,\gamma_M|x_1,\dots,x_N)$
and
type II \\
$\Phi^{\rm II}_{M,N}(z_1,\dots,z_N|\gamma_1,\dots,\gamma_M|x_1,\dots,x_N)$
are the partition functions which are generalizations
of the ones introduced and studied by Ivanov and Brubaker-Bump-Chinta-Gunnells.
The special case $\alpha_j=0$, $\gamma_j=0$ ($j=0,\dots,M$)
of the type I wavefunctions was treated by Ivanov \cite{Iv},
in which the correspondence with the symplectic Schur functions were proven
(the dual wavefunctions was treated recently in \cite{MoRMP}).
The special case
$\alpha_j=0$, $\gamma_j=0$ ($j=1,\dots,M$)
of the type II wavefunctions were introduced by Brubaker-Bump-Chinta-Gunnells
\cite{BBCG}, and they conjectured the correspondence
with the Bump-Friedberg-Hoffstein Whittaker functions \cite{BFH}.

In this paper, we compute the general case where there are no restrictions on
the parameters $\alpha_j$ and $\gamma_j$ ($j=0,\dots,M$).
In the next two sections, we treat the wavefunctions,
the dual wavefunctions and the domain wall boundary partition functions
of type I.
In sections 6 and 7, we present the results for the type II wavefunctions.

\section{Type I wavefunctions and
generalized symplectic Schur functions}
We derive the exact correspondence between the type I wavefunctions
and the generalization of the symplectic Schur functions
in this section.
First, we compute the simplest case $N=1$.
Next, we perform the Izergin-Korepin analysis,
which is a trick to extract sufficiently many conditions to
uniquely determine the wavefunctions (see \cite{Motegi,MoIK} for simpler examples
without boundaries).
Finally, we introduce the generalization of the symplectic Schur functions
and show that the symmetric functions multiplied by some factors
satisfy all the required properties extracted from
the Izergin-Korepin analysis.

\subsection{One particle case}
The simplest case $N=1$ of the type I wavefunctions
can be calculated with the help of
the following identity.
\begin{lemma} \label{identitylemma}
The following identity holds:
\begin{align}
(z-z^{-1})&\sum_{j=1}^{x_1-1}
\prod_{k=1}^{j-1}\left\{
\alpha_k+(1-\alpha_k \gamma_k)z^{-1}\right\}
(1-\gamma_k z)
\prod_{k=j+1}^{x_1-1}
\left\{\alpha_k+(1-\alpha_k \gamma_k)z\right\}
(1-\gamma_k z^{-1})
\nonumber
\\
=&\prod_{k=1}^{x_1-1}\left\{
\alpha_k+(1-\alpha_k \gamma_k)z\right\}(1-\gamma_k z^{-1})
-\prod_{k=1}^{x_1-1}
\left\{\alpha_k+(1-\alpha_k \gamma_k)z^{-1}\right\}(1-\gamma_k z). 
\label{identity}
\end{align}
\end{lemma}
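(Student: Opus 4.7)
The plan is to recognize the identity as a telescoping sum, driven by a single algebraic observation about the per-site factors.

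For each $k$, introduce
\begin{align}
A_k &= \{\alpha_k+(1-\alpha_k\gamma_k)z\}(1-\gamma_k z^{-1}), \\
B_k &= \{\alpha_k+(1-\alpha_k\gamma_k)z^{-1}\}(1-\gamma_k z),
\end{align}
so that the right-hand side of \eqref{identity} is $\prod_{k=1}^{x_1-1} A_k - \prod_{k=1}^{x_1-1} B_k$, and the summand on the left-hand side (with the factor $z-z^{-1}$ absorbed) is $\prod_{k=1}^{j-1} B_k \cdot \prod_{k=j+1}^{x_1-1} A_k$.

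The key step I would carry out first is the direct computation
\begin{equation}
A_k - B_k = z - z^{-1},
\end{equation}
which holds identically in $\alpha_k$ and $\gamma_k$: expanding both products, the $\alpha_k$ terms and the $(1-\alpha_k\gamma_k)\gamma_k$ terms cancel, while the remaining terms combine to $z[(1-\alpha_k\gamma_k)+\alpha_k\gamma_k]-z^{-1}[\alpha_k\gamma_k+(1-\alpha_k\gamma_k)]=z-z^{-1}$. This is the crucial cancellation that makes the identity work, and it is where the particular parametrization of the $L$-operator \eqref{generalizedloperator} pays off.

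Using this, the common factor $(z-z^{-1})$ on the left-hand side of \eqref{identity} may be replaced by $A_j-B_j$ inside the $j$th summand, giving
\begin{equation}
\text{LHS} = \sum_{j=1}^{x_1-1}\Bigl(\prod_{k=1}^{j-1} B_k\Bigr)(A_j-B_j)\Bigl(\prod_{k=j+1}^{x_1-1} A_k\Bigr)
= \sum_{j=1}^{x_1-1}(T_j - T_{j+1}),
\end{equation}
where $T_j := \prod_{k=1}^{j-1} B_k \cdot \prod_{k=j}^{x_1-1} A_k$. The sum telescopes to $T_1 - T_{x_1} = \prod_{k=1}^{x_1-1} A_k - \prod_{k=1}^{x_1-1} B_k$, which is precisely the right-hand side of \eqref{identity}.

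Since the entire argument reduces to the one-line identity $A_k-B_k=z-z^{-1}$ together with a telescoping rearrangement, there is no real obstacle; the only potentially delicate point is keeping the empty-product conventions straight at $j=1$ and $j=x_1-1$, which are handled by the usual convention that $\prod_{k=1}^{0}=\prod_{k=x_1}^{x_1-1}=1$.
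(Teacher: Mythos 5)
Your proof is correct. The paper disposes of this lemma with a one-line appeal to induction on $x_1$, whereas you give an explicit, non-inductive telescoping argument; the two are really the same computation in different clothing, since the heart of the inductive step would be exactly your identity $A_k-B_k=z-z^{-1}$ applied to the last term. Your version has the advantage of isolating that one-line cancellation as the single place where the specific parametrization of the $L$-operator \eqref{generalizedloperator} enters (the $\alpha_k$ and $\alpha_k\gamma_k$ terms cancel and the coefficients of $z$ and $z^{-1}$ each sum to $1$), and of making the global structure of the sum transparent via $T_j-T_{j+1}$; the induction phrasing buys brevity but hides this. The boundary cases $j=1$ and $j=x_1-1$ are indeed handled by the empty-product convention exactly as you say, and the edge case $x_1=1$ (empty sum equals $1-1=0$) is also consistent. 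No gaps.
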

\begin{proof}
This can be proved by induction on $x_1$.
\end{proof}

\begin{proposition} \label{simplestproposition}
The type I wavefunction $\Phi^{\rm I}_{M,1}(z|\gamma_1,\dots,\gamma_M|x_1)$
is explicitly expressed as
\begin{align}
\Phi^{\rm I}_{M,1}&(z|\gamma_1,\dots,\gamma_M|x_1) \nonumber \\
=&\frac{1+tz^2}{z^2-1}
\sum_{\tau=\pm 1} \tau
\prod_{j=0}^{x_1-1}\left\{\alpha_j+(1-\alpha_j \gamma_j)z^\tau\right\}
\prod_{j=x_1+1}^M (1-\gamma_j z^\tau)
\prod_{j=1}^M (1-\gamma_j z^{-\tau}). \label{simplestmatrixelement}
\end{align}.
\end{proposition}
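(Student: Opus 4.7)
My plan is to compute $\Phi^{\rm I}_{M,1}$ directly, by splitting the double-row operator $\mathcal{B}^{\rm I}(z)$ into its two summands, evaluating each resulting matrix element by enumerating single-particle ice configurations, and collapsing the surviving sum using Lemma~\ref{identitylemma}. Concretely one writes
\[
\Phi^{\rm I}_{M,1}= K^{\rm I}_{00}(z)\,\langle x_1|\widetilde{B}(z)A(z)|0^M\rangle + K^{\rm I}_{11}(z)\,\langle x_1|\widetilde{A}(z)B(z)|0^M\rangle,
\]
with $K^{\rm I}_{00}(z)=(1-\alpha_0\gamma_0)tz-\alpha_0$ and $K^{\rm I}_{11}(z)=\alpha_0+(1-\alpha_0\gamma_0)z^{-1}$, the diagonal entries of $K^{\rm I}_a$. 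The elementary ingredients are $A(z)|0^M\rangle=\prod_{j=1}^M(1-\gamma_j z^{-1})\,|0^M\rangle$ and the single-particle expansion
\[
B(z)|0^M\rangle=\sum_{k=1}^M\prod_{j=1}^{k-1}\bigl(\alpha_j+(1-\alpha_j\gamma_j)z^{-1}\bigr)\prod_{j=k+1}^M(1-\gamma_j z^{-1})\,|k\rangle,
\]
together with the analogous expansion of $\widetilde{B}(z)|0^M\rangle$ (obtained with $z^{-1}\mapsto z$ and the type~$\Delta$ weights), each read off by placing the unique auxiliary-line $c_1$ (respectively $\tilde c_1$) vertex at the particle-creation site. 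These furnish the $\widetilde{B}A$ contribution as a single product of three factors.

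The nontrivial ingredient is $\widetilde{A}(z)|k\rangle$. Since the auxiliary line enters and exits as $|1\rangle$ while the unique quantum particle sits at site $k$, inspection of $\widetilde{L}$ admits exactly two patterns: (i) the auxiliary stays at $1$ throughout, giving a diagonal term $|k\rangle$ with weight $(t\gamma_k z+1)\prod_{j\ne k}(1-\gamma_j z)$; or (ii) the auxiliary drops $1\to 0$ at some site $p>k$ via a $\tilde c_1$-vertex and rises $0\to 1$ at site $k$ via a $\tilde c_2$-vertex, producing the factor $(t+1)z$ and transferring the particle to $|p\rangle$. Taking the inner product with $\langle x_1|$ and summing against $B(z)|0^M\rangle$ yields a single sum supported on $k\le x_1$: the term $k=x_1$ gives a ``stay'' contribution while the terms $k<x_1$ give ``move'' contributions with $p=x_1$. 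After factoring out the $k$-independent prefactors, the residual sum over $k<x_1$ matches exactly the left-hand side of Lemma~\ref{identitylemma} with summation index $k$, which telescopes it into
\[
\frac{P(z)Q(z^{-1})-P(z^{-1})Q(z)}{z-z^{-1}},\qquad P(z):=\prod_{j=1}^{x_1-1}\bigl(\alpha_j+(1-\alpha_j\gamma_j)z\bigr),\quad Q(z):=\prod_{j=1}^{x_1-1}(1-\gamma_j z),
\]
producing the skew-symmetric $\tau\leftrightarrow-\tau$ structure anticipated by the claim.

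The last step, which I expect to be the main obstacle, is the algebraic consolidation. I would collect the three surviving contributions---the $\widetilde{B}A$ piece, the $k=x_1$ ``stay'' piece, and the collapsed ``move'' piece---on the basis $\{P(z)Q(z^{-1}),\,P(z^{-1})Q(z)\}$ and show that their coefficients reproduce the prefactor $(1+tz^2)/(z^2-1)$ times the appropriate boundary factors. The needed reduction rests on the two polynomial identities
\[
K^{\rm I}_{00}(z)(z-z^{-1})+K^{\rm I}_{11}(z)(t+1)z \;=\; z^{-1}(1+tz^2)\bigl(\alpha_0+(1-\alpha_0\gamma_0)z\bigr),
\]
\[
(t\gamma_{x_1}z+1)(z-z^{-1})-(t+1)z(1-\gamma_{x_1}z^{-1}) \;=\; -z^{-1}(1+tz^2)(1-\gamma_{x_1}z),
\]
each verified at once by direct expansion. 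The first identity merges $K^{\rm I}_{00}$ and $K^{\rm I}_{11}$ into the $j=0$ factor $\alpha_0+(1-\alpha_0\gamma_0)z^{\pm 1}$, extending $P(z^{\pm 1})$ to $\prod_{j=0}^{x_1-1}$, and produces the prefactor $(1+tz^2)/(z^2-1)$; the second identity supplies the missing boundary factor $(1-\gamma_{x_1}z)$ that completes the $\gamma$-products to the ranges $j=1,\ldots,M$ stated in \eqref{simplestmatrixelement}.
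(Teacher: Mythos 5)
Your proposal is correct and follows essentially the same route as the paper's proof: decompose $\mathcal{B}^{\rm I}(z)$ into the $\widetilde{B}A$ and $\widetilde{A}B$ pieces, insert intermediate one-particle states so that only $j\le x_1$ contribute, evaluate the frozen-configuration matrix elements, and collapse the residual sum with Lemma~\ref{identitylemma}. The only difference is that you make explicit the final consolidation identities that the paper leaves implicit, and both of them check out.
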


\begin{proof}
We can explicitly calculate $\Phi^{\rm I}_{M,1}(z|\gamma_1,\dots,\gamma_M|x_1)$
by decomposing as
\begin{align}
\Phi^{\rm I}_{M,1}&(z|\gamma_1,\dots,\gamma_M|x_1) \nonumber \\
=&\left\{(1-\alpha_0 \gamma_0)tz-\alpha_0\right\}
\langle x_1 |\widetilde{B}(z,\{ \alpha \},\{ \gamma \})| 0^M \rangle
\langle 0^M |A(z,\{ \alpha \},\{ \gamma \})| 0^M \rangle \nonumber \\
&+\left\{(1-\alpha_0 \gamma_0)z^{-1}+\alpha_0\right\}
\langle x_1 |\widetilde{A}(z,\{ \alpha \},\{ \gamma \})| x_1 \rangle
\langle x_1 |B(z,\{ \alpha \},\{ \gamma \})| 0^M \rangle \nonumber \\
&+\left\{(1-\alpha_0 \gamma_0)z^{-1}+\alpha_0\right\}
\sum_{j=1}^{x_1-1}
\langle x_1 |\widetilde{A}(z,\{ \alpha \},\{ \gamma \})| j \rangle
\langle j |B(z,\{ \alpha \},\{ \gamma \})| 0^M \rangle,
\end{align}
and computing the matrix elements of the monodromy matrices
$\langle x_1| \widetilde{B}(z,\{ \alpha \},\{ \gamma \})| 0^M \rangle$,
$\langle 0^M |A(z,\{ \alpha \},\{ \gamma \})| 0^M \rangle$,
$\langle x_1 |\widetilde{A}(z,\{ \alpha \},\{ \gamma \})| x_1 \rangle$,
$\langle x_1 |B(z,\{ \alpha \},\{ \gamma \})| 0^M \rangle$,
$\langle x_1 |\widetilde{A}(z,\{ \alpha \},\{ \gamma \})| j \rangle$,
$\langle j |B(z,\{ \alpha \},\{ \gamma \})| 0^M \rangle$
to get
\begin{align}
\Phi^{\rm I}_{M,1}&(z|\gamma_1,\dots,\gamma_M|x_1)=
\prod_{j=x_1+1}^M (1-\gamma_j z)(1-\gamma_j z^{-1}) \nonumber \\
&\times \biggl[ \left\{(1-\alpha_0 \gamma_0)tz-\alpha_0\right\}
\prod_{k=1}^{x_1-1}\left\{(1-\alpha_k \gamma_k)z+\alpha_k\right\} 
\prod_{k=1}^{x_1}(1-\gamma_k z^{-1}) \nonumber \\
&\qquad +\left\{(1-\alpha_0 \gamma_0)z^{-1}+\alpha_0\right\}(t\gamma_{x_1}z+1)
\prod_{k=1}^{x_1-1}\left\{(1-\alpha_k \gamma_k)z^{-1}+\alpha_k\right\}(1-\gamma_k z)
\nonumber \\
&\qquad +(t+1)z\left\{(1-\alpha_0 \gamma_0)z^{-1}+\alpha_0\right\}\sum_{j=1}^{x_1-1}
\prod_{k=1}^{j-1}\left\{(1-\alpha_k \gamma_k)z^{-1}+\alpha_k\right\}(1-\gamma_k z)
\nonumber \\
&\qquad \times \prod_{k=j+1}^{x_1-1} \left\{(1-\alpha_k \gamma_k)z+\alpha_k\right\}
\prod_{k=j+1}^{x_1} (1-\gamma_k z^{-1})\biggr]. \label{intoneparticle}
\end{align}
Using the identity \eqref{identity}
in Lemma \ref{identitylemma}, one can show that the right hand side of
\eqref{intoneparticle} can be expressed as \eqref{simplestmatrixelement}.
\end{proof}
We can similarly calculate the type I dual wavefunction $\overline{\Phi}^{\rm I}_{M,1}(z|\gamma_1,\dots,\gamma_M|\overline{x_1})$.
We state the result below.
\begin{proposition}
The type I dual wavefunction $\overline{\Phi}^{\rm I}_{M,1}(z|\gamma_1,\dots,\gamma_M|\overline{x_1})$
is explicitly expressed as
\begin{align}
&\overline{\Phi}^{\rm I}_{M,1}(z|\gamma_1,\dots,\gamma_M|\overline{x_1})
=\frac{t^{M}(1+tz^2)}{t^2z^2-1}\nn \\
&\quad \times \sum_{\tau=\pm 1} \tau
\prod_{j=0}^{\overline{x_1}-1}\left\{
-\alpha_j+(1-\alpha_j \gamma_j)(tz)^\tau\right\}
\prod_{j=\overline{x_1}+1}^M (1+\gamma_j (tz)^\tau)
\prod_{j=1}^M \left\{1+\gamma_j (tz)^{-\tau}\right\}. \label{dualsimplestmatrixelement}
\end{align}
\end{proposition}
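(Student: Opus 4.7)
The plan is to mirror the proof of Proposition \ref{simplestproposition} step by step, now working in the background where every quantum site is occupied and a single hole sits at position $\overline{x_1}$. First I would use the definition \eqref{generalizeddoublerow} of the type I double-row $B$-operator to split
\[
\overline{\Phi}^{\rm I}_{M,1}(z|\gamma_1,\ldots,\gamma_M|\overline{x_1})
=\{(1-\alpha_0\gamma_0)tz-\alpha_0\}\,\langle 1^M|\widetilde{B}(z)A(z)|\overline{x_1}\rangle+\{(1-\alpha_0\gamma_0)z^{-1}+\alpha_0\}\,\langle 1^M|\widetilde{A}(z)B(z)|\overline{x_1}\rangle,
\]
and insert a complete set of intermediate states between the two monodromy matrices in each term. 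Particle-number conservation at each vertex immediately restricts the nonzero intermediate states: in the first term they are $|\overline{j}\rangle$ for $1\le j\le\overline{x_1}$, because $A$ preserves the quantum particle count while $\widetilde{B}$ raises it by one; in the second term only $|1^M\rangle$ survives.

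Next I would compute the matrix elements $\langle\overline{j}|A(z)|\overline{x_1}\rangle$, $\langle 1^M|\widetilde{B}(z)|\overline{j}\rangle$, $\langle 1^M|\widetilde{A}(z)|1^M\rangle$ and $\langle 1^M|B(z)|\overline{x_1}\rangle$ by tracing the unique admissible auxiliary trajectory column by column through $T(z)$ and $\widetilde{T}(z)$. Since the quantum background is now filled with particles rather than empty, the vertex weights along these trajectories are the particle-diagonal entries $L^{11}_{11}(z^{-1})=-t\alpha_k+(1-\alpha_k\gamma_k)z^{-1}$ and $\widetilde{L}^{11}_{11}(z)=t\gamma_k z+1$ on segments outside the hole's path, the entries $L^{01}_{01}(z^{-1})=t+\gamma_k z^{-1}$ and $\widetilde{L}^{01}_{01}(z)=t(1-\alpha_k\gamma_k)z-\alpha_k$ on segments swept by the auxiliary in its excited state, together with the two off-diagonal weights $(t+1)z^{-1}$ and $1$ at the columns where the auxiliary flips. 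The resulting expression has the same overall shape as \eqref{intoneparticle}, with an additional global factor $\prod_{k=1}^{M}(t\gamma_k z+1)$ accumulating from $\widetilde{A}$.

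Then, after factoring out common background products, the sum over $j$ from the first piece (for $1\le j<\overline{x_1}$) becomes precisely the left-hand side of Lemma \ref{identitylemma} under the reparametrization $z\mapsto tz$, $\alpha_k\mapsto -\alpha_k$, $\gamma_k\mapsto -\gamma_k$: the factors $-\alpha_k+(1-\alpha_k\gamma_k)(tz)^{\pm 1}$ and $1+\gamma_k(tz)^{\mp 1}$ appearing in the target \eqref{dualsimplestmatrixelement} are exactly the images of the factors $\alpha_k+(1-\alpha_k\gamma_k)z^{\pm 1}$ and $1-\gamma_k z^{\mp 1}$ of Lemma \ref{identitylemma} under this map. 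Applying the lemma telescopes the sum into a difference of two products; combining it with the $j=\overline{x_1}$ boundary term and with the $\widetilde{A}B$-contribution, and absorbing the accumulated factors of $t$ (one per column) into an overall $t^M$, produces the antisymmetric $\tau=\pm 1$ sum in \eqref{dualsimplestmatrixelement}. The prefactor $\frac{t^M(1+tz^2)}{t^2z^2-1}$ is what the primal prefactor $\frac{1+tz^2}{z^2-1}$ becomes under $z\mapsto tz$, multiplied by the overall $t^M$.

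The main obstacle will be the careful bookkeeping of signs and powers of $t$. In particular, one must check that the two $K$-matrix boundary factors $\{(1-\alpha_0\gamma_0)tz-\alpha_0\}$ and $\{(1-\alpha_0\gamma_0)z^{-1}+\alpha_0\}$ assemble, after the telescoping and the uniform $t$-rescaling, into the $j=0$ factors $-\alpha_0+(1-\alpha_0\gamma_0)(tz)^{\pm 1}$ of the two $\tau=\pm 1$ summands in \eqref{dualsimplestmatrixelement}, and that no spurious factor of $t$ survives. Everything else is a mechanical parallel of the proof of Proposition \ref{simplestproposition}.
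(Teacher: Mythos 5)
Your strategy is exactly the one the paper intends by ``we can similarly calculate'': the same decomposition of $\overline{\Phi}^{\rm I}_{M,1}$ through the two terms of the double-row $B$-operator \eqref{generalizeddoublerow}, the insertion of intermediate states (and you correctly observe that in the dual setting the sum over intermediate hole positions $j\le\overline{x_1}$ migrates to the $\widetilde{B}A$ term, while the $\widetilde{A}B$ term contributes the single summand through $|1^M\rangle$, contributing the overall factor $\prod_k(t\gamma_kz+1)$), followed by the reuse of Lemma \ref{identitylemma} under $z\mapsto tz$, $\alpha_k\mapsto-\alpha_k$, $\gamma_k\mapsto-\gamma_k$, which is indeed the crux of why the answer takes the form \eqref{dualsimplestmatrixelement}. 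Two details in your sketch need repair before the bookkeeping will close. First, the assignment of the diagonal vertex weights in the $A$-part is interchanged: on the segment actually swept by the occupied auxiliary line, i.e.\ between the new hole position $j$ and $\overline{x_1}$, both auxiliary and quantum lines carry a particle, so the weight there is $L^{11}_{11}(z^{-1})=-t\alpha_k+(1-\alpha_k\gamma_k)z^{-1}$, while outside that segment (auxiliary empty, quantum occupied) it is $L^{01}_{01}(z^{-1})=t+\gamma_kz^{-1}$ --- the opposite of what you wrote; carried out literally this would produce the wrong analogue of \eqref{intoneparticle}. Second, the numerator of the prefactor is not obtained from the primal one by $z\mapsto tz$ (that substitution sends $1+tz^2$ to $1+t^3z^2$); the factor $1+tz^2$ in \eqref{dualsimplestmatrixelement} must emerge from explicitly combining the two $K$-matrix boundary factors with the telescoped sum, exactly as in the passage from \eqref{intoneparticle} to \eqref{simplestmatrixelement}, so it cannot be predicted by your substitution heuristic and has to be verified directly (the case $M=1$, $\overline{x_1}=1$ is a quick check that the decomposition and the stated formula agree). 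Neither point alters the method, which is sound and parallel to the proof of Proposition \ref{simplestproposition}.
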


\subsection{Izergin-Korepin analysis}
In this subsection, we use the Izergin-Korepin technique \cite{Ko,Iz,Motegi}
and extract the properties of the type I wavefunctions
which uniquely define them.

\begin{proposition} 
\label{ordinarypropertiesfordomainwallboundarypartitionfunction}
The type I wavefunctions
$\Phi^{\rm I}_{M,N}(z_1,\dots,z_N|\gamma_1,\dots,\gamma_M|x_1,\dots,x_N)$
satisfies the following properties. \\
\\
 (1) When $x_N=M$, the type I wavefunctions
$\Phi^{\rm I}_{M,N}(z_1,\dots,z_N|\gamma_1,\dots,\gamma_M|x_1,\dots,x_N)$
is a polynomial of degree $2N-1$ in $\gamma_M$.
\\
 (2) The following form
\begin{align}
\frac{\Phi^{\rm I}_{M,N}(z_1,\dots,z_N|\gamma_1,\dots,\gamma_M|x_1,\dots,x_N)}
{
\prod_{j=1}^N z_j^{j-1-N} (1+tz_j^2) \prod_{1 \le j < k \le N}
(1+tz_j z_k) (1+tz_j z_k^{-1})
},
\label{symmetrywavefunction}
\end{align}
is symmetric with respect to $z_1,\dots,z_N$,
and is invariant under the exchange $z_i \longleftrightarrow z_i^{-1}$
for $i=1,\dots,N$.
\\
(3) The following recursive relations between the
type I wavefunctions hold if $x_N=M$
(Figure \ref{izerginkorepinone}):
\begin{align}
\Phi^{\rm I}_{M,N}&(z_1,\dots,z_N|\gamma_1,\dots,\gamma_M|x_1,\dots,x_N)
|_{\gamma_M=z_N}
\nonumber \\
&=\prod_{j=1}^{N}(t z_N z_j+1)
\prod_{j=1}^{N-1} (t+z_N z_j^{-1})
\prod_{j=0}^{M-1} \left\{(1-\alpha_j \gamma_j)z_N^{-1}+\alpha_j\right\}
\prod_{j=1}^{M-1} (1-\gamma_j z_N)
\nonumber \\
&\quad \times 
\Phi^{\rm I}_{M-1,N-1}(z_1,\dots,z_{N-1}|\gamma_1,\dots,\gamma_{M-1}|x_1,\dots,x_{N-1}). 
\label{ordinaryrecursionwavefunction}
\end{align}

If $x_N \neq M$, the following factorizations hold for the type I
wavefunctions
(Figure \ref{izerginkorepintwo}):
\begin{align}
\Phi_{M,N}^{\rm I}&(z_1,\dots,z_N|\gamma_1,\dots,\gamma_M|x_1,\dots,x_N)
 \nonumber \\
&=\prod_{j=1}^N (1-\gamma_M z_j)(1-\gamma_M z_j^{-1})
\Phi^{\rm I}_{M-1,N}(z_1,\dots,z_N|\gamma_1,\dots,\gamma_{M-1}|x_1,\dots,x_N).
\label{ordinaryrecursionwavefunction2}
\end{align}
\\
(4) The following holds for the case $N=1$, $x_1=M$:
\begin{align}
\Phi^{\rm I}_{M,1}&(z|\gamma_1,\dots,\gamma_M|M) \nonumber \\
&=\frac{1+tz^2}{z^2-1}
\sum_{\tau=\pm 1} \tau
\prod_{j=0}^{M-1}\left\{\alpha_j+(1-\alpha_j \gamma_j)z^\tau\right\}
\prod_{j=1}^M (1-\gamma_j z^{-\tau})
.
\label{ordinaryinitialrecursion}
\end{align}
\end{proposition}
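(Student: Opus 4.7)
The plan is to verify the four properties in sequence using the Izergin-Korepin toolkit: graphical freezing arguments at specializations of $\gamma_M$, commutation relations induced by the reflection equation and the Yang-Baxter relations, and direct inspection of the vertex weights in column $M$ of the graphical representation in Figure~\ref{picturewavefunction}. Part (4) is an immediate specialization of Proposition~\ref{simplestproposition} to $x_1 = M$, since the product $\prod_{j=x_1+1}^M (1-\gamma_j z^\tau)$ becomes empty; the substantive work is in (1)--(3).

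For (1), I would observe that $\gamma_M$ appears only in vertices of column $M$, and only linearly in each vertex weight of the type $\Gamma$ and type $\Delta$ L-operators. Since column $M$ contains $2N$ such vertices (one $\Gamma$ and one $\Delta$ per row), the naive degree in $\gamma_M$ is at most $2N$. I would isolate the top-degree coefficient as the unique edge configuration on column $M$ obtained by selecting the $\gamma_M$-containing monomial in every vertex weight, and check that this configuration is incompatible with the boundary condition $x_N = M$ together with the ice rule. For (2), I would rely on a commutation relation for $\mathcal{B}^{\rm I}(z)$ that follows from the reflection equation \eqref{RE} for $K^{\rm I}$ together with the Yang-Baxter relations \eqref{YBR} and the RLL relation \eqref{RLL}: schematically $\mathcal{B}^{\rm I}(z_i)\mathcal{B}^{\rm I}(z_j) = \phi(z_i,z_j)\mathcal{B}^{\rm I}(z_j)\mathcal{B}^{\rm I}(z_i)$ with $\phi$ an explicit rational function whose nonsymmetric part is exactly the denominator in \eqref{symmetrywavefunction}, giving symmetry in $z_1,\dots,z_N$. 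The invariance under $z_i \to z_i^{-1}$ is a structural feature of the double-row construction: the type I $K$-matrix combined with the asymmetric choices of spectral parameter ($z^{-1}$ in $T_a$ and $z$ in $\widetilde T_a$) makes $\mathcal{B}^{\rm I}(z)$ symmetric under $z \to z^{-1}$ up to a simple rational factor absorbed by the prefactor $z_j^{j-1-N}(1+tz_j^2)$.

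Part (3) is the heart of the argument. For the first identity, specialize $\gamma_M = z_N$. The $(1,1)$-entry of the type $\Gamma$ L-operator at column $M$ in the $z_N$-row becomes $1-\gamma_M z_N^{-1}=0$, and a parallel vanishing in the type $\Delta$ L-operator at the top of column $M$ in the same row-pair forces unique particle configurations on the vertical edges of column $M$ for the two rows associated with $z_N$. Combined with $x_N = M$ and the $K^{\rm I}$-matrix entry at the turnaround of that row-pair, this freezes column $M$ completely for the $z_N$-rows; the product of the frozen weights, together with the factors produced by commuting $\mathcal{B}^{\rm I}(z_N)$ past the other $\mathcal{B}^{\rm I}(z_j)$ using the relation from (2), is what yields the prefactor in \eqref{ordinaryrecursionwavefunction}, and the remainder of the diagram is by construction $\Phi^{\rm I}_{M-1,N-1}$. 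For $x_N \neq M$, the absence of a particle at the top of column $M$ forces a simpler global pattern on that column: each row contributes a factor $(1-\gamma_M z_j)(1-\gamma_M z_j^{-1})$ from the paired $\Gamma$ and $\Delta$ L-operators, yielding \eqref{ordinaryrecursionwavefunction2}.

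The principal obstacle is the bookkeeping in (3): identifying the exact frozen configuration at $\gamma_M = z_N$ and confirming that its weight, multiplied by the commutation factors obtained when moving $\mathcal{B}^{\rm I}(z_N)$ to the appropriate position, matches the complicated prefactor in \eqref{ordinaryrecursionwavefunction} including the products over $\alpha_j$. The mild asymmetry between the $\Gamma$ and $\Delta$ weight tables, and between the spectral parameters used in $T_a$ and $\widetilde T_a$, makes this bookkeeping delicate but structurally routine.
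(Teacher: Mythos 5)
Your proposal follows the paper's route essentially step for step: part (4) by specializing the one-particle formula of Proposition \ref{simplestproposition}, part (1) by a graphical degree count in the column carrying $\gamma_M$, part (2) via a commutation relation for the double-row $B$-operators derived from the reflection equation and the Yang--Baxter relations (the paper's ``caduceus relation'', giving $\mathcal{B}(z_i)\mathcal{B}(z_j)=\frac{z_j+tz_i}{z_i+tz_j}\mathcal{B}(z_j)\mathcal{B}(z_i)$) together with a ``fish relation'' for the $z_i\leftrightarrow z_i^{-1}$ invariance, and part (3) by freezing. One concrete correction to your bookkeeping plan for \eqref{ordinaryrecursionwavefunction}: the factors $\prod_{j=1}^{N}(1+tz_Nz_j)\prod_{j=1}^{N-1}(t+z_Nz_j^{-1})$ do \emph{not} arise from commuting $\mathcal{B}^{\rm I}(z_N)$ past the other $B$-operators --- in the definition \eqref{wavefunctions} the operator $\mathcal{B}^{\rm I}(z_N)$ already acts directly on $|0^M\rangle$, so no commutation is needed, and in any case the exchange factor from part (2) is $\frac{z_j+tz_i}{z_i+tz_j}$, which is not of that form. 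The correct mechanism is that once the $z_N$ double row is frozen by the specialization $\gamma_M=z_N$ together with $x_N=M$, the ice rule forces the vertical edges of the entire column carrying $\gamma_M$, so the $L$-operators of that column in the remaining double rows $z_1,\dots,z_{N-1}$ are frozen as well; their weights ($t\gamma_Mz_j+1$ from the type $\Delta$ operator and $t+\gamma_Mz_j^{-1}$ from the type $\Gamma$ operator, evaluated at $\gamma_M=z_N$) are what produce those cross factors, while the bottom double row supplies $\prod_{j=0}^{M-1}\{(1-\alpha_j\gamma_j)z_N^{-1}+\alpha_j\}\prod_{j=1}^{M-1}(1-\gamma_jz_N)$. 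With that repair the argument closes exactly as in the paper.
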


\begin{figure}[ht]
\centering
\includegraphics[width=0.6\textwidth]{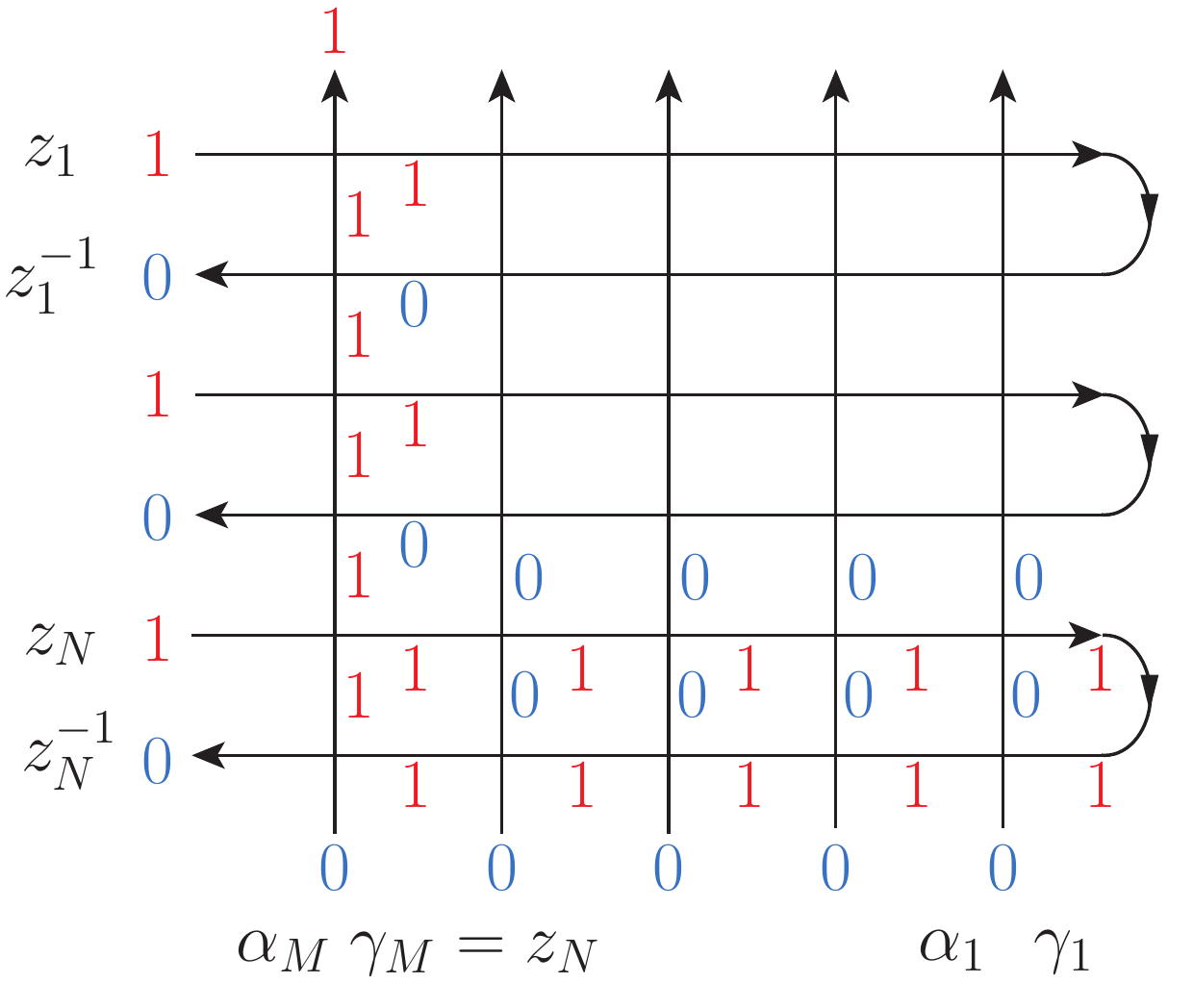}
\caption{A graphical description of the relation
\eqref{ordinaryrecursionwavefunction}.
one can see that if one sets $\gamma_M$ to $\gamma_M=z_N$,
all the $L$-operators at the
leftmost column and the bottom double-row
get frozen.
}
\label{izerginkorepinone}

\end{figure}

\begin{proof}
Property (1) can be shown in a standard way
using a graphical representation of the wavefunctions.

Property (2) can be proved in the same way with Ivanov \cite{Iv,Ivthesis},
since the generalized $L$-operators \eqref{generalizedloperator},
\eqref{anothergeneralizedloperator}
and the generalized $K$-matrix \eqref{generalizedkmatrix} satisfy
certain lemmas called the ``caduceus relations", ``fish relations",
and the arguments built out of the lemmas work as well (see Appendix \ref{app1}
for details).
There is another argument to prove the invariance under the exchange
$z_i \longleftrightarrow z_i^{-1}$ for $i=1,\dots,N$
by using the argument applied to the dual wavefunctions in \cite{MoRMP}.

Property (4) is a special case $x_1=M$ of \eqref{simplestmatrixelement} in
Proposition \ref{simplestproposition} which is already proven
in the previous subsection.

The way to prove the two relations in
Property (3) is also standard in the Izergin-Korepin analysis.
We use the power of the graphical
representation of the wavefunctions
(Figures \ref{izerginkorepinone} and \ref{izerginkorepintwo}).
If $x_N=M$, one sees that after the specialization $\gamma_M=z_N$,
the leftmost column and the bottom row gets frozen.
The product of the matrix elements of the $L$-operators
coming from the frozen part
\begin{align}
\prod_{j=1}^{N}(t z_N z_j+1)
\prod_{j=1}^{N-1} (t+z_N z_j^{-1})
\prod_{j=0}^{M-1} \left\{(1-\alpha_j \gamma_j)z_N^{-1}+\alpha_j\right\}
\prod_{j=1}^{M-1} (1-\gamma_j z_N),
\end{align}
which, multiplied by the remaining part
$\Phi^{\rm I}_{M-1,N-1}
(z_1,\dots,z_{N-1}|\gamma_1,\dots,\gamma_{M-1}|x_1,\dots,x_{N-1})$,
gives the specialization of
$\Phi^{\rm I}_{M,N}(z_1,\dots,z_N|\gamma_1,\dots,\gamma_M|x_1,\dots,x_N)$
at $\gamma_M=z_N$, i.e., \eqref{ordinaryrecursionwavefunction} follows.

If $x_N \neq M$,
one sees that the leftmost column is already frozen without
imposing any specialization on $\gamma_M$, which gives the factor
$\prod_{j=1}^N (1-\gamma_M z_j)(1-\gamma_M z_j^{-1})$.
The remaining part is
$\Phi^{\rm I}_{M-1,N}(z_1,\dots,z_N|\gamma_1,\dots,\gamma_{M-1}|x_1,\dots,x_N)$
and one concludes that the wavefunctions
$\Phi^{\rm I}_{M,N}(z_1,\dots,z_N|\gamma_1,\dots,\gamma_{M}|x_1,\dots,x_N)$
is the product of
$\prod_{j=1}^N (1-\gamma_M z_j)(1-\gamma_M z_j^{-1})$
and
$\Phi^{\rm I}_{M-1,N}(z_1,\dots,z_N|\gamma_1,\dots,\gamma_{M-1}|x_1,\dots,x_N)$,
hence one gets \eqref{ordinaryrecursionwavefunction2}.
\end{proof}

\begin{figure}[htt]
\centering
\includegraphics[width=0.6\textwidth]{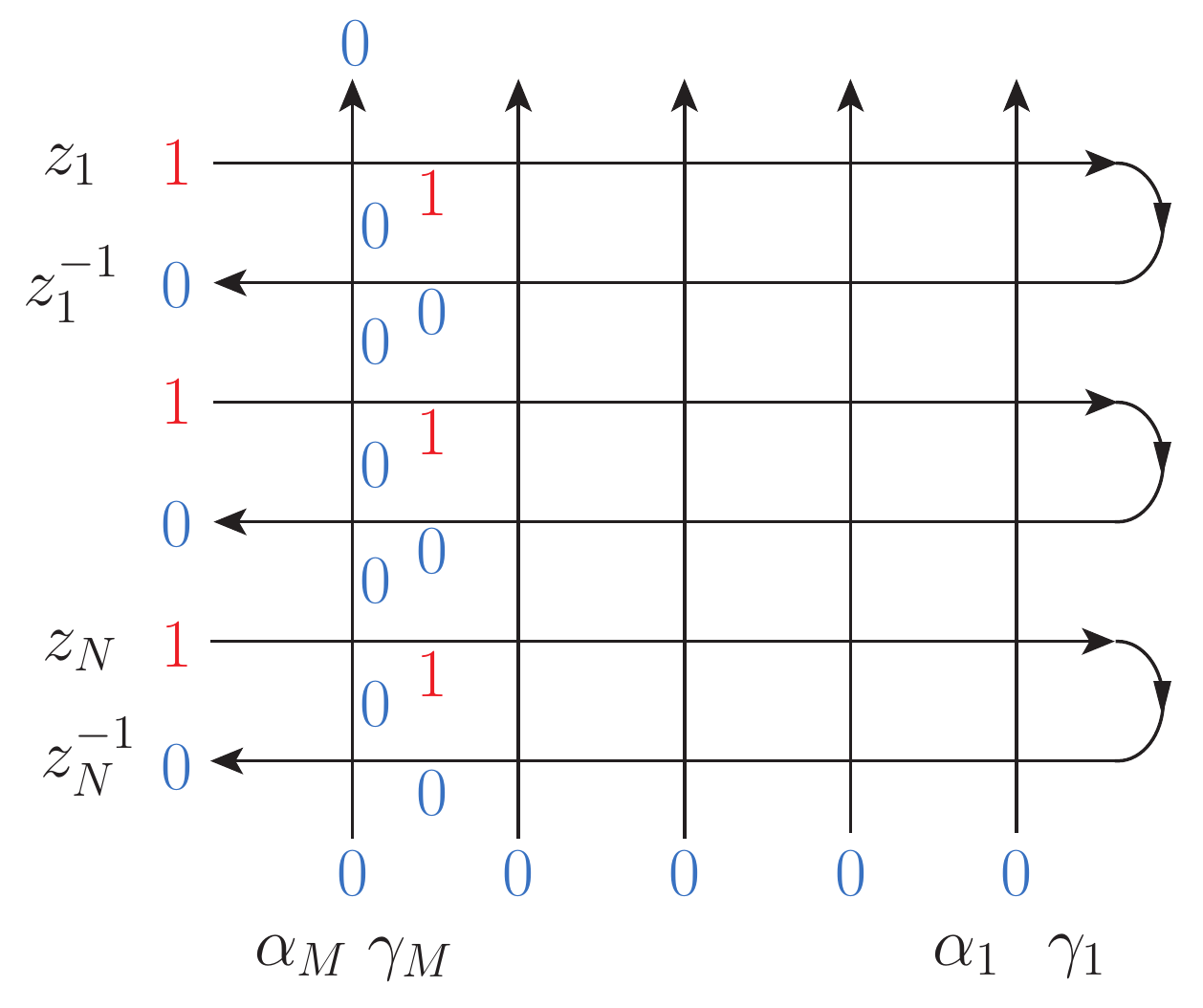}
\caption{A graphical description of the factorization relation
\eqref{ordinaryrecursionwavefunction2}.
One can see that all the $L$-operators at the
leftmost column are frozen.
}
\label{izerginkorepintwo}

\end{figure}

We also list the property
for the type I dual wavefunctions.

\begin{proposition} \label{dualproposition}
The type I dual wavefunctions
$\overline{\Phi}^{\rm I}_{M,N}(z_1,\dots,z_N|\gamma_1,\dots,\gamma_M|\overline{x_1},\dots,\overline{x_N})$
satisfies the following properties. \\
\\
 (1) When $\overline{x_N}=M$, the type I dual wavefunctions
$\overline{\Phi}^{\rm I}_{M,N}
(z_1,\dots,z_N|\gamma_1,\dots,\gamma_M|\overline{x_1},\dots,\overline{x_N})$
is a polynomial of degree $2N-1$ in $\gamma_M$.
\\
 (2) The following form
\begin{align}
\frac{\overline{\Phi}^{\rm I}_{M,N}(z_1,\dots,z_N|\gamma_1,\dots,\gamma_M|x_1,\dots,x_N)}
{
\prod_{j=1}^N z_j^{j-1-N} (1+tz_j^2) \prod_{1 \le j < k \le N}
(1+tz_j z_k) (1+tz_j z_k^{-1})
},
\label{dualsymmetrywavefunction}
\end{align}
is symmetric with respect to $z_1,\dots,z_N$,
and is invariant under the exchange $z_i \longleftrightarrow z_i^{-1}$
for $i=1,\dots,N$.
\\
(3) The following recursive relations between the
type I dual wavefunctions hold if $\overline{x_N}=M$:
\begin{align}
\overline{\Phi}^{\rm I}_{M,N}&
(z_N,\dots,z_1|\gamma_1,\dots,\gamma_M|\overline{x_1},\dots,\overline{x_N})
|_{\gamma_M=-t^{-1}z_N^{-1}}
\nonumber \\
&=\prod_{j=1}^{N} \Bigg(1+ \frac{1}{t z_N z_j} \Bigg)
\prod_{j=1}^{N-1} \Bigg(1+\frac{z_j}{tz_N} \Bigg)
\prod_{j=0}^{M-1} \left\{t(1-\alpha_j \gamma_j)z_N-\alpha_j\right\}
\prod_{j=1}^{M-1} (t+\gamma_j z_N^{-1})
\nonumber \\
&\quad
\times \overline{\Phi}^{\rm I}_{M-1,N-1}
(z_{N-1},\dots,z_{1}|\gamma_1,\dots,\gamma_{M-1}|\overline{x_1},\dots,\overline{x_{N-1}}). \label{remark}
\end{align}

If $\overline{x_N} \neq M$, the following factorizations hold for the
type I dual wavefunctions:
\begin{align}
\overline{\Phi}_{M,N}^{\rm I}
&(z_1,\dots,z_N|\gamma_1,\dots,\gamma_M|\overline{x_1},\dots,\overline{x_N})
 \nonumber \\
&=\prod_{j=1}^N (1+t\gamma_M z_j)(t+\gamma_M z_j^{-1})
\overline{\Phi}^{\rm I}_{M-1,N}(z_1,\dots,z_N|\gamma_1,\dots,\gamma_{M-1}|\overline{x_1},\dots,\overline{x_N}).
\end{align}
\\
(4) The following holds for the case $N=1$, $\overline{x_1}=M$:
\begin{align}
\overline{\Phi}^{\rm I}_{M,1}&(z|\gamma_1,\dots,\gamma_M|M)
\nonumber \\
&=\frac{t^{M}(1+tz^2)}{t^2z^2-1}
\sum_{\tau=\pm 1} \tau
\prod_{j=0}^{M-1}\left\{-\alpha_j+(1-\alpha_j \gamma_j)(tz)^\tau\right\}
\prod_{j=1}^M \left\{1+\gamma_j (tz)^{-\tau}\right\}.
\end{align}
\end{proposition}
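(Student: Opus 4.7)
The plan is to mirror the proof of Proposition \ref{ordinarypropertiesfordomainwallboundarypartitionfunction}, transported to the dual setting where $\langle 1^M|$ replaces $\langle x_1\cdots x_N|$ at the top and $|\overline{x_1}\cdots\overline{x_N}\rangle$ replaces $|0^M\rangle$ at the bottom. Property (4) is already Proposition of the previous subsection, so only (1), (2), (3) require work.

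For Property (1), I would draw the column of the lattice associated with $\gamma_M$ and count the maximum $\gamma_M$-degree of each vertex. The $\widetilde{L}$-vertices in the upper $N$ rows and the $L$-vertices in the lower $N$ rows each contribute at most degree one in $\gamma_M$, giving the a priori bound $2N$. When $\overline{x_N}=M$ the top-right boundary segment carries a fixed spin, which pins the corresponding vertex to an entry whose $\gamma_M$-dependence is constant, reducing the degree to $2N-1$. This is the standard graphical argument used for (1) in Proposition \ref{ordinarypropertiesfordomainwallboundarypartitionfunction}.

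For Property (2), the same caduceus and fish relations derived in Appendix \ref{app1} for the type I $L$-operators \eqref{generalizedloperator}, \eqref{anothergeneralizedloperator} and $K$-matrix \eqref{generalizedkmatrix} remain valid; the derivation does not depend on which boundary vectors are placed at the open ends. I would apply these relations exactly as in Ivanov \cite{Iv,Ivthesis} to produce the required symmetry under permutations of $z_1,\dots,z_N$, and then invoke either the $K$-matrix covariance argument or the alternative dual-wavefunction argument of \cite{MoRMP} to obtain invariance under each $z_i\leftrightarrow z_i^{-1}$. The prefactor in \eqref{dualsymmetrywavefunction} is the same as in \eqref{symmetrywavefunction} because it is dictated by the intertwining weights carried in the caduceus/fish identities, which are unchanged.

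For Property (3), the argument is again a freezing argument in the graphical picture. If $\overline{x_N}\ne M$, the rightmost column already has matching boundary spins (coming from $\langle 1^M|$ on top and $|\overline{x_1}\cdots\overline{x_N}\rangle$ which is spin-up at site $M$), so the $2N$ vertices in this column are forced into frozen configurations whose combined Boltzmann weight is exactly $\prod_{j=1}^N (1+t\gamma_M z_j)(t+\gamma_M z_j^{-1})$, and the remaining lattice is $\overline{\Phi}^{\rm I}_{M-1,N}$. If $\overline{x_N}=M$, I set $\gamma_M=-t^{-1}z_N^{-1}$; this is the zero of the $(4,4)$-entry $t\gamma_M z+1$ of $\widetilde{L}_{aM}(z_N,t,\alpha_M,\gamma_M)$, which kills one branch of the vertex choices in the top row and, together with the $K$-matrix at the left boundary, forces both the rightmost column and the bottom double-row into unique configurations. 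Collecting the resulting Boltzmann weights reproduces the prefactor in \eqref{remark}, with the residual lattice being $\overline{\Phi}^{\rm I}_{M-1,N-1}$ of the reduced parameters.

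The main obstacle I anticipate is the bookkeeping in Property (3): identifying the correct freezing value $\gamma_M = -t^{-1}z_N^{-1}$ (as opposed to $\gamma_M=z_N$ in the non-dual case) from the zeros of the $\widetilde{L}$ entries, and then carefully reading off the prefactor from the frozen vertices so that the index ranges ($j=0,\dots,M-1$ for the $\alpha$-factors versus $j=1,\dots,M-1$ for the $\gamma$-factors, and the $N$ versus $N-1$ ranges in the $(1+1/(tz_Nz_j))$ and $(1+z_j/(tz_N))$ factors) match \eqref{remark} exactly. Everything else is a straightforward transcription from the proof of Proposition \ref{ordinarypropertiesfordomainwallboundarypartitionfunction}.
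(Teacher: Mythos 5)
Your proposal is correct and follows essentially the same route as the paper, which itself disposes of this proposition by saying it "can be proved in a similar way" as Proposition \ref{ordinarypropertiesfordomainwallboundarypartitionfunction}: graphical degree counting for (1), the caduceus/fish relations of Appendix \ref{app1} for (2), the freezing argument at $\gamma_M=-t^{-1}z_N^{-1}$ (the zero of the relevant $\widetilde{L}$ entry) for (3), and the already-computed one-particle dual wavefunction for (4). The only point worth making explicit, which the paper flags in the remark after the proposition, is that the spectral parameters in \eqref{remark} appear in the reversed order $z_N,\dots,z_1$, so the double row that freezes is the one adjacent to $\langle 1^M|$; your "top row" phrasing is consistent with this.
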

We can prove Proposition \ref{dualproposition}
in a similar way with Proposition
\ref{ordinarypropertiesfordomainwallboundarypartitionfunction}.
Note that the ordering of the spectral parameters
in the dual wavefunctions in
\eqref{remark} are $z_N,\dots,z_1$
(the ordering is
$z_1,\dots,z_N$ in the wavefunctions in \eqref{ordinaryrecursionwavefunction}).

\subsection{Generalized symplectic Schur functions}
In this subsection,
we first introduce the following symmetric functions
which generalize the symplectic Schur functions.

\begin{definition}
We define the generalized symplectic Schur functions to be the
following determinant:
\begin{align}
sp_\lambda(\{ z \}_N|\{ \overline{\alpha} \}|\{ \overline{\gamma} \})
=\frac{G_{\lambda+\delta}(\{ z \}_N|\{ \overline{\alpha} \}|\{ 
\overline{\gamma} \})}
{\mathrm{det}_N(z_k^{N-j+1}-z_k^{-N+j-1})}. \label{symplecticschur}
\end{align}
Here, $\{ z \}_N=\{z_1,\dots,z_N \}$ is a set of symmetric variables,
$\{ \overline{\alpha} \}$ and $\{ \overline{\gamma} \}$ are
sets of variables
$\{ \overline{\alpha} \}=\{\alpha_0,\alpha_1,\dots,\alpha_M \}$
and
$\{ \overline{\gamma} \}=\{\gamma_0,\gamma_1,\dots,\gamma_M \}$,
$\lambda$ denotes a Young diagram
$\lambda=(\lambda_1,\lambda_2,\dots,\lambda_N)$
with weakly decreasing non-negative integers
$\lambda_1 \ge \lambda_2 \ge \cdots \ge \lambda_N \ge 0$,
and $\delta=(N-1,N-2,\dots,0)$.
$G_{\mu}(\{ z \}_N|\{ \overline{\alpha} \}|\{ \overline{\gamma} \})$
is an $N \times N$ determinant
\begin{align}
G_{\mu}(\{ z \}_N|\{ \overline{\alpha} \}|\{ \overline{\gamma} \})
=\mathrm{det}_N
(
g_{\mu_j}(z_k|\{ \overline{\alpha} \}|\{ \overline{\gamma} \})
-
g_{\mu_j}(z_k^{-1}|\{ \overline{\alpha} \}|\{ \overline{\gamma} \})
),
\end{align}
where
\begin{align}
g_\mu(z|\{ \overline{\alpha} \}|\{ \overline{\gamma} \})
=\prod_{j=0}^\mu\left\{\alpha_j+(1-\alpha_j \gamma_j)z\right\}
\prod_{j=\mu+2}^M(1-\gamma_j z)
\prod_{j=1}^M(1-\gamma_jz^{-1}).
\label{det}
\end{align}
\end{definition}
One can see from the definition \eqref{symplecticschur} that
the generalized symplectic Schur functions are
symmetric functions with respect to the variables $z_1,\dots,z_N$,
and are invariant under the exchange $z_i \longleftrightarrow z_i^{-1}$
for $i=1,\dots,N$.
If one sets $\alpha_j=\gamma_j=0$ ($j=0,\dots,M$),
the generalized symplectic Schur functions
reduce to the ordinary symplectic Schur functions.

Now we state the correspondence between the type I wavefunctions
and the generalized symplectic Schur functions.

\begin{theorem} \label{maintheoremstatement}
The type I wavefunctions
$\Phi^{\rm I}_{M,N}(z_1,\dots,z_N|\gamma_1,\dots,\gamma_M|x_1,\dots,x_N)$
are explicitly expressed
using the generalized symplectic Schur functions as
\begin{align}
\Phi^{\rm I}_{M,N}&(z_1,\dots,z_N|\gamma_1,\dots,\gamma_M|x_1,\dots,x_N)
\nonumber \\
&=\prod_{j=1}^N z_j^{j-1-N}(1+tz_j^2)
\prod_{1 \le j<k \le N}(1+tz_j z_k)(1+tz_j z_k^{-1})
sp_\lambda(\{ z \}_N|\{ \overline{\alpha} \}|\{ \overline{\gamma} \}),
\label{maintheorem}
\end{align}
under the relation $\lambda_j=x_{N-j+1}-N+j-1$ ($j=1,\dots,N$).

The type I dual wavefunctions
$\overline{\Phi}^{\rm I}_{M,N}(z_1,\dots,z_N|\gamma_1,\dots,\gamma_M|\overline{x_1},\dots,\overline{x_N})$
are explicitly expressed
using the generalized symplectic Schur functions as
\begin{align}
\overline{\Phi}^{\rm I}_{M,N}&
(z_1,\dots,z_N|\gamma_1,\dots,\gamma_M|\overline{x_1},\dots,\overline{x_N})
\nonumber \\
=&t^{N(M-N)}
\prod_{j=1}^N z_j^{j-1-N}(1+tz_j^2)
\prod_{1 \le j<k \le N}(1+tz_j z_k)(1+tz_j z_k^{-1})
sp_{\overline{\lambda}} ( \{ tz \}_N |\{-\overline{\alpha} \}|\{-\overline{\gamma} \}),
\label{dualmaintheorem}
\end{align}
under the relation
$\overline{\lambda_j}=\overline{x_{N-j+1}}-N+j-1$ ($j=1,\dots,N$)
and the symmetric variables are
$\displaystyle \{ tz \}_N=
\{ tz_1,\dots,tz_N \}$.
Moreover, the signs of the parameters of the generalized symplectic
Schur functions in the right hand side of
\eqref{dualmaintheorem} are now
inverted simultaneously: $\{-\overline{\alpha} \}
=\{-\alpha_0,-\alpha_1,\dots,-\alpha_M \}$
and $\{-\overline{\gamma} \}
=\{-\gamma_0,-\gamma_1,\dots,-\gamma_M \}$.
\end{theorem}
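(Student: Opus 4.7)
The plan is to prove both \eqref{maintheorem} and \eqref{dualmaintheorem} by the Izergin-Korepin uniqueness argument: the four properties stated in Proposition \ref{ordinarypropertiesfordomainwallboundarypartitionfunction} (respectively Proposition \ref{dualproposition}) give enough data to pin down the wavefunction uniquely as a function of $\gamma_M$, so it suffices to verify that the proposed closed form on the right-hand side of \eqref{maintheorem} satisfies all four. The induction proceeds on $M$ and $N$ simultaneously, with base case $N=1$, $x_1=M$ already handled by Proposition \ref{simplestproposition}, which supplies Property (4).

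First I will set up the right-hand side of \eqref{maintheorem} and denote it by $\Psi_{M,N}^{\rm I}$. Property (1), the polynomial degree $2N-1$ in $\gamma_M$ when $x_N=M$, is checked by tracking how $\gamma_M$ enters the determinant $G_{\lambda+\delta}(\{z\}_N|\{\overline{\alpha}\}|\{\overline{\gamma}\})$: each entry $g_{\mu_j}(z_k|\{\overline{\alpha}\}|\{\overline{\gamma}\}) - g_{\mu_j}(z_k^{-1}|\{\overline{\alpha}\}|\{\overline{\gamma}\})$ in \eqref{det} contains a factor $(1-\gamma_M z)(1-\gamma_M z^{-1})$ whenever $\mu_j+2 \le M$, so the $\gamma_M$-dependence collapses to the row $\mu_j=M-1$, giving degree at most $2N-1$. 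Property (2), the symmetry in $z_1,\dots,z_N$ and invariance under $z_i \leftrightarrow z_i^{-1}$, is immediate from the structure \eqref{symplecticschur}: the denominator $\det_N(z_k^{N-j+1}-z_k^{-N+j-1})$ and the numerator are both alternating in $z_1,\dots,z_N$ and antisymmetric under each $z_i \leftrightarrow z_i^{-1}$, so their ratio carries precisely the claimed symmetries, and the prefactor $\prod_j z_j^{j-1-N}(1+tz_j^2)\prod_{j<k}(1+tz_jz_k)(1+tz_jz_k^{-1})$ in \eqref{maintheorem} was arranged exactly to make the combination $\Psi^{\rm I}_{M,N}/[\text{prefactor}]$ match \eqref{symmetrywavefunction}.

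The main work lies in Property (3), the two recursions. For the factorization case $x_N < M$ (i.e., $\lambda_1 < M-N$), all relevant indices $\mu_j=\lambda_{N-j+1}+j-1$ satisfy $\mu_j \le M-2$, so every entry of $G_{\lambda+\delta}$ contains the common factor $(1-\gamma_M z_k)(1-\gamma_M z_k^{-1})$ from the $j=M$ term of the trailing products in \eqref{det}; pulling this factor out of each row yields \eqref{ordinaryrecursionwavefunction2}. For the specialization $\gamma_M=z_N$ with $x_N=M$ (so $\mu_1=M-1$), I will substitute $\gamma_M=z_N$ into every $g_{\mu_j}(z_k^{\pm 1}|\cdots)$: every row with $\mu_j \le M-2$ acquires a factor $(1-z_N z_k)(1-z_N z_k^{-1})$, while the top row ($\mu_j=M-1$) picks up a single $(1- z_N z_k^{\pm 1})$ factor. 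Expanding the determinant along an appropriate column (or extracting the $z_k=z_N$ row by row) isolates the frozen-vertex product appearing in \eqref{ordinaryrecursionwavefunction}, and a careful comparison against the denominator $\det_N(z_k^{N-j+1}-z_k^{-N+j-1})$ reduced by one row and column produces the reduced symplectic Schur function $sp_{\lambda^{\prime}}(\{z\}_{N-1}|\cdots)$ together with the prefactor $\prod_j(tz_Nz_j+1)\prod_j(t+z_Nz_j^{-1})\prod_j\{(1-\alpha_j\gamma_j)z_N^{-1}+\alpha_j\}\prod_j(1-\gamma_j z_N)$ from \eqref{ordinaryrecursionwavefunction}. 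This determinantal bookkeeping is the step I expect to be the main obstacle, because it requires matching the frozen-vertex factors produced on the wavefunction side exactly against the identities produced by the row/column operations on the numerator and denominator of $sp_\lambda$; signs and products of $(1+tz_N z_j)$ factors from the prefactor must be absorbed correctly.

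Finally, Property (4), the $N=1$, $x_1=M$ initial condition \eqref{ordinaryinitialrecursion}, matches the right-hand side of \eqref{maintheorem} directly: at $N=1$ the denominator of \eqref{symplecticschur} is $z_1 - z_1^{-1}$, the numerator is $g_{M-1}(z_1|\cdots) - g_{M-1}(z_1^{-1}|\cdots)$, and combining with the prefactor $z_1^{-1}(1+tz_1^2)$ reproduces \eqref{simplestmatrixelement}. The four properties together with the already-established uniqueness then force $\Phi^{\rm I}_{M,N} = \Psi^{\rm I}_{M,N}$, proving \eqref{maintheorem}. For the dual statement \eqref{dualmaintheorem} the same scheme applies using Proposition \ref{dualproposition}: I will check that the right-hand side of \eqref{dualmaintheorem}, obtained by the substitution $z_j \to tz_j$, $\alpha_j \to -\alpha_j$, $\gamma_j \to -\gamma_j$, satisfies Properties (1)--(4) of Proposition \ref{dualproposition}, noting that the specialization $\gamma_M=-t^{-1}z_N^{-1}$ in \eqref{remark} corresponds under this substitution precisely to the specialization $\gamma_M = z_N$ for $sp_\lambda$, so no new calculation is required beyond the one already performed.
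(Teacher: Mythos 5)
Your proposal is correct and follows essentially the same route as the paper: one verifies that the proposed closed forms satisfy the four Izergin--Korepin properties of Propositions \ref{ordinarypropertiesfordomainwallboundarypartitionfunction} and \ref{dualproposition}, the only cosmetic difference being that the paper handles the Property (3) bookkeeping by expanding the determinant ratio into a sum over $S_N\times\{\pm1\}^N$ via the factorization \eqref{factorization} (where only the $\sigma(N)=N$, $\tau_N=-1$ terms survive at $\gamma_M=z_N$), which is the same computation as your column expansion in different notation. The one place you undersell the remaining work is the dual case: since the prefactor in \eqref{dualmaintheorem} is written in the variables $z_j$ rather than $tz_j$, that formula is not a literal substitution instance of \eqref{maintheorem}, so the explicit prefactors and the power of $t$ appearing in the recursion \eqref{remark} still have to be rechecked (as the paper does in Appendix B), even though the surviving-term mechanism at $\gamma_M=-t^{-1}z_N^{-1}$ is, as you say, the same calculation under the substitution.
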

The correspondences
\eqref{maintheorem} and \eqref{dualmaintheorem}
are generalizations of the one by Ivanov in \cite{Iv,Ivthesis} and
one of the authors in \cite{MoRMP}.
One way to prove these correspondences
is to adopt the argument by Brubaker-Bump-Friedberg \cite{BBF}
which they viewed the wavefunctions of the free-fermionic
six-vertex model without reflecting boundary
as polynomials in $t$ and studied its properties
to find the exact correspondence with the Schur functions.
The argument works for this case as well.
However, we use the Izergin-Korepin technique,
since this argument works for the type II wavefunctions as well,
in which case viewing as a function of $t$ can extract several properties
but does not lead to the final form,
as mentioned in Brubaker-Bump-Chinta-Gunnells \cite{BBCG}.
Note that the Izergin-Korepin technique views the wavefunctions
as functions of the parameter $\gamma_M$
associated with the quantum space $V_M$.

\begin{proof}
Let us show the correspondence \eqref{maintheorem}.
\eqref{dualmaintheorem} can be proved in the same way
(See Appendix for the detailed calculations).
We have to show that the right hand side of
\eqref{maintheorem} satisfies all the Properties
in Proposition \ref{ordinarypropertiesfordomainwallboundarypartitionfunction}.

It is first easy to see that Property (2) holds,
since the right hand side of \eqref{maintheorem}
divided by the factor $\prod_{j=1}^N z_j^{j-1-N}(1+tz_j^2)
\prod_{1 \le j<k \le N}(1+tz_j z_k)(1+tz_j z_k^{-1})$
is nothing but the generalized symplectic Schur functions
$sp_\lambda(\{ z \}_N|\{ \overline{\alpha} \}|\{ \overline{\gamma} \})$
which is symmetric with respect to the variables $z_1,\dots,z_N$
and is invariant under the exchange $z_i \longleftrightarrow z_i^{-1}$
for $i=1,\dots,N$.

Next, using the following factorization
\begin{align}
\mathrm{det}_N(z_k^{N-j+1}-z_k^{-N+j-1})
=(-1)^N \prod_{j=1}^N z_j^{j-1-N}(1-z_j^2)
\prod_{1 \le j<k \le N}(1-z_j z_k)(1-z_j z_k^{-1}),
\label{factorization}
\end{align}
and the definition of the determinant and the correspondence
between the position of particles $\{ x \}$ and
the Young diagrams $\{ \lambda \}$,
we rewrite the right hand side of \eqref{maintheorem} as
\begin{align}
F^{\rm I}_{M,N}&(z_1,\dots,z_N|\gamma_1,\dots,\gamma_M|x_1,\dots,x_N)
\nonumber \\
&:=
\frac{
\prod_{j=1}^N z_j^{j-1-N}(1+tz_j^2)
\prod_{1 \le j<k \le N}(1+tz_j z_k)(1+tz_j z_k^{-1})
}{
(-1)^N \prod_{j=1}^N z_j^{j-1-N}(1-z_j^2)
\prod_{1 \le j<k \le N}(1-z_j z_k)(-1+z_j z_k^{-1})
} \nonumber \\
&\quad \times\sum_{\sigma \in S_N}
\sum_{\tau_1=\pm 1,\dots,\tau_N=\pm 1} (-1)^\sigma (-1)^{|\tau|}
\prod_{j=1}^N \prod_{k=0}^{x_j-1}
\left\{\alpha_k+(1-\alpha_k \gamma_k)z_{\sigma(j)}^{\tau_{\sigma(j)}}
\right\} \nonumber \\
&\quad \times
\prod_{j=1}^N \prod_{k=x_j+1}^{M}
\left(1-\gamma_k z_{\sigma(j)}^{\tau_{\sigma(j)}}\right)
\prod_{j=1}^N \prod_{k=1}^{M}
\left(1-\gamma_k z_{\sigma(j)}^{-\tau_{\sigma(j)}}\right),
 \label{clearrepresentaitonforproof}
\end{align}
where $|\tau|$ denotes the number of $\tau_j$'s
satisfying $\tau_j=-1$. Note that the factor $(-1)^{N(N-1)/2}$ appears
when we write down the determinant in \eqref{det} in terms of the 
position of the particle $\{x\}$ instead of the Young diagrams $\{\lambda\}$.

It can be easily seen by rewriting the right hand side of \eqref{maintheorem}
as \eqref{clearrepresentaitonforproof} that it
satisfies Property (4).

Property (1) is also easy to see from the expression
\eqref{clearrepresentaitonforproof}.
If $x_N=M$, one can see that the factors
\begin{align}
\prod_{j=1}^N \prod_{k=x_j+1}^{M}
\left(1-\gamma_k z_{\sigma(j)}^{\tau_{\sigma(j)}}\right)
\prod_{j=1}^N \prod_{k=1}^{M}
\left(1-\gamma_k z_{\sigma(j)}^{-\tau_{\sigma(j)}}\right),
\end{align}
which contain $\gamma_M$ in each summands become
\begin{align}
\prod_{j=1}^{N-1} \prod_{k=x_j+1}^{M}
\left(1-\gamma_k z_{\sigma(j)}^{\tau_{\sigma(j)}}\right)
\prod_{j=1}^N \prod_{k=1}^{M}
\left(1-\gamma_k z_{\sigma(j)}^{-\tau_{\sigma(j)}}\right),
\end{align}
from which one concludes that the degree with respect to $\gamma_M$
is $2N-1$.

Let us prove that the functions
$F^{\rm I}_{M,N}(z_1,\dots,z_N|\gamma_1,\dots,\gamma_M|x_1,\dots,x_N)$
satisfy Property (3).
We first treat the case $x_N=M$.
Specializing $\gamma_M$ to $\gamma_M=z_N$,
we find that only the summands satisfying $\sigma(N)=N$, $\tau_N=-1$ in 
\eqref{clearrepresentaitonforproof} survive.
Making use of this observation, we rewrite
$F^{\rm I}_{M,N}(z_1,\dots,z_N|\gamma_1,\dots,\gamma_M|x_1,\dots,x_N)
|_{\gamma_M=z_N}$ as
\begin{align}
F^{\rm I}_{M,N}&(z_1,\dots,z_N|\gamma_1,\dots,\gamma_M|x_1,\dots,x_N)
|_{\gamma_M=z_N}
\nonumber \\
&=-\frac{1+tz_N^2}{1-z_N^2}
\frac{\prod_{j=1}^{N-1}(1+tz_j z_N)(1+tz_j z_N^{-1})}
{\prod_{j=1}^{N-1}(1-z_j z_N) (-1+z_j z_N^{-1})} \nonumber \\
&\quad \times
\frac{
\prod_{j=1}^{N-1} (1+tz_j^2)
\prod_{1 \le j<k \le N-1}(1+tz_j z_k)(1+tz_j z_k^{-1})
}{
(-1)^{N-1} \prod_{j=1}^{N-1} (1-z_j^2)
\prod_{1 \le j<k \le N-1}(1-z_j z_k)(-1+z_j z_k^{-1})
}
\nonumber \\
&\quad \times \sum_{\sigma \in S_{N-1}}
\sum_{\tau_1=\pm 1,\dots,\tau_{N-1}=\pm 1}
(-1)(-1)^\sigma (-1)^{|\tau|} \nonumber \\
&\quad \times \prod_{j=1}^{N-1} \prod_{k=0}^{x_j-1}
\left\{\alpha_k+(1-\alpha_k \gamma_k)z_{\sigma(j)}^{\tau_{\sigma(j)}}
\right\}
\prod_{k=0}^{M-1}
\left\{\alpha_k+(1-\alpha_k \gamma_k)z_N^{-1}
\right\}
\nonumber \\
&\quad \times
\prod_{j=1}^{N-1} \prod_{k=x_j+1}^{M-1}
\left(1-\gamma_k z_{\sigma(j)}^{\tau_{\sigma(j)}}\right)
\prod_{j=1}^{N-1}
\Bigg(1-z_N z_{\sigma(j)}^{\tau_{\sigma(j)}}\Bigg)
\nonumber \\
&\quad \times
(1-z_N^2)
\prod_{k=1}^{M-1}
(1-\gamma_k z_N )
\prod_{j=1}^{N-1}
\left(1-z_N z_{\sigma(j)}^{-\tau_{\sigma(j)}}\right)
\prod_{j=1}^{N-1} \prod_{k=1}^{M-1}
\left(1-\gamma_k z_{\sigma(j)}^{-\tau_{\sigma(j)}}\right). \label{intersymplectic}
\end{align}
Using the obvious identity
\begin{align}
\prod_{j=1}^{N-1}
\left(1-z_N z_{\sigma(j)}^{\tau_{\sigma(j)}}\right)
\left(1-z_N z_{\sigma(j)}^{-\tau_{\sigma(j)}}\right)
=
\prod_{j=1}^{N-1}
(1-z_N z_j)
(1-z_N z_{j}^{-1}),
\end{align}
and after some calculations, one sees that
\eqref{intersymplectic}
can be simplified as
\begin{align}
F^{\rm I}_{M,N}&(z_1,\dots,z_N|\gamma_1,\dots,\gamma_M|x_1,\dots,x_N)
|_{\gamma_M=z_N}
\nonumber \\
&=\prod_{j=1}^{N}(t z_N z_j+1)
\prod_{j=1}^{N-1} (t+z_N z_j^{-1})
\prod_{j=0}^{M-1} ((1-\alpha_j \gamma_j)z_N^{-1}+\alpha_j)
\prod_{j=1}^{M-1} (1-\gamma_j z_N)
\nonumber \\
&\quad \times 
\frac{
\prod_{j=1}^{N-1} z_j^{j-1-(N-1)}(1+tz_j^2)
\prod_{1 \le j<k \le N-1}(1+tz_j z_k)(1+tz_j z_k^{-1})
}{
(-1)^{N-1} \prod_{j=1}^{N-1} z_j^{j-1-(N-1)}(1-z_j^2)
\prod_{1 \le j<k \le N-1}(1-z_j z_k)(-1+z_j z_k^{-1})
} \nonumber \\
&\quad \times\sum_{\sigma \in S_{N-1}}
\sum_{\tau_1=\pm 1,\dots,\tau_{N-1}=\pm 1} (-1)^\sigma (-1)^{|\tau|}
\prod_{j=1}^{N-1} \prod_{k=0}^{x_j-1}
\left\{\alpha_k+(1-\alpha_k \gamma_k)z_{\sigma(j)}^{\tau_{\sigma(j)}}
\right\} \nonumber \\
&\quad \times
\prod_{j=1}^{N-1} \prod_{k=x_j+1}^{M-1}
\left(1-\gamma_k z_{\sigma(j)}^{\tau_{\sigma(j)}}\right)
\prod_{j=1}^{N-1} \prod_{k=1}^{M-1}
\left(1-\gamma_k z_{\sigma(j)}^{-\tau_{\sigma(j)}}\right) \nonumber \\
&=\prod_{j=1}^{N}(t z_N z_j+1)
\prod_{j=1}^{N-1} (t+z_N z_j^{-1})
\prod_{j=0}^{M-1} \left\{(1-\alpha_j \gamma_j)z_N^{-1}+\alpha_j\right\}
\prod_{j=1}^{M-1} (1-\gamma_j z_N)
\nonumber \\
&\quad \times F^{\rm I}_{M-1,N-1}(z_1,\dots,z_{N-1}|\gamma_1,\dots,\gamma_{M-1}|x_1,\dots,x_{N-1}),
\end{align}
hence it is shown that $F^{\rm I}_{M,N}(z_1,\dots,z_N|\gamma_1,\dots,\gamma_M|x_1,\dots,x_N)$
satisfies Property (3) for the case $x_N=M$.

Property (3) for the case $x_N \neq M$ is much easier to prove.
We just rewrite the functions
$F^{\rm I}_{M,N}(z_1,\dots,z_N|\gamma_1,\dots,\gamma_M|x_1,\dots,x_N)$ as
\begin{align}
F^{\rm I}_{M,N}&(z_1,\dots,z_N|\gamma_1,\dots,\gamma_M|x_1,\dots,x_N) \nonumber \\
&=
\frac{
\prod_{j=1}^N z_j^{j-1-N}(1+tz_j^2)
\prod_{1 \le j<k \le N}(1+tz_j z_k)(1+tz_j z_k^{-1})
}{
(-1)^N \prod_{j=1}^N z_j^{j-1-N}(1-z_j^2)
\prod_{1 \le j<k \le N}(1-z_j z_k)(-1+z_j z_k^{-1})
} \nonumber \\
&\quad \times\sum_{\sigma \in S_N}
\sum_{\tau_1=\pm 1,\dots,\tau_N=\pm 1} (-1)^\sigma (-1)^{|\tau|}
\prod_{j=1}^N \prod_{k=0}^{x_j-1}
\left\{\alpha_k+(1-\alpha_k \gamma_k)z_{\sigma(j)}^{\tau_{\sigma(j)}}
\right\} \nonumber \\
&\quad \times
\prod_{j=1}^N \prod_{k=x_j+1}^{M-1}
\left\{1-\gamma_k z_{\sigma(j)}^{\tau_{\sigma(j)}}\right\}
\prod_{j=1}^N \prod_{k=1}^{M-1}
\left\{1-\gamma_k z_{\sigma(j)}^{-\tau_{\sigma(j)}}\right\}
\nonumber \\
&\quad \times
\prod_{j=1}^N
\left(1-\gamma_M z_{\sigma(j)}^{\tau_{\sigma(j)}}\right)
\prod_{j=1}^N 
\left(1-\gamma_M z_{\sigma(j)}^{-\tau_{\sigma(j)}}\right),
\end{align}
and use the identity
\begin{align}
\prod_{j=1}^N
\left(1-\gamma_M z_{\sigma(j)}^{\tau_{\sigma(j)}}\right)
\prod_{j=1}^N 
\left(1-\gamma_M z_{\sigma(j)}^{-\tau_{\sigma(j)}}\right)
=\prod_{j=1}^N
(1-\gamma_M z_j)
\prod_{j=1}^N 
(1-\gamma_M z_j^{-1}), \nonumber
\end{align}
to get
\begin{align}
F^{\rm I}_{M,N}&(z_1,\dots,z_N|\gamma_1,\dots,\gamma_M|x_1,\dots,x_N) \nonumber \\
&=
\frac{
\prod_{j=1}^N z_j^{j-1-N}(1+tz_j^2)
\prod_{1 \le j<k \le N}(1+tz_j z_k)(1+tz_j z_k^{-1})
}{
(-1)^N \prod_{j=1}^N z_j^{j-1-N}(1-z_j^2)
\prod_{1 \le j<k \le N}(1-z_j z_k)(-1+z_j z_k^{-1})
} \nonumber \\
&\quad \times\sum_{\sigma \in S_N}
\sum_{\tau_1=\pm 1,\dots,\tau_N=\pm 1} (-1)^\sigma (-1)^{|\tau|}
\prod_{j=1}^N \prod_{k=0}^{x_j-1}
\left\{\alpha_k+(1-\alpha_k \gamma_k)z_{\sigma(j)}^{\tau_{\sigma(j)}}
\right\} \nonumber \\
&\quad \times
\prod_{j=1}^N \prod_{k=x_j+1}^{M-1}
\left(1-\gamma_k z_{\sigma(j)}^{\tau_{\sigma(j)}}\right)
\prod_{j=1}^N \prod_{k=1}^{M-1}
\left(1-\gamma_k z_{\sigma(j)}^{-\tau_{\sigma(j)}}\right)
\prod_{j=1}^N
(1-\gamma_M z_j)
\prod_{j=1}^N 
(1-\gamma_M z_j^{-1}) \nonumber \\
&=\prod_{j=1}^N
(1-\gamma_M z_j)
\prod_{j=1}^N 
(1-\gamma_M z_j^{-1})
F^{\rm I}_{M-1,N}(z_1,\dots,z_N|\gamma_1,\dots,\gamma_{M-1}|x_1,\dots,x_N),
\end{align}
which shows that $F_{M,N}(z_1,\dots,z_N|\gamma_1,\dots,\gamma_M|x_1,\dots,x_N)$
satisfies Property (3) for the case $x_N \neq M$.

\end{proof}

\section{Dual Cauchy formula for generalized symplectic Schur functions}
In this section, as an application of the correspondence between
the type I wavefunctions and the generalized symplectic Schur functions,
we derive the dual Cauchy formula for
the generalized symplectic Schur functions.
We apply the idea due to Bump-McNamara-Nakasuji \cite{BMN},
which they derived the dual Cauchy formula
for factorial Schur functions by
evaluating the domain wall boundary partition functions
in two ways and comparing the two evaluations.
The domain wall boundary partition functions
$Z^{\rm I}_M(z_1,\dots,z_M|\gamma_1,\dots,\gamma_M)$ are special cases
$M=N$, $x_j=j$, $j=1,\dots,M$ of the wavefunctions
\begin{align}
Z^{\rm I}_M(z_1,\dots,z_M|\gamma_1,\dots,\gamma_M)
:=\Phi^{\rm I}_{M,M}(z_1,\dots,z_M|\gamma_1,\dots,\gamma_M|1,\dots,M).
\label{specialcase}
\end{align}

\begin{figure}[ht]
\centering
\includegraphics[width=0.4\textwidth]{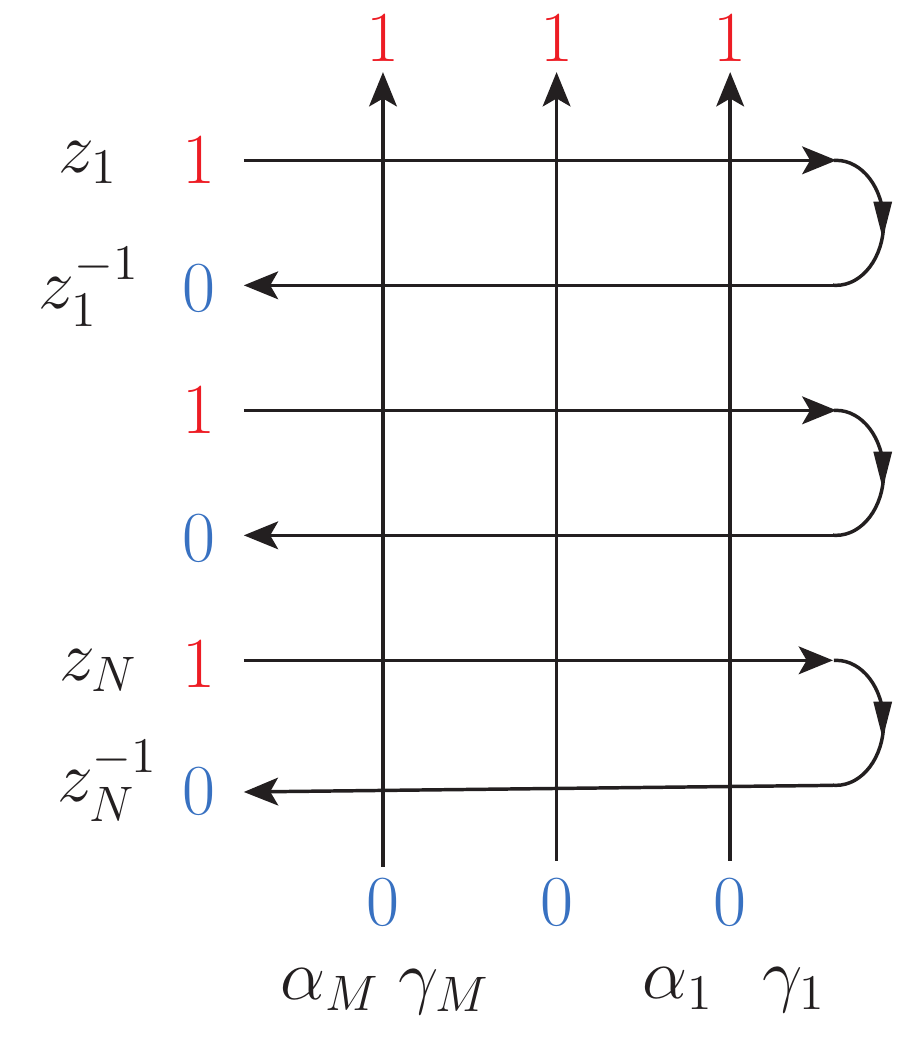}
\caption{The domain wall boundary partition functions
$Z^{\rm I}_M(z_1,\dots,z_M|\gamma_1,\dots,\gamma_M)$
\eqref{determinantinhomogeneheous},
$Z^{\rm II}_M(z_1,\dots,z_M|\gamma_1,\dots,\gamma_M)$
\eqref{typetwodeterminantinhomogeneheous}
under reflecting boundary.}

\end{figure}

First, one can show the following factorization formula
for the type I domain wall boundary partition functions.

\begin{theorem} \label{inhomogeneousexpression}
The type I domain wall boundary partition functions
$Z^{\rm I}_M(z_1,\dots,z_M|\gamma_1,\dots,\gamma_M)$ have
the following factorized form:

\begin{align}
Z^{\rm I}_M&(z_1,\dots,z_M|\gamma_1,\dots,\gamma_M)
=
\prod_{j=1}^M z_j^{j-1-M} (1+tz_j^2)
\prod_{1 \le j<k \le M}(1+tz_j z_k)(1+tz_j z_k^{-1}) \nonumber \\
&\quad \times \prod_{0 \le j<k \le M}
\left\{1+\alpha_j(\gamma_k-\gamma_j)\right\}
\prod_{1 \le j<k \le M} (1-\gamma_j \gamma_k).
\label{determinantinhomogeneheous}
\end{align}

\end{theorem}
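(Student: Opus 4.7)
The plan is to apply the Izergin-Korepin technique of Section~4 to the domain wall boundary partition function, which by \eqref{specialcase} is the special case $N = M$, $x_j = j$ of the type~I wavefunction. Since $x_N = x_M = M$ is automatic, all parts of Proposition~\ref{ordinarypropertiesfordomainwallboundarypartitionfunction} apply with the $x_N = M$ branch of Property~(3). I would proceed by induction on $M$, the base case $M = 1$ being a direct comparison of \eqref{simplestmatrixelement} at $x_1 = M = 1$ with the $M = 1$ specialisation of \eqref{determinantinhomogeneheous}.

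For the inductive step, the key observation is that by Property~(1) both sides of \eqref{determinantinhomogeneheous} are polynomials of degree $2M - 1$ in $\gamma_M$ (on the right-hand side, $\prod_{j=0}^{M-1}\{1 + \alpha_j(\gamma_M - \gamma_j)\}$ contributes degree $M$ and $\prod_{j=1}^{M-1}(1 - \gamma_j\gamma_M)$ contributes degree $M - 1$). The recursion \eqref{ordinaryrecursionwavefunction} at $N = M$, $x_N = M$ fixes $Z^{\rm I}_M|_{\gamma_M = z_M}$ as an explicit product times $Z^{\rm I}_{M-1}$, and Property~(2) (symmetry in $z_1,\dots,z_M$ and invariance under $z_i \longleftrightarrow z_i^{-1}$ of the quotient by the standard prefactor) promotes this single specialisation to $2M$ specialisations at $\gamma_M = z_j^{\pm 1}$ for $j = 1,\dots,M$. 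Since a polynomial of degree $2M - 1$ is uniquely determined by its values at $2M$ points, it suffices to verify that the right-hand side of \eqref{determinantinhomogeneheous} satisfies the same recursion at $\gamma_M = z_M$, with $Z^{\rm I}_{M-1}$ replaced by its factorised form (available by induction).

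This last verification is where the main computational work lies. One isolates the $\gamma_M$-dependent factors on the right-hand side of \eqref{determinantinhomogeneheous}, namely $\prod_{j=0}^{M-1}\{1 + \alpha_j(z_M - \gamma_j)\}\prod_{j=1}^{M-1}(1 - \gamma_j z_M)$ together with the $z_M$-containing part of the symmetric prefactor. The crucial identity
\begin{equation*}
1 + \alpha_j(z_M - \gamma_j) = z_M\bigl\{\alpha_j + (1 - \alpha_j\gamma_j)z_M^{-1}\bigr\}
\end{equation*}
rewrites the first product as $z_M^M\prod_{j=0}^{M-1}\{\alpha_j + (1 - \alpha_j\gamma_j)z_M^{-1}\}$, which is precisely the $\alpha$-dependent factor appearing in \eqref{ordinaryrecursionwavefunction}. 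Likewise $t + z_Mz_j^{-1} = z_Mz_j^{-1}(1 + tz_jz_M^{-1})$ converts the $t$-dependent factors in the recursion into the ratio of symmetric prefactors between $Z^{\rm I}_M$ and $Z^{\rm I}_{M-1}$, once one tracks the powers $\prod_{j=1}^M z_j^{j-1-M} / \prod_{j=1}^{M-1}z_j^{j-1-(M-1)} = z_M^{-1}\prod_{j=1}^{M-1}z_j^{-1}$.

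The main obstacle is therefore the careful bookkeeping of $z_M$-powers and the $\alpha_j, \gamma_j$ rewritings needed to match the two products of factors on either side; the identities above are tailor-made for this, but the cancellations must be tracked precisely. Once the single specialisation at $\gamma_M = z_M$ is matched, the remaining $2M-1$ specialisations are free by the shared symmetry in $z_1,\dots,z_M$ and the $z_i \longleftrightarrow z_i^{-1}$ invariance, completing the induction.
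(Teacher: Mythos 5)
Your proposal is correct and follows essentially the same route as the paper: the paper reduces Theorem \ref{inhomogeneousexpression} to the special case of the Izergin--Korepin properties (Proposition \ref{specializerginkorepin}) and then asserts that the right-hand side of \eqref{determinantinhomogeneheous} satisfies them, which is exactly your uniqueness argument (degree $2M-1$ in $\gamma_M$, hence $2M$ specialisations at $\gamma_M=z_j^{\pm1}$ obtained from the recursion plus symmetry, plus the $M=1$ base case). Your explicit rewriting $1+\alpha_j(z_M-\gamma_j)=z_M\{\alpha_j+(1-\alpha_j\gamma_j)z_M^{-1}\}$ correctly supplies the verification step that the paper leaves as ``easy to see.''
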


\begin{proof}
Since the type I domain wall boundary partition functions
$Z^{\rm I}_M(z_1,\dots,z_M|\gamma_1,\dots,\gamma_M)$
are
special cases of the type I wavefunctions
$\Phi^{\rm I}_{M,M}(z_1,\dots,z_M|\gamma_1,\dots,\gamma_M|1,\dots,M)$
\eqref{specialcase}, what we need is to
prove the following special case of Proposition
\ref{ordinarypropertiesfordomainwallboundarypartitionfunction}.
\begin{proposition}
\label{specializerginkorepin}
The type I domain wall boundary partition functions
$Z^{\rm I}_M(z_1,\dots,z_M|\gamma_1,\dots,\gamma_M)$
satisfies the following properties. \\
\\
 (1) The type I domain wall boundary partition functions
$Z^{\rm I}_M(z_1,\dots,z_M|\gamma_1,\dots,\gamma_M)$
is a polynomial of degree $2M-1$ in $\gamma_M$.
\\
 (2) The following form
\begin{align}
\frac{Z^{\rm I}_{M}(z_1,\dots,z_M|\gamma_1,\dots,\gamma_M)}
{
\prod_{j=1}^M z_j^{j-1-M} (1+tz_j^2) \prod_{1 \le j < k \le M}
(1+tz_j z_k) (1+tz_j z_k^{-1})
},
\end{align}
is symmetric with respect to $z_1,\dots,z_M$,
and is invariant under the exchange $z_i \longleftrightarrow z_i^{-1}$
for $i=1,\dots,M$.
\\
(3) The following recursive relations between the
type I domain wall boundary partition functions hold:
\begin{align}
Z^{\rm I}_{M}&(z_1,\dots,z_M|\gamma_1,\dots,\gamma_M)
|_{\gamma_M=z_M}
\nonumber \\
&=\prod_{j=1}^{M}(t z_M z_j+1)
\prod_{j=1}^{M-1} (t+z_M z_j^{-1})
\prod_{j=0}^{M-1} \left\{(1-\alpha_j \gamma_j)z_M^{-1}+\alpha_j\right\}
\prod_{j=1}^{M-1} (1-\gamma_j z_M)
\nonumber \\
&\quad \times Z^{\rm I}_{M-1}(z_1,\dots,z_{M-1}|\gamma_1,\dots,\gamma_{M-1}).
\end{align}
\\
(4) The following holds for the case $M=1$:
\begin{align}
Z^{\rm I}_1(z|\gamma_1)
=&\frac{1+tz^2}{z^2-1}
\sum_{\tau=\pm 1} \tau
\left\{\alpha_0+(1-\alpha_0\gamma_0)z^{\tau}\right\}(1-\gamma_1 z^{-\tau}).
\end{align}
\end{proposition}
It is easy to see that the right hand side of
\eqref{determinantinhomogeneheous} satisfies
all the properties listed in
Proposition \ref{specializerginkorepin}.

\end{proof}

Now we derive the dual Cauchy formula for
the generalized symplectic Schur functions.

\begin{theorem}
The following dual Cauchy formula holds
for the generalized symplectic Schur functions
with sets variables $\{ x \}_N=\{x_1,\dots,x_N \}$, $\{ y \}_M=\{y_1,\dots,y_M \}$,
$\{ \overline{\alpha} \}=\{\alpha_0,\dots,\alpha_{N+M} \}$,
$\{ \overline{\gamma} \}=\{\gamma_0,\dots,\gamma_{N+M} \}$,
\begin{align}
&\sum_{\lambda \subseteq M^N}sp_\lambda(\{ x \}_N|\{ \overline{\alpha} \}|\{
\overline{\gamma} \})
sp_{\hat{\lambda}}(\{ y \}_M|\{- \overline{\alpha} \}|\{- \overline{\gamma} \}) \nonumber \\
=&\prod_{j=1}^M y_j^{-N}
\prod_{j=1}^N \prod_{k=1}^M (1+x_j y_k)(1+x_j^{-1} y_k)
\prod_{0 \le j<k \le N+M}(1+\alpha_j(\gamma_k-\gamma_j))
\prod_{1 \le j<k \le N+M}(1-\gamma_j \gamma_k),
\label{dualcauchy}
\end{align}
where
$\{ -\overline{\alpha} \}=\{-\alpha_0,\dots,-\alpha_{N+M} \}$,
$\{ -\overline{\gamma} \}=\{-\gamma_0,\dots,-\gamma_{N+M} \}$
and
$\hat{\lambda}
=(\hat{\lambda}_1,\dots,\hat{\lambda}_M)$ is the partition
of the Young diagram $\lambda=(\lambda_1,\dots,\lambda_N)$ given by
\begin{align}
\hat{\lambda}_i=|\{j \ | \ \lambda_j  \le M-i \}|.
\end{align}
\end{theorem}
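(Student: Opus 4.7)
The plan is to follow the Bump-McNamara-Nakasuji approach: evaluate the type I domain wall boundary partition function $Z^{\rm I}_{N+M}$ on the enlarged $(N+M)\times(N+M)$ lattice in two different ways and equate the results. First I would insert a complete set of $N$-particle states between the $M$-th and $(M+1)$-th double-row $B$-operators in
\begin{align*}
Z^{\rm I}_{N+M}(z_1,\dots,z_{N+M})
=\langle 1^{N+M}|\mathcal{B}^{\rm I}(z_1)\cdots\mathcal{B}^{\rm I}(z_{N+M})|0^{N+M}\rangle.
\end{align*}
Since each $\mathcal{B}^{\rm I}$ creates exactly one particle, the intermediate states are labelled by particle positions $1\le x_1<\cdots<x_N\le N+M$ and, equivalently, by the complementary hole positions $\overline{x_1}<\cdots<\overline{x_M}$. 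This yields
\begin{align*}
Z^{\rm I}_{N+M}(z_1,\dots,z_{N+M})
=\sum_{\{x\}}
\overline{\Phi}^{\rm I}_{N+M,M}(z_1,\dots,z_M|\{\gamma\}|\overline{x_1},\dots,\overline{x_M})\,
\Phi^{\rm I}_{N+M,N}(z_{M+1},\dots,z_{N+M}|\{\gamma\}|x_1,\dots,x_N).
\end{align*}

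Next I would substitute the explicit expressions given by Theorem \ref{maintheoremstatement} for each of the two factors. A direct check from the definition $\hat{\lambda}_i=|\{j:\lambda_j\le M-i\}|$ shows that the dual's Young diagram $\overline{\lambda}$ coincides with $\hat{\lambda}$ whenever $\{\overline{x}\}$ is the complement of $\{x\}$ in $\{1,\dots,N+M\}$, so the sum over intermediate configurations turns into the sum over Young diagrams $\lambda\subseteq M^N$. Then, using the factorised form of $Z^{\rm I}_{N+M}$ provided by Theorem \ref{inhomogeneousexpression}, I would equate the two expressions and cancel common pieces: the $(1+tz_j^2)$ factors drop out entirely; the product $\prod_{1\le j<k\le N+M}(1+tz_jz_k)(1+tz_jz_k^{-1})$ splits into three blocks, of which the first $M$ and the last $N$ cancel against the analogous products coming from $\overline{\Phi}^{\rm I}$ and $\Phi^{\rm I}$, leaving only the cross block $\prod_{j=1}^M\prod_{k=1}^N(1+tz_jz_{M+k})(1+tz_jz_{M+k}^{-1})$; the monomial prefactors combine into $\prod_{j=1}^M z_j^{-N}$ in the numerator, while the overall dual prefactor $t^{MN}$ sits in the denominator.

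Finally I would perform the substitution $z_j=y_j/t$ for $j=1,\dots,M$ and $z_{M+k}=x_k$ for $k=1,\dots,N$. Under this substitution $sp_{\hat{\lambda}}(\{tz_1,\dots,tz_M\}|\{-\overline{\alpha}\}|\{-\overline{\gamma}\})$ becomes $sp_{\hat{\lambda}}(\{y\}_M|\{-\overline{\alpha}\}|\{-\overline{\gamma}\})$ and $sp_\lambda(\{z_{M+1},\dots,z_{M+N}\}|\{\overline{\alpha}\}|\{\overline{\gamma}\})$ becomes $sp_\lambda(\{x\}_N|\{\overline{\alpha}\}|\{\overline{\gamma}\})$; the prefactor $\prod_{j=1}^M z_j^{-N}=t^{MN}\prod_{j=1}^M y_j^{-N}$ precisely cancels the denominator $t^{MN}$, leaving the overall factor $\prod_{j=1}^M y_j^{-N}$; and the cross block becomes $\prod_{j=1}^M\prod_{k=1}^N(1+x_ky_j)(1+x_k^{-1}y_j)$, which after relabelling matches the product in the claim. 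The parameter factors $\prod_{0\le j<k\le N+M}(1+\alpha_j(\gamma_k-\gamma_j))\prod_{1\le j<k\le N+M}(1-\gamma_j\gamma_k)$ from Theorem \ref{inhomogeneousexpression} pass through unaltered, completing the identification with the stated dual Cauchy formula.

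The main obstacle is getting the ordering right: the $B$-string must be split so that the dual block $\overline{\Phi}^{\rm I}_{N+M,M}$ takes the \emph{first} $M$ spectral parameters $z_1,\dots,z_M$ and the wavefunction block $\Phi^{\rm I}_{N+M,N}$ takes the \emph{last} $N$ parameters $z_{M+1},\dots,z_{M+N}$, which is what the complete-set insertion produces naturally (a direct check against the $N=M=1$ case of Theorem \ref{inhomogeneousexpression} confirms this assignment and rules out the opposite grouping). Once this is fixed, the combinatorial step $\overline{\lambda}=\hat{\lambda}$ and the algebraic cancellations are routine.
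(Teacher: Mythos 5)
Your proposal is correct and follows essentially the same route as the paper: evaluate the type~I domain wall boundary partition function both via its factorized form (Theorem \ref{inhomogeneousexpression}) and via a completeness-relation insertion that splits it into a dual wavefunction times a wavefunction, apply the correspondences \eqref{maintheorem} and \eqref{dualmaintheorem}, cancel common factors, and reparametrize. The only cosmetic difference is that you work directly on an $(N+M)\times(N+M)$ lattice whereas the paper splits an $M\times M$ lattice into blocks of sizes $M-N$ and $N$ and renames variables at the end; your explicit identification $\overline{\lambda}=\hat{\lambda}$ and the bookkeeping of the $t^{MN}$ and monomial prefactors match the paper's computation.
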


\begin{proof}
First, we have shown Theorem \ref{inhomogeneousexpression}
which states that the type I domain wall boundary partition functions
$Z^{\rm I}_M(z_1,\dots,z_M|\gamma_1,\dots,\gamma_M)$
have the factorized form
\eqref{determinantinhomogeneheous}.

On the other hand, one can evaluate
the domain wall boundary partition functions
by inserting the completeness relation
\begin{align}
\sum_{\{ x \}}|x_1 \cdots x_N \rangle \langle x_1 \cdots x_N |=\mathrm{Id},
\end{align}
between the double-row $B$-operators and
using the correspondences between
the wavefunctions, the dual wavefunctions
and the generalized symplectic Schur functions
\eqref{maintheorem} and \eqref{dualmaintheorem}
to get
\begin{align}
Z^{\rm I}_M&(z_1,\dots,z_M|\gamma_1,\dots,\gamma_M) \nonumber \\
&=\langle 1^M|
\mathcal{B}^{\rm I}(z_1,\{ \overline{\alpha} \},\{ \overline{\gamma} \})
 \cdots 
\mathcal{B}^{\rm I}(z_M,\{ \overline{\alpha} \},\{ \overline{\gamma} \})
|0^M \rangle \nonumber \\
&=\sum_{\{x \}} \langle 1^M|
\mathcal{B}^{\rm I}(z_1,\{ \overline{\alpha} \},\{ \overline{\gamma} \})
\cdots
\mathcal{B}^{\rm I}(z_{M-N},\{ \overline{\alpha} \},\{ \overline{\gamma} \})
|x_1 \cdots x_N \rangle \nonumber \\
&\quad \times \langle x_1 \cdots x_N|
\mathcal{B}^{\rm I}(z_{M-N+1},\{ \overline{\alpha} \},\{ \overline{\gamma} \})
\cdots
\mathcal{B}^{\rm I}(z_M,\{ \overline{\alpha} \},\{ \overline{\gamma} \})
|0^M \rangle \nonumber \\
&=\sum_{x \sqcup \overline{x}=\{1,\dots,M \}} \langle 1^M|
\mathcal{B}^{\rm I}(z_1,\{ \overline{\alpha} \},\{ \overline{\gamma} \})
\cdots
\mathcal{B}^{\rm I}(z_{M-N},\{ \overline{\alpha} \},\{ \overline{\gamma} \})
|\overline{x_1} \cdots \overline{x_{M-N}} \rangle
\nonumber \\
&\quad \times \langle x_1 \cdots x_N|
\mathcal{B}^{\rm I}(z_{M-N+1},\{ \overline{\alpha} \},\{ \overline{\gamma} \})
\cdots
\mathcal{B}^{\rm I}(z_{M-N},\{ \overline{\alpha} \},\{ \overline{\gamma} \})
|0^M \rangle \nonumber \\
&=\sum_{x \sqcup \overline{x}=\{1,2,\dots,M \}} 
\overline{\Phi}^{\rm I}_{M,M-N}(z_1,\dots,z_{M-N}|\gamma_1,\dots,\gamma_M|\overline{x_1},
\dots,\overline{x_{M-N}}) \nonumber \\
&\quad \times \Phi^{\rm I}_{M,N}
(z_{M-N+1},\dots,z_M|\gamma_1,\dots,\gamma_M|x_1,\dots,x_N)
\nonumber \\
&=\sum_{\lambda \subseteq (M-N)^N}
t^{N(M-N)}
\prod_{j=1}^{M-N} z_j^{j-1-M+N}(1+tz_j^2)
\prod_{1 \le j<k \le M-N}(1+tz_j z_k)(1+tz_j z_k^{-1}) \nonumber \\
&\quad \times
sp_{\overline{\lambda}}(tz_1,\dots,tz_{M-N}|\{-\overline{\alpha} \}|\{-\overline{\gamma} \}) \nonumber \\
&\quad \times
\prod_{j=M-N+1}^M z_j^{j-1-M}(1+tz_j^2)
\prod_{M-N+1 \le j<k \le M}(1+tz_j z_k)(1+tz_j z_k^{-1})
\nonumber \\
&\quad \times
sp_\lambda(z_{M-N+1},\dots,z_M|
\{\overline{\alpha} \}|\{ \overline{\gamma} \}).
\label{comparisontwo}
\end{align}
Note that the sum over all particle configurations $\{x \}$
is translated to the sum over all Young diagrams $\lambda$
satisfying $\lambda \subseteq (M-N)^N$.
Comparing the two ways of evaluations
\eqref{comparisontwo} and
\eqref{determinantinhomogeneheous}
and cancelling common factors, we get
the following identity
\begin{align}
\prod_{j=1}^{M-N} &(tz_j)^{-N}
\prod_{\substack{1 \le j \le M-N \\ M-N+1 \le k \le M}}
(1+tz_j z_k)(1+tz_j z_k^{-1}) \nonumber \\
&\quad \times \prod_{0 \le j<k \le M}(1+\alpha_j(\gamma_k-\gamma_j))
\prod_{1 \le j<k \le M}(1-\gamma_j \gamma_k)
\nonumber \\
&=\sum_{\lambda \subseteq (M-N)^N}
sp_{\overline{\lambda}}(tz_1,\dots,tz_{M-N}|\{-\overline{\alpha} \}|\{-\overline{\gamma} \})
sp_\lambda(z_{M-N+1},\dots,z_M|
\{ \overline{\alpha} \}|\{ \overline{\gamma} \}),
\end{align}
which, after some reparametrization, can be rewritten in
the form \eqref{dualcauchy}.
\end{proof}

The dual Cauchy formula \eqref{dualcauchy} is a generalization
of the one for the ordinary symplectic Schur functions which were proven
by various ways \cite{Morris,King,JM,Ha,Te,BG,HK3}.

\section{Type II wavefunctions and
generalized Bump-Friedberg-Hoffstein Whittaker functions}
In this and the next sections,
we perform the analysis on the type II wavefunctions.
The arguments, statements and the proofs are similar to
those for the type I wavefunctions.

\subsection{One particle case}
The simplest case $N=1$
of type II wavefunctions can be calculated
using the identity \eqref{identity} in
Lemma \ref{identitylemma} as well.

\begin{proposition} \label{typetwosimplestproposition}
The type II wavefunction $\Phi^{\rm II}_{M,1}(z|\gamma_1,\dots,\gamma_M|x_1)$
is explicitly expressed as
\begin{align}
&\Phi^{\rm II}_{M,1}(z|\gamma_1,\dots,\gamma_M|x_1)
=\frac{z^{1/2}(1-\sqrt{-t} z)}{z^2-1} \nonumber \\
&\quad \times\sum_{\tau=\pm 1} \tau (z^\tau+\sqrt{-t})
\prod_{j=1}^{x_1-1}(\alpha_j+(1-\alpha_j \gamma_j)z^\tau)
\prod_{j=x_1+1}^M (1-\gamma_j z^\tau)
\prod_{j=1}^M (1-\gamma_j z^{-\tau}). \label{typetwosimplestmatrixelement}
\end{align}
\end{proposition}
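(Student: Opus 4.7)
The proof will run in close parallel with that of Proposition \ref{simplestproposition} for the type I one-particle wavefunction, the only difference being that the boundary weights of type I are replaced by those of type II \eqref{generalizedkmatrixsecond}. The plan is to decompose the action of a single double-row $B$-operator $\mathcal{B}^{\rm II}(z,\{\alpha\},\{\gamma\})$ on $|0^M\rangle$ via its definition \eqref{generalizeddoublerowsecond}, insert a complete set of intermediate states between $B$ and $\widetilde{A}$, compute the relevant monodromy matrix elements site by site, and finally massage the resulting sum using Lemma \ref{identitylemma}.

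Concretely, I would first write
\begin{align*}
\Phi^{\rm II}_{M,1}(z|\gamma_1,\dots,\gamma_M|x_1)
=&-\sqrt{-t}\,z^{1/2}\,
\langle x_1|\widetilde B(z,\{\alpha\},\{\gamma\})|0^M\rangle
\langle 0^M|A(z,\{\alpha\},\{\gamma\})|0^M\rangle \\
&+z^{-1/2}\,
\langle x_1|\widetilde A(z,\{\alpha\},\{\gamma\})|x_1\rangle
\langle x_1|B(z,\{\alpha\},\{\gamma\})|0^M\rangle \\
&+z^{-1/2}\sum_{j=1}^{x_1-1}
\langle x_1|\widetilde A(z,\{\alpha\},\{\gamma\})|j\rangle
\langle j|B(z,\{\alpha\},\{\gamma\})|0^M\rangle,
\end{align*}
which uses the fact that a single $B$ can only produce a one-particle state and that the surviving intermediate positions are $1 \le j \le x_1$. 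Each matrix element is then a product of bulk Boltzmann weights read off from the type $\Gamma$ $L$-operator \eqref{generalizedloperator} and the type $\Delta$ $L$-operator \eqref{anothergeneralizedloperator}, together with the frozen factors $\prod_{k=x_1+1}^M (1-\gamma_k z)(1-\gamma_k z^{-1})$ coming from the columns to the right of $x_1$ in both rows.

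Substituting these explicit factorized forms yields an expression of exactly the same shape as \eqref{intoneparticle}, but with the boundary prefactors $(1-\alpha_0\gamma_0)tz-\alpha_0$ and $(1-\alpha_0\gamma_0)z^{-1}+\alpha_0$ replaced respectively by $-\sqrt{-t}\,z^{1/2}$ and $z^{-1/2}$. Factoring $z^{-1/2}$ out, the $\sum_{j=1}^{x_1-1}$ term combines with the two boundary terms in precisely the combination to which the identity \eqref{identity} of Lemma \ref{identitylemma} applies, collapsing the sum to the telescoping difference on the right hand side of \eqref{identity}. After this collapse one organizes the two remaining contributions into a single $\tau=\pm 1$ sum; the overall factor $z^{1/2}(1-\sqrt{-t}\,z)/(z^2-1)$ and the weights $(z^{\tau}+\sqrt{-t})$ emerge naturally from grouping $-\sqrt{-t}\,z^{1/2}$ and $z^{-1/2}$ with the telescoping differences, giving \eqref{typetwosimplestmatrixelement}.

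The main obstacle is the bookkeeping in the last step: the new $K$-matrix entries are no longer of the same linear shape as the type I ones, so the recombination into a manifestly $\tau=\pm 1$ symmetric sum with weight $z^{\tau}+\sqrt{-t}$ is less automatic than in the type I case. I expect that one has to rewrite $-\sqrt{-t}\,z^{1/2} = z^{-1/2}\cdot(-\sqrt{-t}\,z)$ and $z^{-1/2} = z^{-1/2}\cdot 1$, so that the combined prefactor distributes as $z^{-1/2}(1-\sqrt{-t}\,z\cdot\text{sign})$, and then to verify that the two surviving telescoped contributions indeed assemble into $\sum_{\tau=\pm 1}\tau(z^{\tau}+\sqrt{-t})(\dots)$. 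Once this identification is made, the rest is routine algebra, and the result \eqref{typetwosimplestmatrixelement} follows.
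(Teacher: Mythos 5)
Your proposal is correct and follows essentially the same route as the paper: the same three-term decomposition of $\mathcal{B}^{\rm II}(z,\{\alpha\},\{\gamma\})|0^M\rangle$ with the type II boundary prefactors $-\sqrt{-t}\,z^{1/2}$ and $z^{-1/2}$, the same site-by-site evaluation of the monodromy matrix elements, and the same application of Lemma \ref{identitylemma} to collapse the intermediate sum into the $\tau=\pm1$ form \eqref{typetwosimplestmatrixelement}. The final bookkeeping you flag as the main obstacle is indeed just the routine algebra the paper leaves implicit.
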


\begin{proof}
We decompose
$\Phi^{\rm II}_{M,1}(z|\gamma_1,\dots,\gamma_M|x_1)$ as
\begin{align}
\Phi^{\rm II}_{M,1}&(z|\gamma_1,\dots,\gamma_M|x_1) \nonumber \\
&=-\sqrt{-t}z^{1/2}
\langle x_1 |\widetilde{B}(z,\{ \alpha \},\{ \gamma \})| 0^M \rangle
\langle 0^M |A(z,\{ \alpha \},\{ \gamma \})| 0^M \rangle \nonumber \\
&\quad +z^{-1/2}
\langle x_1 |\widetilde{A}(z,\{ \alpha \},\{ \gamma \})| x_1 \rangle
\langle x_1 |B(z,\{ \alpha \},\{ \gamma \})| 0^M \rangle \nonumber \\
&\quad +z^{-1/2}
\sum_{j=1}^{x_1-1}
\langle x_1 |\widetilde{A}(z,\{ \alpha \},\{ \gamma \})| j \rangle
\langle j |B(z,\{ \alpha \},\{ \gamma \})| 0^M \rangle,
\end{align}
to get
\begin{align}
\Phi^{\rm II}_{M,1}&(z|\gamma_1,\dots,\gamma_M|x_1)
=\prod_{j=x_1+1}^M (1-\gamma_j z)(1-\gamma_j z^{-1}) \nonumber \\
&\times \biggl[ -\sqrt{-t}z^{1/2}
\prod_{k=1}^{x_1-1}
\left\{(1-\alpha_k \gamma_k)z+\alpha_k\right\}
 \prod_{k=1}^{x_1}(1-\gamma_k z^{-1}) \nonumber \\
&\qquad +z^{-1/2}(t\gamma_{x_1}z+1)
\prod_{k=1}^{x_1-1}\left\{(1-\alpha_k \gamma_k)z^{-1}+\alpha_k\right\}
(1-\gamma_k z)
\nonumber \\
&\qquad+(t+1)z^{1/2}
\sum_{j=1}^{x_1-1}\prod_{k=1}^{j-1}\left\{
(1-\alpha_k \gamma_k)z^{-1}+\alpha_k\right\}(1-\gamma_k z)
\nonumber \\
&\qquad \times \prod_{k=j+1}^{x_1-1} 
\left\{(1-\alpha_k \gamma_k)z+\alpha_k\right\}
\prod_{k=j+1}^{x_1} (1-\gamma_k z^{-1})\biggr],
\end{align}
which, by using the equality \eqref{identity}
in Lemma \ref{identitylemma}, can be rewritten into the form
\eqref{typetwosimplestmatrixelement}.
\end{proof}
We call similarly calculate type II dual wavefunction $\overline{\Phi}^{\rm II}_{M,1}(z|\gamma_1,\dots,\gamma_M|\overline{x_1})$.
\begin{proposition}
The type II dual wavefunction $\overline{\Phi}^{\rm II}_{M,1}(z|\gamma_1,\dots,\gamma_M|\overline{x_1})$
is explicitly expressed as
\begin{align}
\overline{\Phi}^{\rm II}_{M,1}&(z|\gamma_1,\dots,\gamma_M|\overline{x_1})
\nonumber \\
&=\frac{t^{M}z^{1/2}(1-\sqrt{-t}z)}{t^2z^2-1}
\sum_{\tau=\pm 1} \tau
\left\{(tz)^\tau-\sqrt{-t}\right\}
\prod_{j=1}^{\overline{x_1}-1}
\left\{-\alpha_j+(1-\alpha_j \gamma_j)(tz)^\tau\right\}
\nonumber \\
&\quad \times \prod_{j=\overline{x_1}+1}^M 
\left\{1+\gamma_j (tz)^\tau\right\}
\prod_{j=1}^M \left\{1+\gamma_j (tz)^{-\tau}\right\}.
\end{align}
\end{proposition}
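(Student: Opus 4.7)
The strategy is to mirror the argument used for Proposition \ref{typetwosimplestproposition}, but with the initial state replaced by the one-hole state $|\overline{x_1}\rangle=\sigma^+_{\overline{x_1}}|1^M\rangle$ and the dual reference state $\langle 1^M|$ in place of $\langle x_1|$ and $|0^M\rangle$. Using the definition \eqref{generalizeddoublerowsecond} of the type II double-row $B$-operator, one decomposes
\begin{align*}
\overline{\Phi}^{\rm II}_{M,1}(z|\gamma_1,\dots,\gamma_M|\overline{x_1})
&=-\sqrt{-t}z^{1/2}\,\langle 1^M|\widetilde{B}(z)A(z)|\overline{x_1}\rangle
+z^{-1/2}\,\langle 1^M|\widetilde{A}(z)B(z)|\overline{x_1}\rangle.
\end{align*}

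One then inserts a complete set of intermediate states between the two monodromy operators in each term. Because $A$ preserves the quantum space particle number and $\widetilde{B}$ removes a single hole, the first term becomes a sum over one-hole intermediate states $|\overline{j}\rangle$; in the second term, $B$ fills the remaining hole to give $|1^M\rangle$, on which $\widetilde{A}$ acts diagonally. The required matrix elements
$\langle 1^M|\widetilde{B}(z)|\overline{j}\rangle$, $\langle\overline{j}|A(z)|\overline{x_1}\rangle$, $\langle 1^M|\widetilde{A}(z)|1^M\rangle$ and $\langle 1^M|B(z)|\overline{x_1}\rangle$
are computed directly by reading off the nonzero entries of the $L$-operators \eqref{generalizedloperator} and \eqref{anothergeneralizedloperator} along the frozen portions of the lattice, separating the contributions $j=\overline{x_1}$, $j<\overline{x_1}$ and $j>\overline{x_1}$. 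This yields an intermediate expression analogous to the bracketed sum in the proof of Proposition \ref{typetwosimplestproposition}, with $\alpha_j,\gamma_j$ replaced by $-\alpha_j,-\gamma_j$, the spectral parameter effectively shifted from $z$ to $tz$, and an overall factor of $t^M z^{1/2}(1-\sqrt{-t}z)/(t^2 z^2-1)$ prefactor emerging from the global normalization.

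The final step is to collapse the sum over intermediate positions using Lemma \ref{identitylemma}. The identity \eqref{identity} is parameter-independent, so it can be applied with the substitution $z\mapsto tz$, $\alpha_k\mapsto -\alpha_k$, $\gamma_k\mapsto -\gamma_k$, $x_1\mapsto \overline{x_1}$, combining the diagonal $j=\overline{x_1}$ contribution and the sum for $j<\overline{x_1}$ with the term originating from the fully frozen part. The result matches the right-hand side of \eqref{typetwosimplestmatrixelement}'s dual analogue as stated. The main obstacle is bookkeeping: one must keep track of the signs arising from the different $L$-operator entries (particularly the factors $t(1-\alpha_j\gamma_j)z-\alpha_j$ and $t\gamma_jz+1$ from $\widetilde{L}$), the half-integer powers of $z$ coming from the type II $K$-matrix \eqref{generalizedkmatrixsecond}, and the overall factor $t^M$ which accumulates from the $M$ applications of $\widetilde{L}$ against the all-ones reference state. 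Once these are handled consistently, the identity \eqref{identity} produces the symmetric two-term form with $\tau=\pm 1$, completing the proof.
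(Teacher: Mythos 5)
Your proposal is correct and is essentially the computation the paper has in mind: the paper omits the proof of this proposition entirely, stating only that it is obtained "similarly" to Proposition \ref{typetwosimplestproposition}, and your plan — decomposing $\mathcal{B}^{\rm II}$ into the $\widetilde{B}A$ and $\widetilde{A}B$ pieces acting on the one-hole state, summing over one-hole intermediate states in the first piece, reading off the frozen $L$-operator entries, and collapsing the telescoping sum via Lemma \ref{identitylemma} applied with $z\mapsto tz$, $\alpha_k\mapsto-\alpha_k$, $\gamma_k\mapsto-\gamma_k$ (legitimate, since \eqref{identity} is an identity in free parameters) — is exactly that adaptation.
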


\subsection{Izergin-Korepin analysis}
We can use the Izergin-Korepin technique \cite{Ko,Iz,Motegi}
to extract the properties for the type II wavefunctions as well.

\begin{proposition}
\label{typetwoordinarypropertiesfordomainwallboundarypartitionfunction}
The type II wavefunctions
$\Phi^{\rm II}_{M,N}(z_1,\dots,z_N|\gamma_1,\dots,\gamma_M|x_1,\dots,x_N)$
satisfies the following properties. \\
\\
 (1) When $x_N=M$, the type II wavefunctions
$\Phi^{\rm II}_{M,N}(z_1,\dots,z_N|\gamma_1,\dots,\gamma_M|x_1,\dots,x_N)$
is a polynomial of degree $2N-1$ in $\gamma_M$.
\\
 (2) The following form
\begin{align}
\frac{\Phi^{\rm II}_{M,N}(z_1,\dots,z_N|\gamma_1,\dots,\gamma_M|x_1,\dots,x_N)}
{
\prod_{j=1}^N z_j^{j-1/2-N} (1-\sqrt{-t}z_j) \prod_{1 \le j < k \le N}
(1+tz_j z_k) (1+tz_j z_k^{-1})
},
\label{typetwosymmetrywavefunction}
\end{align}
is symmetric with respect to $z_1,\dots,z_N$,
and is invariant under the exchange $z_i \longleftrightarrow z_i^{-1}$ for
$i=1,\dots,N$.
\\
(3) The following recursive relations between the
type II wavefunctions hold if $x_N=M$:
\begin{align}
\Phi^{\rm II}_{M,N}&(z_1,\dots,z_N|\gamma_1,\dots,\gamma_M|x_1,\dots,x_N)
|_{\gamma_M=z_N}
\nonumber \\
&=z_N^{-1/2} \prod_{j=1}^{N}(t z_N z_j+1)
\prod_{j=1}^{N-1} (t+z_N z_j^{-1})
\prod_{j=1}^{M-1} \left\{(1-\alpha_j \gamma_j)z_N^{-1}+\alpha_j\right\}
\prod_{j=1}^{M-1} (1-\gamma_j z_N)
\nonumber \\
&\quad
\times \Phi^{\rm I}_{M-1,N-1}
(z_1,\dots,z_{N-1}|\gamma_1,\dots,\gamma_{M-1}|x_1,\dots,x_{N-1}). 
\label{typetwoordinaryrecursionwavefunction}
\end{align}

If $x_N \neq M$, the following factorizations hold for the type II
wavefunctions:
\begin{align}
\Phi_{M,N}^{\rm II}&(z_1,\dots,z_N|\gamma_1,\dots,\gamma_M|x_1,\dots,x_N)
 \nonumber \\
&=\prod_{j=1}^N (1-\gamma_M z_j)(1-\gamma_M z_j^{-1})
\Phi^{\rm II}_{M-1,N}(z_1,\dots,z_N|\gamma_1,\dots,\gamma_{M-1}|x_1,\dots,x_N).
\label{typetwofactorization}
\end{align}
\\
(4) The following holds for the case $N=1$, $x_1=M$:
\begin{align}
&\Phi^{\rm II}_{M,1}(z|\gamma_1,\dots,\gamma_M|M) \nonumber \\
&\quad=\frac{z^{1/2}(1-\sqrt{-t}z)}{z^2-1}
\sum_{\tau=\pm 1} \tau
(z^\tau+\sqrt{-t})
\prod_{j=1}^{M-1}\left\{\alpha_j+(1-\alpha_j \gamma_j)z^\tau\right\}
\prod_{j=1}^M (1-\gamma_j z^{-\tau})
.
\end{align}
\end{proposition}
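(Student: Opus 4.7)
The plan is to parallel the proof of Proposition~\ref{ordinarypropertiesfordomainwallboundarypartitionfunction} for the type I wavefunctions, with the sole local change being the replacement of $K^{\rm I}_a(z,t,\alpha_0,\gamma_0)$ by $K^{\rm II}_a(z,t)$ at the reflecting boundary. Property (4) is Proposition~\ref{typetwosimplestproposition} specialized to $x_1=M$, so the substantive work lies in Properties (1)--(3). Throughout, I would read each wavefunction as a sum over admissible edge configurations on the double-row lattice of Figure~\ref{picturewavefunction}, so that both the polynomial-degree and the freezing arguments are purely graphical.

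For Property (1), I would observe that $\gamma_M$ enters only through the $M$-th column, and each of the $2N$ $L$-operator vertices in that column contributes at most one power of $\gamma_M$. The condition $x_N=M$ forces the outgoing configuration at the top of the column, eliminating one free vertex weight, so the total degree in $\gamma_M$ is bounded by $2N-1$. For Property (2), I would invoke the caduceus and fish relations (Appendix~\ref{app1}) for the generalized $L$-operators \eqref{generalizedloperator}, \eqref{anothergeneralizedloperator}, together with the reflection equation \eqref{RE} which $K^{\rm II}_a(z,t)$ satisfies; these are precisely the inputs used by Ivanov in the type I case. The resulting reshuffling yields both the symmetry in $(z_1,\dots,z_N)$ and the invariance under each $z_i \leftrightarrow z_i^{-1}$, once one divides by the prefactor in \eqref{typetwosymmetrywavefunction}. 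The half-integer exponent $j-1/2-N$ and the linear factor $(1-\sqrt{-t}z_j)$ in that prefactor arise from the $-\sqrt{-t}z^{1/2}$ and $z^{-1/2}$ entries of $K^{\rm II}_a(z,t)$, exactly compensating the boundary-induced asymmetry.

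For Property (3), the freezing argument proceeds as in Figures~\ref{izerginkorepinone} and \ref{izerginkorepintwo}. When $x_N=M$ and $\gamma_M$ is specialized to $z_N$, all vertices in the $M$-th column and in the bottom double row at spectral parameter $z_N$ are forced into unique configurations. Multiplying the frozen Boltzmann weights produces the prefactor in \eqref{typetwoordinaryrecursionwavefunction}: the $z_N^{-1/2}$ comes from the diagonal entry $\langle 1|K^{\rm II}_a(z_N,t)|1\rangle_a$, while the products $\prod_{j=1}^{N}(tz_Nz_j+1)\prod_{j=1}^{N-1}(t+z_Nz_j^{-1})$ and $\prod_{j=1}^{M-1}\{(1-\alpha_j\gamma_j)z_N^{-1}+\alpha_j\}\prod_{j=1}^{M-1}(1-\gamma_jz_N)$ collect the $R$-matrix and $L$-operator weights in the frozen region. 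The remaining unfrozen lattice carries a reduced wavefunction of size $(M-1,N-1)$. When $x_N \neq M$, no specialization is needed: absence of an outgoing particle at the top of the $M$-th column already freezes every vertex in that column, yielding the factor $\prod_{j=1}^N(1-\gamma_M z_j)(1-\gamma_M z_j^{-1})$ times the $(M-1,N)$ wavefunction, giving \eqref{typetwofactorization}.

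The only mild obstacle relative to the type I proof is bookkeeping the half-integer $z^{\pm 1/2}$ weights produced by $K^{\rm II}_a$: these contribute to both the normalizing prefactor in (2) and the $z_N^{-1/2}$ factor in the recursion (3), and one must check that they assemble into the stated closed forms and do not obstruct the symmetry argument. No new algebraic mechanism beyond that used in Proposition~\ref{ordinarypropertiesfordomainwallboundarypartitionfunction} is required.
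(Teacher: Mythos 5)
Your proposal is correct and coincides with the paper's own proof, which likewise reduces all four properties to the type I case (Proposition \ref{ordinarypropertiesfordomainwallboundarypartitionfunction}): Property (2) via the caduceus/fish relations of Appendix \ref{app1}, Properties (1) and (3) via the graphical degree-count and freezing arguments of Figures \ref{izerginkorepinone} and \ref{izerginkorepintwo}, and Property (4) as the $x_1=M$ specialization of the one-particle computation, with the $z_N^{-1/2}$ and $(1-\sqrt{-t}z_j)$ modifications traced to the entries of $K^{\rm II}$ exactly as you describe. The only slip is cosmetic: the frozen factors $\prod_{j}(tz_Nz_j+1)\prod_{j}(t+z_Nz_j^{-1})$ are $L$-operator weights in the $M$-th column evaluated at $\gamma_M=z_N$, not $R$-matrix weights, since the wavefunction lattice contains no $R$-vertices.
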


\begin{proof}
Proposition \ref{typetwoordinarypropertiesfordomainwallboundarypartitionfunction} can be proved in the same way with
Proposition \ref{ordinarypropertiesfordomainwallboundarypartitionfunction}.

Property (1) and \eqref{typetwofactorization} in Property (3)
is the same with the corresponding ones in
Proposition \ref{ordinarypropertiesfordomainwallboundarypartitionfunction},
since the difference of the $K$-matrices one uses
for the type I and type II wavefunctions do not affect
the argument to show these properties.

Property (2) can be proved in the same way with
Brubaker-Bump-Chinta-Gunnells \cite{BBCG}, which they used the arguments by
Ivanov \cite{Iv,Ivthesis} based on the caduceus relations and the
fish relations (see Appendix \ref{app1} for details).
The denominator in \eqref{typetwosymmetrywavefunction}
is different from that in \eqref{symmetrywavefunction}.
This difference comes from the fact that for type II wavefunctions,
we use the $K$-matrix \eqref{generalizedkmatrixsecond} instead of
\eqref{generalizedkmatrix} at the boundary.

The difference of the $K$-matrices are also reflected in
the difference between
\eqref{typetwoordinaryrecursionwavefunction} in Property (3)
and Property (4) in
Proposition \ref{ordinarypropertiesfordomainwallboundarypartitionfunction}
and
Proposition
\ref{typetwoordinarypropertiesfordomainwallboundarypartitionfunction}.

\end{proof}

We also list below the properties
for the type II dual wavefunctions.

\begin{proposition} \label{typetwodualproposition}
The type II dual wavefunctions
$\overline{\Phi}^{\rm II}_{M,N}(z_1,\dots,z_N|\gamma_1,\dots,\gamma_M|\overline{x_1},\dots,\overline{x_N})$
satisfies the following properties. \\
\\
 (1) When $\overline{x_N}=M$, the type II dual wavefunctions
$\overline{\Phi}^{\rm II}_{M,N}(z_1,\dots,z_N|\gamma_1,\dots,\gamma_M|\overline{x_1},\dots,\overline{x_N})$
is a polynomial of degree $2N-1$ in $\gamma_M$.
\\
 (2) The following form
\begin{align}
\frac{\overline{\Phi}^{\rm II}_{M,N}(z_1,\dots,z_N|\gamma_1,\dots,\gamma_M|x_1,\dots,x_N)}
{
\prod_{j=1}^N z_j^{j-1/2-N} (1-\sqrt{-t}z_j) \prod_{1 \le j < k \le N}
(1+tz_j z_k) (1+tz_j z_k^{-1})
},
\label{typetwodualsymmetrywavefunction}
\end{align}
is symmetric with respect to $z_1,\dots,z_N$,
and is invariant under the exchange $z_i \longleftrightarrow z_i^{-1}$
for
$i=1,\dots,N$.
\\
(3) The following recursive relations between the
type II dual wavefunctions hold if $\overline{x_N}=M$:
\begin{align}
&\overline{\Phi}^{\rm II}_{M,N}(z_N,\dots,z_1|\gamma_1,\dots,\gamma_M|\overline{x_1},\dots,\overline{x_N})
|_{\gamma_M=-t^{-1}z_N^{-1}}
\nonumber \\
=&-\sqrt{-t}z_N^{1/2} \prod_{j=1}^{N} \Bigg(1+ \frac{1}{t z_N z_j} \Bigg)
\prod_{j=1}^{N-1} \Bigg(1+\frac{z_j}{tz_N} \Bigg)
\prod_{j=1}^{M-1} (t(1-\alpha_j \gamma_j)z_N-\alpha_j)
\prod_{j=1}^{M-1} (t+\gamma_j z_N^{-1})
\nonumber \\
&\times \overline{\Phi}^{\rm II}_{M-1,N-1}(z_{N-1},\dots,z_{1}|\gamma_1,\dots,\gamma_{M-1}|\overline{x_1},\dots,\overline{x_{N-1}})
. 
\label{typetworemark}
\end{align}

If $\overline{x_N} \neq M$, the following factorizations hold for the
type II dual wavefunctions:
\begin{align}
&\overline{\Phi}_{M,N}^{\rm II}
(z_1,\dots,z_N|\gamma_1,\dots,\gamma_M|\overline{x_1},\dots,\overline{x_N})
 \nonumber \\
=&\prod_{j=1}^N (1+t\gamma_M z_j)(t+\gamma_M z_j^{-1})
\overline{\Phi}^{\rm II}_{M-1,N}(z_1,\dots,z_N|\gamma_1,\dots,\gamma_{M-1}|\overline{x_1},\dots,\overline{x_N}).
\end{align}
\\
(4) The following holds for the case $N=1$, $\overline{x_1}=M$:
\begin{align}
&\overline{\Phi}^{\rm II}_{M,1}(z|\gamma_1,\dots,\gamma_M|M)
\nonumber \\
=&\frac{t^{M}z^{1/2}(1-\sqrt{-t}z)}{t^2z^2-1}
\sum_{\tau=\pm 1} \tau ((tz)^\tau-\sqrt{-t}) \nonumber \\
&\times \prod_{j=1}^{M-1}(-\alpha_j+(1-\alpha_j \gamma_j)(tz)^\tau)
\prod_{j=1}^M (1+\gamma_j (tz)^{-\tau}).
\end{align}
\end{proposition}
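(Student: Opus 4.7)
The plan is to adapt, item by item, the arguments already used for Proposition \ref{dualproposition} (the type I dual) and Proposition \ref{typetwoordinarypropertiesfordomainwallboundarypartitionfunction} (the type II wavefunction). The only structural changes in going from those results to the present one are: (i) we work with the hole-type boundary states $\langle 1^M |$ and $|\overline{x_1}\cdots\overline{x_N}\rangle$ instead of the particle-type boundary, and (ii) we use the type II $K$-matrix $K^{\rm II}_a(z,t)$ instead of $K^{\rm I}_a$. These two changes affect only the scalar prefactors pulled out in \eqref{typetwodualsymmetrywavefunction} and \eqref{typetworemark}, and the value of $\gamma_M$ at which the Izergin-Korepin specialisation is made; the overall strategy of extracting uniqueness-determining properties remains intact.

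I would first dispose of Properties (1), (2) and (4). Property (1) follows from a direct degree count on the graphical representation: the parameter $\gamma_M$ appears only in the type $\Gamma$ and type $\Delta$ $L$-operators of the $M$-th column, each factor being linear in $\gamma_M$, and the hole boundary condition $\overline{x_N}=M$ pins one of these vertices to a specific configuration, bringing the total degree down to $2N-1$, exactly as in Proposition \ref{dualproposition}(1). Property (4) is nothing but the restriction $\overline{x_1}=M$ of the dual one-particle formula established in the proposition immediately following Proposition \ref{typetwosimplestproposition}, so no additional work is needed. Property (2) is obtained by the caduceus-plus-fish-relation argument of Ivanov \cite{Iv,Ivthesis} as adapted by Brubaker-Bump-Chinta-Gunnells \cite{BBCG}: since the local identities among the $L$-operators \eqref{generalizedloperator}, \eqref{anothergeneralizedloperator} and the $K$-matrix \eqref{generalizedkmatrixsecond} used in Appendix \ref{app1} remain valid when both auxiliary-space boundaries carry the state $|1\rangle$, the same manipulations apply verbatim, and the scalar factor one ends up dividing out is precisely the denominator in \eqref{typetwodualsymmetrywavefunction}.

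The main content is Property (3). For the factorisation when $\overline{x_N}\neq M$, the vertices in the $M$-th column are frozen on three sides by $\langle 1^M|$, $|\overline{x_1}\cdots\overline{x_N}\rangle$ and the external auxiliary legs, so the whole column collapses to the scalar $\prod_{j=1}^N (1+t\gamma_M z_j)(t+\gamma_M z_j^{-1})$ times the corresponding dual wavefunction with $M$ replaced by $M-1$. For $\overline{x_N}=M$ I would specialise $\gamma_M=-t^{-1}z_N^{-1}$: at this value the entry $t\gamma_M z_N+1$ of $\widetilde L_{aM}(z_N,t,\alpha_M,\gamma_M)$ vanishes, which, combined with the hole boundary at site $M$, forces a unique configuration along the bottom double row and along the $M$-th column. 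Collecting the product of the Boltzmann weights of the frozen vertices together with the contribution of $K^{\rm II}_a(z_N,t)$ at the reflecting edge yields the explicit prefactor stated in \eqref{typetworemark}, while the unfrozen part of the lattice is the dual wavefunction on an $(M-1)\times(N-1)$ system in the variables $z_{N-1},\dots,z_1$, in that order.

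The step that will require the most care is the identification of the frozen pattern in the dual setting. Because the auxiliary-space boundary $\langle 1^M|$ exchanges the roles of empty and occupied lines compared with the wavefunction case, one has to re-derive the correct specialisation value $\gamma_M=-t^{-1}z_N^{-1}$ (rather than $z_N$ as in Proposition \ref{typetwoordinarypropertiesfordomainwallboundarypartitionfunction}) and carefully track which entries of $\widetilde L$, $L$ and $K^{\rm II}$ vanish at this point. Once this bookkeeping is done, multiplying the surviving Boltzmann weights produces exactly the prefactor in \eqref{typetworemark}, including the characteristic factor $-\sqrt{-t}z_N^{1/2}$ that originates from the type II boundary reflection.
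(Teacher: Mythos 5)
Your proposal follows essentially the same route as the paper: Properties (1) and the factorization in (3) by freezing arguments on the graphical representation, Property (2) by the caduceus/fish-relation machinery of Appendix A (which applies unchanged to the dual boundary states), Property (4) as the $\overline{x_1}=M$ specialization of the one-hole formula, and the recursion in (3) by the Izergin–Korepin specialization at $\gamma_M=-t^{-1}z_N^{-1}$, where the entry $t\gamma_M z_N+1$ of $\widetilde{L}_{aM}$ vanishes and the $M$-th column together with the $z_N$-double row (the one adjacent to $\langle 1^M|$, given the reversed ordering $z_N,\dots,z_1$) freezes. This matches the paper's (largely implicit) proof, which simply transfers the arguments of Propositions \ref{ordinarypropertiesfordomainwallboundarypartitionfunction} and \ref{dualproposition} with the type II $K$-matrix contribution $-\sqrt{-t}z_N^{1/2}$ inserted.
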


\subsection{Generalized Bump-Friedberg-Hoffstein Whittaker functions}
We first introduce the following symmetric functions
which generalizes the Whittaker functions introduced by
Bump, Friedberg and Hoffstein \cite{BFH}.

\begin{definition}
We define two generalized Bump-Friedberg-Hoffstein
Whittaker functions to be the
following determinants:
\begin{align}
o^\pm_\lambda(\{ z \}_N|\{ \alpha \}|\{ \gamma \}|t)
=\frac{H^\pm_{\lambda+\delta}(\{ z \}_N|\{ \alpha \}|\{ 
\gamma \}|t)}
{\mathrm{det}_N(z_k^{N-j+1}-z_k^{-N+j-1})}. \label{whittaker}
\end{align}
Here, $\{ z \}_N=\{z_1,\dots,z_N \}$ is a set of symmetric variables,
$\{ \alpha \}$ and $\{ \gamma \}$ are
sets of variables
$\{ \alpha \}=\{\alpha_1,\dots,\alpha_M \}$
and
$\{ \gamma \}=\{\gamma_1,\dots,\gamma_M \}$,
$\lambda$ denotes a Young diagram
$\lambda=(\lambda_1,\lambda_2,\dots,\lambda_N)$
with weakly decreasing non-negative integers
$\lambda_1 \ge \lambda_2 \ge \cdots \ge \lambda_N \ge 0$,
and $\delta=(N-1,N-2,\dots,0)$.
$H^\pm_{\mu}(\{ z \}_N|\{ \alpha \}|\{ \gamma \}|t)$
are $N \times N$ determinants
\begin{align}
H^\pm_{\mu}(\{ z \}_N|\{ \alpha \}|\{ \gamma \}|t)
=\mathrm{det}_N
(
h^\pm_{\mu_j}(z_k|\{ \alpha \}|\{ \gamma \}|t)
-
h^\pm_{\mu_j}(z_k^{-1}|\{ \alpha \}|\{ \gamma \}|t)
),
\end{align}
where
\begin{align}
h^\pm_\mu(z|\{ \alpha \}|\{ \gamma \}|t)
=(z \pm \sqrt{-t})
\prod_{j=1}^\mu(\alpha_j+(1-\alpha_j \gamma_j)z)
\prod_{j=\mu+2}^M(1-\gamma_j z)
\prod_{j=1}^M(1-\gamma_jz^{-1}).
\end{align}
\end{definition}
The generalized Whittaker functions \eqref{whittaker} are
symmetric with respect to the variables $z_1,\dots,z_N$,
and are invariant under the exchange $z_i \longleftrightarrow z_i^{-1}$
for $i=1,\dots,N$.
If one sets $\alpha_j=\gamma_j=0$ ($j=1,\dots,M$),
the generalized Whittaker functions
reduce to the Whittaker functions introduced by
Bump, Friedberg and Hoffstein \cite{BFH}.

\begin{theorem} \label{typetwomaintheoremstatement}
The type II wavefunctions
$\Phi^{\rm II}_{M,N}(z_1,\dots,z_N|\gamma_1,\dots,\gamma_M|x_1,\dots,x_N)$
are explicitly expressed
using the generalized Bump-Friedberg-Hoffstein Whittaker functions as
\begin{align}
&
\Phi^{\rm II}_{M,N}(z_1,\dots,z_N|\gamma_1,\dots,\gamma_M|x_1,\dots,x_N)
\nonumber \\
=&\prod_{j=1}^N z_j^{j-1/2-N}(1-\sqrt{-t}z_j)
\prod_{1 \le j<k \le N}(1+tz_j z_k)(1+tz_j z_k^{-1})
o^{+}_\lambda(\{ z \}_N|\{ \alpha \}|\{ \gamma \}|t),
\label{typetwomaintheorem}
\end{align}
under the relation $\lambda_j=x_{N-j+1}-N+j-1$, $j=1,\dots,N$.

The type II dual wavefunctions
$\overline{\Phi}^{\rm II}_{M,N}(z_1,\dots,z_N|\gamma_1,\dots,\gamma_M|\overline{x_1},\dots,\overline{x_N})$
are explicitly expressed
using the generalized Bump-Friedberg-Hoffstein Whittaker functions as
\begin{align}
&
\overline{\Phi}^{\rm II}_{M,N}(z_1,\dots,z_N|\gamma_1,\dots,\gamma_M|\overline{x_1},\dots,\overline{x_N})
\nonumber \\
=&t^{N(M-N)}
\prod_{j=1}^N z_j^{j-1/2-N}(1-\sqrt{-t}z_j)
\prod_{1 \le j<k \le N}(1+tz_j z_k)(1+tz_j z_k^{-1})
o^{-}_{\overline{\lambda}} ( \{ tz \}_N |\{-\alpha \}|\{-\gamma \}|t),
\label{typetwodualmaintheorem}
\end{align}
under the relation
$\overline{\lambda_j}=\overline{x_{N-j+1}}-N+j-1$, $j=1,\dots,N$,
and the symmetric variables are
$\displaystyle \{ tz \}_N=
\{ tz_1,\dots,tz_N \}$.
Moreover, the signs of the parameters of the generalized symplectic
Schur functions in the right hand side of
\eqref{dualmaintheorem} are now
inverted simultaneously: $\{-\alpha \}
=\{-\alpha_1,\dots,-\alpha_M \}$
and $\{-\gamma \}
=\{-\gamma_1,\dots,-\gamma_M \}$.
\end{theorem}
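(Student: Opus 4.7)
The plan is to follow exactly the Izergin--Korepin strategy used for the type I case in the proof of Theorem \ref{maintheoremstatement}: show that the right-hand sides of \eqref{typetwomaintheorem} and \eqref{typetwodualmaintheorem} satisfy all four characterizing properties listed in Proposition \ref{typetwoordinarypropertiesfordomainwallboundarypartitionfunction} (resp.\ Proposition \ref{typetwodualproposition}), and then invoke the fact that these properties uniquely determine a function of $\gamma_M$ of degree at most $2N-1$. First I would expand the determinant $H^{+}_{\lambda+\delta}$ using the Leibniz formula and the sign-sum over the involution $z \mapsto z^{-1}$, and cancel $\det_N(z_k^{N-j+1}-z_k^{-N+j-1})$ against the denominator factor \eqref{factorization}, exactly as in \eqref{clearrepresentaitonforproof}, to obtain a closed-form expression
\begin{align*}
F^{\rm II}_{M,N} &= \frac{\prod_{j=1}^N z_j^{j-1/2-N}(1-\sqrt{-t}z_j)\prod_{1\le j<k\le N}(1+tz_jz_k)(1+tz_jz_k^{-1})}{(-1)^N\prod_{j=1}^N z_j^{j-1-N}(1-z_j^2)\prod_{1\le j<k\le N}(1-z_jz_k)(-1+z_jz_k^{-1})} \\
&\quad \times \sum_{\sigma\in S_N}\sum_{\tau\in\{\pm 1\}^N}(-1)^{\sigma}(-1)^{|\tau|}\prod_{j=1}^{N}\bigl(z_{\sigma(j)}^{\tau_{\sigma(j)}}+\sqrt{-t}\bigr)\cdot(\text{same product as in \eqref{clearrepresentaitonforproof}})
\end{align*}
so that the combinatorial structure is identical to the symplectic case except for the extra single-variable factor $(z^\tau+\sqrt{-t})$ in each summand.

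Next I would verify the four properties in order. Property (2) (symmetry and $z_i \leftrightarrow z_i^{-1}$ invariance) is immediate from the determinantal definition \eqref{whittaker}, once one observes that the prefactor $\prod_j z_j^{j-1/2-N}(1-\sqrt{-t}z_j)\prod_{j<k}(1+tz_jz_k)(1+tz_jz_k^{-1})$ has been isolated to absorb the factor $(z^\tau+\sqrt{-t})$ into a symmetric combination. Property (4) with $x_1=M$ is a direct substitution $N=1$ in the closed form, matching Proposition \ref{typetwosimplestproposition}. Property (1) is read off the expansion: when $x_N=M$ the two products containing $\gamma_M$ become a single product over $j=1,\dots,N-1$ of $(1-\gamma_M z_{\sigma(j)}^{\tau_{\sigma(j)}})$ and a product over $j=1,\dots,N$ of $(1-\gamma_M z_{\sigma(j)}^{-\tau_{\sigma(j)}})$, giving exactly $2N-1$ as the degree.

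The main part of the argument is Property (3). For $x_N\neq M$, both $\gamma_M$-dependent products factor out using the identity
\begin{align*}
\prod_{j=1}^N(1-\gamma_M z_{\sigma(j)}^{\tau_{\sigma(j)}})(1-\gamma_M z_{\sigma(j)}^{-\tau_{\sigma(j)}}) = \prod_{j=1}^N(1-\gamma_M z_j)(1-\gamma_M z_j^{-1})
\end{align*}
outside the $(\sigma,\tau)$-sum, reproducing \eqref{typetwofactorization} verbatim. For $x_N=M$ and $\gamma_M=z_N$, only summands with $\sigma(N)=N$ and $\tau_N=-1$ survive the specialization of $\prod_j(1-\gamma_M z_{\sigma(j)}^{\tau_{\sigma(j)}})$ (exactly as in the type I proof), and the remaining $\sigma\in S_{N-1}$, $\tau\in\{\pm 1\}^{N-1}$ sum becomes $F^{\rm II}_{M-1,N-1}$ after one cancels the surviving frozen factors against the changed prefactor. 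The only substantive new point compared to the type I case is bookkeeping of the extra $(z^\tau+\sqrt{-t})$ factors: the surviving $j=N$ term contributes $(z_N^{-1}+\sqrt{-t})$, which combines with the $(1-\sqrt{-t}z_N)(1-z_N^2)^{-1}$ piece of the new prefactor ratio to produce the expected $z_N^{-1/2}$ in \eqref{typetwoordinaryrecursionwavefunction}; the remaining $(z_{\sigma(j)}^{\tau_{\sigma(j)}}+\sqrt{-t})$ factors for $j<N$ reassemble into the $o^{+}_{\lambda'}$ of the reduced system.

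The hard part, and the only genuinely new algebra, is checking that the extra $\sqrt{-t}$-linear factor in $h^{+}_\mu$ combines correctly with the dimensional and sign conventions so that the reduction delivers the explicit prefactor $z_N^{-1/2}\prod_{j=1}^N(tz_Nz_j+1)\prod_{j=1}^{N-1}(t+z_Nz_j^{-1})$ of \eqref{typetwoordinaryrecursionwavefunction} (and its dual analogue \eqref{typetworemark}, where one specializes $\gamma_M=-t^{-1}z_N^{-1}$ and uses $h^{-}$ in place of $h^{+}$ with signs flipped on $\alpha,\gamma$). Once this bookkeeping is verified, the dual statement \eqref{typetwodualmaintheorem} follows by the same argument applied to the expansion $h^{-}_\mu$ and Proposition \ref{typetwodualproposition}; the overall factor $t^{N(M-N)}$ and the inversion $\{\alpha\}\to\{-\alpha\}$, $\{\gamma\}\to\{-\gamma\}$ arise precisely because the dual wavefunction is governed by the rescaled parameter $tz$ with reversed spectral-parameter ordering, exactly as in the type I proof deferred to the Appendix.
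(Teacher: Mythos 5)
Your proposal is correct and follows essentially the same route as the paper: rewrite the right-hand side via the Leibniz expansion and the factorization \eqref{factorization} to get the closed form \eqref{typetwoclearrepresentaitonforproof}, verify the four Izergin--Korepin properties of Proposition \ref{typetwoordinarypropertiesfordomainwallboundarypartitionfunction} (with the key step being the survival of only the $\sigma(N)=N$, $\tau_N=-1$ summands at $\gamma_M=z_N$ and the recombination $(1-\sqrt{-t}z_N)(z_N^{-1}+\sqrt{-t})=z_N^{-1}+tz_N$ producing the $z_N^{-1/2}(1+tz_N^2)$ prefactor), and handle the dual case with $h^-$, the rescaling $z\mapsto tz$, and the sign-flipped parameters exactly as in the paper's Appendix B. No gaps.
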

The correspondence
\eqref{typetwomaintheorem} 
is a generalization of the conjecture given
in the work by Brubaker-Bump-Chinta-Gunnells
\cite{BBCG} for the case $\alpha_j=\gamma_j=0$, $j=1,\dots,M$,
in which they conjectured the relation between the wavefunctions of
type $B$ ice models and the Whittaker functions by Bump-Friedberg-Hoffstein
\cite{BFH}.

\begin{proof}
The correspondence \eqref{typetwomaintheorem}
can be proved in the same way with proving
\eqref{maintheorem}.
We show that the right hand side of
\eqref{typetwomaintheorem} satisfies all the Properties
in Proposition \ref{typetwoordinarypropertiesfordomainwallboundarypartitionfunction}.

Let us illustrate the proof of Property (3)
for the case $x_N=M$, which is the hardest thing to check.

Again using \eqref{factorization}, 
the definition of the determinant and the correspondence
between the positions of particles $\{ x \}$ and
the Young diagrams $\{ \lambda \}$, we 
rewrite the right hand side of \eqref{typetwomaintheorem} as
\begin{align}
&F^{\rm II}_{M,N}(z_1,\dots,z_N|\gamma_1,\dots,\gamma_M|x_1,\dots,x_N)
\nonumber \\
&\quad :=
\frac{
\prod_{j=1}^N z_j^{j-1/2-N}(1-\sqrt{-t}z_j)
\prod_{1 \le j<k \le N}(1+tz_j z_k)(1+tz_j z_k^{-1})
}{
(-1)^N \prod_{j=1}^N z_j^{j-1-N}(1-z_j^2)
\prod_{1 \le j<k \le N}(1-z_j z_k)(-1+z_j z_k^{-1})
} \nonumber \\
&\qquad \times\sum_{\sigma \in S_N}
\sum_{\tau_1=\pm 1,\dots,\tau_N=\pm 1} (-1)^\sigma (-1)^{|\tau|}
\prod_{j=1}^N \left(z_{\sigma(j)}^{\tau_{\sigma(j)}}+\sqrt{-t} \right)
\prod_{j=1}^N \prod_{k=1}^{x_j-1}
\left(\alpha_k+(1-\alpha_k \gamma_k)z_{\sigma(j)}^{\tau_{\sigma(j)}}
\right) \nonumber \\
&\qquad \times
\prod_{j=1}^N \prod_{k=x_j+1}^{M}
\left(1-\gamma_k z_{\sigma(j)}^{\tau_{\sigma(j)}}\right)
\prod_{j=1}^N \prod_{k=1}^{M}
\left(1-\gamma_k z_{\sigma(j)}^{-\tau_{\sigma(j)}}\right), 
\label{typetwoclearrepresentaitonforproof}
\end{align}
where $|\tau|$ denotes the number of $\tau_j$s
satisfying $\tau_j=-1$.

When $x_N=M$,
only the summands satisfying $\sigma(N)=N$, $\tau_N=-1$ in 
\eqref{typetwoclearrepresentaitonforproof} survive
after specializing $\gamma_M$ to $\gamma_M=z_N$.
Then we find that
$F^{\rm II}_{M,N}(z_1,\dots,z_N|\gamma_1,\dots,\gamma_M|x_1,\dots,x_N)
|_{\gamma_M=z_N}$ can be rewritten as
\begin{align}
F^{\rm II}_{M,N}&(z_1,\dots,z_N|\gamma_1,\dots,\gamma_M|x_1,\dots,x_N)
|_{\gamma_M=z_N}
\nonumber \\
&=-\frac{z_N^{1/2}(1-\sqrt{-t}z_N)}{1-z_N^2}
\frac{\prod_{j=1}^{N-1}(1+tz_j z_N)(1+tz_j z_N^{-1})}
{\prod_{j=1}^{N-1}(1-z_j z_N) (-1+z_j z_N^{-1})} \nonumber \\
&\quad \times
\frac{
\prod_{j=1}^{N-1} z_j^{1/2} (1-\sqrt{-t}z_j)
\prod_{1 \le j<k \le N-1}(1+tz_j z_k)(1+tz_j z_k^{-1})
}{
(-1)^{N-1} \prod_{j=1}^{N-1} (1-z_j^2)
\prod_{1 \le j<k \le N-1}(1-z_j z_k)(-1+z_j z_k^{-1})
}
\nonumber \\
&\quad \times \sum_{\sigma \in S_{N-1}}
\sum_{\tau_1=\pm 1,\dots,\tau_{N-1}=\pm 1}
(-1)(-1)^\sigma (-1)^{|\tau|} 
(z_N^{-1}+\sqrt{-t})
\prod_{j=1}^{N-1} \Bigg( z_{\sigma(j)}^{\tau_{\sigma(j)}}
+\sqrt{-t} \Bigg) \nonumber \\
&\quad \times \prod_{j=1}^{N-1} \prod_{k=1}^{x_j-1}
\left\{\alpha_k+(1-\alpha_k \gamma_k)z_{\sigma(j)}^{\tau_{\sigma(j)}}
\right\}
\prod_{k=1}^{M-1}
\left\{\alpha_k+(1-\alpha_k \gamma_k)z_N^{-1}
\right\}
\nonumber \\
&\quad \times
\prod_{j=1}^{N-1} \prod_{k=x_j+1}^{M-1}
\left(1-\gamma_k z_{\sigma(j)}^{\tau_{\sigma(j)}}\right)
\prod_{j=1}^{N-1}
\left(1-z_N z_{\sigma(j)}^{\tau_{\sigma(j)}}\right)
\nonumber \\
&\quad \times
(1-z_N^2)
\prod_{k=1}^{M-1}
(1-\gamma_k z_N )
\prod_{j=1}^{N-1}
\left(1-z_N z_{\sigma(j)}^{-\tau_{\sigma(j)}}\right)
\prod_{j=1}^{N-1} \prod_{k=1}^{M-1}
\left(1-\gamma_k z_{\sigma(j)}^{-\tau_{\sigma(j)}}\right). 
\label{interwhittaker}
\end{align}
We again use the obvious identity
\begin{align}
\prod_{j=1}^{N-1}
\left(1-z_N z_{\sigma(j)}^{\tau_{\sigma(j)}}\right)
\left(1-z_N z_{\sigma(j)}^{-\tau_{\sigma(j)}}\right)
=
\prod_{j=1}^{N-1}
(1-z_N z_j)
(1-z_N z_{j}^{-1}),
\end{align}
and simplify
\eqref{interwhittaker} as
\begin{align}
&F^{\rm II}_{M,N}(z_1,\dots,z_N|\gamma_1,\dots,\gamma_M|x_1,\dots,x_N)
|_{\gamma_M=z_N}
\nonumber \\
&\quad =z_N^{-1/2} \prod_{j=1}^{N}(t z_N z_j+1)
\prod_{j=1}^{N-1} (t+z_N z_j^{-1})
\prod_{j=1}^{M-1} ((1-\alpha_j \gamma_j)z_N^{-1}+\alpha_j)
\prod_{j=1}^{M-1} (1-\gamma_j z_N)
\nonumber \\
&\qquad \times \frac{
\prod_{j=1}^{N-1} z_j^{j-1/2-(N-1)}(1-\sqrt{-t}z_j)
\prod_{1 \le j<k \le N-1}(1+tz_j z_k)(1+tz_j z_k^{-1})
}{
(-1)^{N-1} \prod_{j=1}^{N-1} z_j^{j-1-(N-1)}(1-z_j^2)
\prod_{1 \le j<k \le N-1}(1-z_j z_k)(-1+z_j z_k^{-1})
} \nonumber \\
&\qquad \times\sum_{\sigma \in S_{N-1}}
\sum_{\tau_1=\pm 1,\dots,\tau_{N-1}=\pm 1} (-1)^\sigma (-1)^{|\tau|}
\prod_{j=1}^{N-1} 
\left(z_{\sigma(j)}^{\tau_{\sigma(j)}}+\sqrt{-t} \right)\nn \\
&\qquad \times \prod_{j=1}^{N-1} \prod_{k=1}^{x_j-1}
\left\{\alpha_k+(1-\alpha_k \gamma_k)z_{\sigma(j)}^{\tau_{\sigma(j)}}
\right\}
\prod_{j=1}^{N-1} \prod_{k=x_j+1}^{M-1}
\left(1-\gamma_k z_{\sigma(j)}^{\tau_{\sigma(j)}}\right)
\prod_{j=1}^{N-1} \prod_{k=1}^{M-1}
\left(1-\gamma_k z_{\sigma(j)}^{-\tau_{\sigma(j)}}\right)
\nonumber \\
&\quad =z_N^{-1/2} \prod_{j=1}^{N}(t z_N z_j+1)
\prod_{j=1}^{N-1} (t+z_N z_j^{-1})
\prod_{j=1}^{M-1} ((1-\alpha_j \gamma_j)z_N^{-1}+\alpha_j)
\prod_{j=1}^{M-1} (1-\gamma_j z_N)
\nonumber \\
&\qquad \times F^{\rm II}_{M-1,N-1}(z_1,\dots,z_{N-1}|\gamma_1,\dots,\gamma_{M-1}|x_1,\dots,x_{N-1}),
\end{align}
hence we find that $F^{\rm II}_{M,N}(z_1,\dots,z_N|\gamma_1,\dots,\gamma_M|x_1,\dots,x_N)$
satisfies Property (3) for the case $x_N=M$
in Proposition \ref{typetwoordinarypropertiesfordomainwallboundarypartitionfunction}.

\end{proof}

\section{Dual Cauchy formula for generalized Bump-Friedberg-Hoffstein
Whittaker functions}
We can derive the dual Cauchy formula for the generalized
Whittaker functions in the same with deriving the one
for the generalized symplectic Schur functions.
We now deal the type II domain wall boundary partition functions.
\begin{align}
Z^{\rm II}_M(z_1,\dots,z_M|\gamma_1,\dots,\gamma_M)
=\Phi^{\rm II}_{M,M}(z_1,\dots,z_M|\gamma_1,\dots,\gamma_M|1,\dots,M).
\label{typetwospecialcase}
\end{align}

One again first shows the following factorization formula
for the type II domain wall boundary partition functions.

\begin{theorem} \label{typetwoinhomogeneousexpression}
The type II domain wall boundary partition functions
$Z^{\rm II}_M(z_1,\dots,z_M|\gamma_1,\dots,\gamma_M)$ have
the following factorized form:

\begin{align}
&Z^{\rm II}_M(z_1,\dots,z_M|\gamma_1,\dots,\gamma_M)
=
\prod_{j=1}^M z_j^{j-1/2-M} (1-\sqrt{-t}z_j)(1+\sqrt{-t}\gamma_j)
\nonumber \\
&\quad \times \prod_{1 \le j<k \le M}(1+tz_j z_k)(1+tz_j z_k^{-1})
\prod_{1 \le j<k \le M} \left\{1+\alpha_j(\gamma_k-\gamma_j)\right\}
\prod_{1 \le j<k \le M} (1-\gamma_j \gamma_k).
\label{typetwodeterminantinhomogeneheous}
\end{align}

\end{theorem}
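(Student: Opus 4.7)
The plan is to mirror the proof of Theorem \ref{inhomogeneousexpression} for the type I case, applying the Izergin-Korepin technique to the type II domain wall boundary partition functions. Since $Z^{\rm II}_M(z_1,\dots,z_M|\gamma_1,\dots,\gamma_M)$ is the specialization $N=M$, $x_j=j$ of the type II wavefunction \eqref{typetwospecialcase}, I would first deduce a specialized version of Proposition \ref{typetwoordinarypropertiesfordomainwallboundarypartitionfunction}: namely, $Z^{\rm II}_M$ is a polynomial of degree $2M-1$ in $\gamma_M$; the quotient of $Z^{\rm II}_M$ by $\prod_{j=1}^M z_j^{j-1/2-M}(1-\sqrt{-t}z_j)\prod_{1\le j<k\le M}(1+tz_jz_k)(1+tz_jz_k^{-1})$ is symmetric in the $z_j$'s and invariant under $z_i\leftrightarrow z_i^{-1}$; the recursion
\begin{align}
Z^{\rm II}_M&(z_1,\dots,z_M|\gamma_1,\dots,\gamma_M)\big|_{\gamma_M=z_M} \nonumber \\
&= z_M^{-1/2}\prod_{j=1}^M (tz_M z_j+1)\prod_{j=1}^{M-1}(t+z_M z_j^{-1})\prod_{j=1}^{M-1}\left\{(1-\alpha_j\gamma_j)z_M^{-1}+\alpha_j\right\}\prod_{j=1}^{M-1}(1-\gamma_j z_M) \nonumber \\
&\quad \times Z^{\rm I}_{M-1}(z_1,\dots,z_{M-1}|\gamma_1,\dots,\gamma_{M-1})
\end{align}
holds; and the initial case is $Z^{\rm II}_1(z|\gamma_1)=\frac{z^{1/2}(1-\sqrt{-t}z)}{z^2-1}\sum_{\tau=\pm 1}\tau(z^\tau+\sqrt{-t})(1-\gamma_1 z^{-\tau})$ from Proposition \ref{typetwosimplestproposition}. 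The key subtle point here is that the recursion at $\gamma_M = z_M$ links $Z^{\rm II}_M$ back to $Z^{\rm I}_{M-1}$ (not $Z^{\rm II}_{M-1}$), because once the top boundary vertex and leftmost column freeze, only one $K$-matrix is removed from the reduced picture; this must be extracted carefully from the graphical freezing argument analogous to Figure \ref{izerginkorepinone}.

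Next, I would show that these four properties uniquely determine $Z^{\rm II}_M$. Uniqueness is standard: by Property (2), it suffices to determine the function as a polynomial in $\gamma_M$ of degree at most $2M-1$. The recursion in Property (3) fixes the value at $\gamma_M = z_M$, and by the symmetry/invariance under $z_M\leftrightarrow z_M^{-1}$ and the symmetry among the $z_j$, we actually obtain $2M$ evaluations (namely $\gamma_M = z_j^{\pm 1}$ for $j=1,\dots,M$), which exceeds the degree by one, thereby determining the polynomial uniquely up to an induction step handled by the initial condition in Property (4).

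Then I would verify that the right-hand side of \eqref{typetwodeterminantinhomogeneheous} satisfies all four properties. Polynomiality of degree $2M-1$ in $\gamma_M$ is immediate from the explicit factors involving $\gamma_M$: one factor $(1+\sqrt{-t}\gamma_M)$, $M-1$ factors $(1-\gamma_j\gamma_M)$, and $M-1$ factors $\{1+\alpha_j(\gamma_M-\gamma_j)\}$. Symmetry in the $z_j$ and the $z_i\leftrightarrow z_i^{-1}$ invariance after dividing by the prescribed prefactor are transparent. The initial case $M=1$ is a direct check after expanding the sum over $\tau=\pm 1$. The recursion step is the main computational obligation: setting $\gamma_M=z_M$ in \eqref{typetwodeterminantinhomogeneheous} and comparing with the type I formula \eqref{determinantinhomogeneheous} for $Z^{\rm I}_{M-1}$, one must match the leftover factors against the explicit prefactor appearing in the recursion.

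The hard part will be the recursion verification: carefully identifying how the product $\prod_{1\le j<k\le M}(1+tz_jz_k)(1+tz_jz_k^{-1})$ decomposes under $\gamma_M = z_M$ into the $\prod_{j=1}^M(tz_Mz_j+1)\prod_{j=1}^{M-1}(t+z_Mz_j^{-1})$ prefactor times the $M-1$ version, and how $(1+\sqrt{-t}\gamma_M)$ combined with the $z_M^{1/2-M}(1-\sqrt{-t}z_M)$ power reproduces the $z_M^{-1/2}$ factor times the $(1-\sqrt{-t}z_M)$ that belongs to the type I expression (which has no such $\sqrt{-t}$ factors at all); tracking signs and the conversion from the type II prefactor into the type I prefactor is the one nontrivial bookkeeping task. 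Once these match, the proof closes by the uniqueness established above, completing Theorem \ref{typetwoinhomogeneousexpression}.
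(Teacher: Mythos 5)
Your overall strategy coincides with the paper's: specialize the Izergin--Korepin properties of the type II wavefunctions to the domain wall case, note that a polynomial of degree $2M-1$ in $\gamma_M$ is pinned down by the $2M$ evaluations $\gamma_M=z_j^{\pm1}$ obtained from the recursion plus the symmetries, and then verify the candidate product formula against all four properties. However, there is one genuine error that would derail the induction: the recursion at $\gamma_M=z_M$ relates $Z^{\rm II}_M$ to $Z^{\rm II}_{M-1}$, \emph{not} to $Z^{\rm I}_{M-1}$. When the leftmost column and the bottom double-row freeze, the type II $K$-matrix of the bottom row is consumed into the prefactor (contributing the $z_M^{-1/2}$), but the remaining $M-1$ double-rows still carry their own type II $K$-matrices, so the reduced partition function is again a type II object. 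Your stated justification (``only one $K$-matrix is removed from the reduced picture'') actually supports this conclusion rather than the one you draw: removing one type II $K$-matrix from a stack of type II $K$-matrices leaves a type II configuration; nothing converts the surviving rows to type I.

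Concretely, the verification step you flag as ``the one nontrivial bookkeeping task'' cannot be made to work as you describe it. Setting $\gamma_M=z_M$ in \eqref{typetwodeterminantinhomogeneheous} and using $(1-\sqrt{-t}z_M)(1+\sqrt{-t}z_M)=1+tz_M^2$, $1+\alpha_j(z_M-\gamma_j)=z_M\{(1-\alpha_j\gamma_j)z_M^{-1}+\alpha_j\}$ and $1+tz_jz_M^{-1}=z_jz_M^{-1}(t+z_Mz_j^{-1})$, one finds exactly the prefactor $z_M^{-1/2}\prod_{j=1}^{M}(tz_Mz_j+1)\prod_{j=1}^{M-1}(t+z_Mz_j^{-1})\prod_{j=1}^{M-1}\{(1-\alpha_j\gamma_j)z_M^{-1}+\alpha_j\}\prod_{j=1}^{M-1}(1-\gamma_jz_M)$ times the \emph{type II} formula \eqref{typetwodeterminantinhomogeneheous} with $M$ replaced by $M-1$ --- the leftover factors $(1-\sqrt{-t}z_j)(1+\sqrt{-t}\gamma_j)$ for $j\le M-1$ survive intact and the $\alpha,\gamma$-products still start at $j=1$. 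The type I formula \eqref{determinantinhomogeneheous} instead contains $(1+tz_j^2)$, the half-integer powers are absent, and its products $\prod_{0\le j<k\le M-1}$ involve the boundary parameters $\alpha_0,\gamma_0$, which do not even appear in the type II setup; no prefactor depending only on $z_M,\gamma_M$ can bridge these. Thus the recursion must be taken in the form of Proposition \ref{typetwospecializerginkorepin} (target $Z^{\rm II}_{M-1}$), which also makes the induction self-contained within the type II family. (The misprint $\Phi^{\rm I}_{M-1,N-1}$ in \eqref{typetwoordinaryrecursionwavefunction} notwithstanding, the paper's actual verification in Theorem \ref{typetwomaintheoremstatement} reduces $F^{\rm II}_{M,N}$ to $F^{\rm II}_{M-1,N-1}$.) With this correction your argument goes through and is the paper's argument.
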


\begin{proof}
Since the type II domain wall boundary partition functions
$Z^{\rm II}_M(z_1,\dots,z_M|\gamma_1,\dots,\gamma_M)$
are
special cases of the type II wavefunctions
$\Phi^{\rm II}_{M,M}(z_1,\dots,z_M|\gamma_1,\dots,\gamma_M|1,\dots,M)$
\eqref{typetwospecialcase}, it is enough to
prove the following special case of Proposition
\ref{typetwoordinarypropertiesfordomainwallboundarypartitionfunction}.
\begin{proposition}
\label{typetwospecializerginkorepin}
The type II domain wall boundary partition functions
$Z^{\rm II}_M(z_1,\dots,z_M|\gamma_1,\dots,\gamma_M)$
satisfies the following properties. \\
\\
 (1) The type II domain wall boundary partition functions
$Z^{\rm II}_M(z_1,\dots,z_M|\gamma_1,\dots,\gamma_M)$
is a polynomial of degree $2M-1$ in $\gamma_M$.
\\
 (2) The following form
\begin{align}
\frac{Z^{\rm II}_{M}(z_1,\dots,z_M|\gamma_1,\dots,\gamma_M)}
{
\prod_{j=1}^M z_j^{j-1/2-M} (1-\sqrt{-t}z_j) \prod_{1 \le j < k \le M}
(1+tz_j z_k) (1+tz_j z_k^{-1})
},
\end{align}
is symmetric with respect to $z_1,\dots,z_M$,
and is invariant under the exchange $z_i \longleftrightarrow z_i^{-1}$
for $i=1,\dots,M$.
\\
(3) The following recursive relations between the
type II domain wall boundary partition functions hold:
\begin{align}
Z^{\rm II}_{M}&(z_1,\dots,z_M|\gamma_1,\dots,\gamma_M)
|_{\gamma_M=z_M}
\nonumber \\
&=
z_M^{-1/2}
\prod_{j=1}^{M}(t z_M z_j+1)
\prod_{j=1}^{M-1} (t+z_M z_j^{-1})
\prod_{j=1}^{M-1} ((1-\alpha_j \gamma_j)z_M^{-1}+\alpha_j)
\prod_{j=1}^{M-1} (1-\gamma_j z_M)
\nonumber \\
&\quad \times Z^{\rm II}_{M-1}(z_1,\dots,z_{M-1}|\gamma_1,\dots,\gamma_{M-1}).
\end{align}
\\
(4) The following holds for the case $M=1$:
\begin{align}
Z^{\rm II}_{1}(z|\gamma_1)
=&\frac{z^{1/2}(1-\sqrt{-t}z)}{z^2-1}
\sum_{\tau=\pm 1} \tau
(z^\tau+\sqrt{-t}) (1-\gamma_1 z^{-\tau}).
\end{align}
\end{proposition}
It is easy to see that the right hand side of
\eqref{typetwodeterminantinhomogeneheous} satisfies
all the properties listed in
Proposition \ref{typetwospecializerginkorepin}.

\end{proof}

We now derive the dual Cauchy formula for
the generalized Whittaker functions.

\begin{theorem}
The following dual Cauchy formula holds
for the generalized Whittaker functions
with sets variables $\{ x \}_N=\{x_1,\dots,x_N \}$, $\{ y \}_M=\{y_1,\dots,y_M \}$,
$\{ \alpha \}=\{\alpha_1,\dots,\alpha_{N+M} \}$,
$\{ \gamma \}=\{\gamma_1,\dots,\gamma_{N+M} \}$,
\begin{align}
&\sum_{\lambda \subseteq M^N}o^+_\lambda(\{ x \}_N|\{ \alpha \}|\{
\gamma \}|t)
o^-_{\hat{\lambda}}(\{ y \}_M|\{-\alpha \}|\{-\gamma \}|t) \nonumber \\
&\quad =\prod_{j=1}^M y_j^{-N}
\prod_{j=1}^{M+N} (1+\sqrt{-t}\gamma_j)
\prod_{j=1}^N \prod_{k=1}^M (1+x_j y_k)(1+x_j^{-1} y_k)
\nonumber \\
&\qquad \times \prod_{1 \le j<k \le N+M}
\left\{1+\alpha_j(\gamma_k-\gamma_j)\right\}
\prod_{1 \le j<k \le N+M}(1-\gamma_j \gamma_k),
\label{typetwodualcauchy}
\end{align}
where
$\{ -\alpha \}=\{-\alpha_1,\dots,-\alpha_{N+M} \}$,
$\{ -\gamma \}=\{-\gamma_1,\dots,-\gamma_{N+M} \}$
and
$\hat{\lambda}
=(\hat{\lambda}_1,\dots,\hat{\lambda}_M)$ is the partition
of the Young diagram $\lambda=(\lambda_1,\dots,\lambda_N)$ given by
\begin{align}
\hat{\lambda}_i=|\{j \ | \ \lambda_j  \le M-i \}|.
\end{align}
\end{theorem}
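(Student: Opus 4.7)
The plan is to mirror the type I proof in Section 5, evaluating the type II domain wall boundary partition function $Z^{\rm II}_M(z_1,\dots,z_M|\gamma_1,\dots,\gamma_M)$ in two different ways and equating the results. On one hand, Theorem \ref{typetwoinhomogeneousexpression} already supplies the fully factorized closed form \eqref{typetwodeterminantinhomogeneheous}. On the other hand, one splits the string of type II double-row $B$-operators at position $M-N$ and inserts the completeness relation
\begin{align}
\sum_{\{x\}} |x_1\cdots x_N\rangle \langle x_1\cdots x_N| = \mathrm{Id},
\end{align}
which rewrites $Z^{\rm II}_M$ as a sum of products of a type II wavefunction and a type II dual wavefunction. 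Applying the correspondences of Theorem \ref{typetwomaintheoremstatement} then expresses the sum in terms of $o^+_\lambda$ and $o^-_{\hat{\lambda}}$.

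More concretely, I would first write
\begin{align}
Z^{\rm II}_M &= \sum_{x \sqcup \overline{x} = \{1,\dots,M\}}
\overline{\Phi}^{\rm II}_{M,M-N}(z_1,\dots,z_{M-N}|\gamma_1,\dots,\gamma_M|\overline{x_1},\dots,\overline{x_{M-N}}) \nonumber \\
&\qquad \times \Phi^{\rm II}_{M,N}(z_{M-N+1},\dots,z_M|\gamma_1,\dots,\gamma_M|x_1,\dots,x_N),
\end{align}
using the standard identification of the complement particle configuration $\{\overline{x}\}$ with $|\overline{x_1}\cdots \overline{x_{M-N}}\rangle$ in the dual state. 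The sum over particle configurations $\{x\}$ translates to a sum over Young diagrams $\lambda \subseteq (M-N)^N$, with the complementary set $\{\overline{x}\}$ corresponding to the conjugate-like partition $\hat{\lambda}$. Substituting \eqref{typetwomaintheorem} and \eqref{typetwodualmaintheorem} then yields an expression of the form $\sum_{\lambda \subseteq (M-N)^N}\bigl(\text{prefactors in }z,t\bigr)\, o^-_{\hat{\lambda}}(\{tz_1,\dots,tz_{M-N}\}|\{-\alpha\}|\{-\gamma\}|t)\, o^+_\lambda(\{z_{M-N+1},\dots,z_M\}|\{\alpha\}|\{\gamma\}|t)$.

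Equating this second evaluation with \eqref{typetwodeterminantinhomogeneheous} and cancelling the common factors $\prod_j z_j^{j-1/2-M}(1-\sqrt{-t}z_j)$ together with the appropriate $\prod(1+tz_jz_k)(1+tz_jz_k^{-1})$ products, one reaches the raw identity. After reparametrizing $\{tz_1,\dots,tz_{M-N}\} \mapsto \{y\}_M$ and $\{z_{M-N+1},\dots,z_M\}\mapsto \{x\}_N$ (with a corresponding relabeling of $M-N\mapsto M$ and $N\mapsto N$), formula \eqref{typetwodualcauchy} emerges. Notably, the extra factor $\prod_{j=1}^{M+N}(1+\sqrt{-t}\gamma_j)$ on the right-hand side of \eqref{typetwodualcauchy}, which has no analog in the type I dual Cauchy identity, is inherited directly from the corresponding factor in the factorized form \eqref{typetwodeterminantinhomogeneheous}; its presence is ultimately a consequence of the different boundary $K$-matrix used in the type II construction.

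The main obstacle will be bookkeeping rather than any genuinely new analytic difficulty: one must carefully track the rescalings $z_j \mapsto tz_j$ and the simultaneous sign flips $\alpha_j \mapsto -\alpha_j$, $\gamma_j\mapsto -\gamma_j$ demanded by \eqref{typetwodualmaintheorem}, and verify that the shift conventions in the definitions of $o^+$ and $o^-$ are compatible with the bijection $\lambda \leftrightarrow \{x\}$, $\hat{\lambda}\leftrightarrow \{\overline{x}\}$. Once this dictionary is fixed, the algebraic simplification proceeds in complete parallel with the type I calculation carried out in the proof of the dual Cauchy formula for generalized symplectic Schur functions.
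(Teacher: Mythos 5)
Your proposal is correct and follows essentially the same route as the paper: Theorem \ref{typetwoinhomogeneousexpression} gives the factorized form of $Z^{\rm II}_M$, while inserting the completeness relation and applying the correspondences \eqref{typetwomaintheorem} and \eqref{typetwodualmaintheorem} gives the sum over $\lambda\subseteq (M-N)^N$ of products $o^-_{\hat\lambda}\cdot o^+_\lambda$, and comparing the two evaluations after cancelling common factors and reparametrizing yields \eqref{typetwodualcauchy}. Your observation that the extra factor $\prod_{j}(1+\sqrt{-t}\gamma_j)$ is inherited from the type II factorized domain wall partition function is also consistent with the paper's computation.
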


\begin{proof}
The type II domain wall boundary partition functions
$Z^{\rm II}_M(z_1,\dots,z_M|\gamma_1,\dots,\gamma_M)$
have the factorized form
\eqref{typetwodeterminantinhomogeneheous}
in Theorem \ref{typetwoinhomogeneousexpression} on one hand.

On the other hand, one can evaluate
the domain wall boundary partition functions
by inserting the completeness relation and
using the correspondences between
the wavefunctions, the dual wavefunctions
and the generalized Whittaker functions
\eqref{typetwomaintheorem} and \eqref{typetwodualmaintheorem} as
\begin{align}
Z^{\rm II}_M&(z_1,\dots,z_M|\gamma_1,\dots,\gamma_M) \nonumber \\
&=\sum_{x \sqcup \overline{x}=\{1,2,\dots,M \}} 
\overline{\Phi}_{M,M-N}^{\rm I}
(z_1,\dots,z_{M-N}|\gamma_1,\dots,\gamma_M|\overline{x_1},\dots,
\overline{x_{M-N}}) \nonumber \\
&\quad \times \Phi^{\rm I}_{M,N}
(z_{M-N+1},\dots,z_M|\gamma_1,\dots,\gamma_M|x_1,\dots,x_N),
\nonumber \\
&=\sum_{\lambda \subseteq (M-N)^N}
t^{N(M-N)}
\prod_{j=1}^{M-N} z_j^{j-1/2-M+N}(1-\sqrt{-t}z_j)
\prod_{1 \le j<k \le M-N}(1+tz_j z_k)(1+tz_j z_k^{-1}) \nonumber \\
&\quad \times
o^-_{\overline{\lambda}}(tz_1,\dots,tz_{M-N}|\{-\alpha \}|\{-\gamma \}|t) \nonumber \\
&\quad \times
\prod_{j=M-N+1}^M z_j^{j-1/2-M}(1-\sqrt{-t}z_j)
\prod_{M-N+1 \le j<k \le M}(1+tz_j z_k)(1+tz_j z_k^{-1})
\nonumber \\
&\quad \times
o^+_\lambda(z_{M-N+1},\dots,z_M|
\{ \alpha \}|\{ \gamma \}|t).
\label{typetwocomparisontwo}
\end{align}
Comparing the two ways of evaluations
\eqref{typetwocomparisontwo} and
\eqref{typetwodeterminantinhomogeneheous}
and cancelling common factors leads to
\begin{align}
&\prod_{j=1}^{M-N} (tz_j)^{-N}
\prod_{j=1}^M (1+\sqrt{-t}\gamma_j)
\prod_{\substack{1 \le j \le M-N \\ M-N+1 \le k \le M}}
(1+tz_j z_k)(1+tz_j z_k^{-1}) \nonumber \\
&\qquad \times \prod_{1 \le j<k \le M}(1+\alpha_j(\gamma_k-\gamma_j))
\prod_{1 \le j<k \le M}(1-\gamma_j \gamma_k)
\nonumber \\
&\quad =\sum_{\lambda \subseteq (M-N)^N}
o^-_{\overline{\lambda}}(tz_1,\dots,tz_{M-N}|\{-\alpha \}|\{-\gamma \}|t)
o^+_\lambda(z_{M-N+1},\dots,z_M|
\{ \alpha \}|\{ \gamma \}|t),
\end{align}
which, after some reparametrization, can be rewritten in
the form \eqref{typetwodualcauchy}.
\end{proof}

\section{Conclusion}
In this paper, we introduced and studied in detail
generalizations of the
free-fermionic six-vertex model under two types of reflecting boundary
which were recently introduced by
Ivanov and Brubaker-Bump-Chinta-Gunnells.
We derived the explicit forms of the wavefunctions and the dual wavefunctions
by using the Izergin-Korepin technique, a technique
which belongs to a class of the quantum inverse scattering method.
For the case which one uses the $K$-matrix by Ivanov \cite{Iv,Ivthesis}
at the boundary, we showed the wavefunctions can be expressed as a product of factors and the generalized symplectic Schur functions,
which generalizes the correspondence by Ivanov.
For the case which one uses the $K$-matrix by Brubaker-Bump-Chinta-Gunnells
\cite{BBCG},
we proved the wavefunctions are expressed as a product of factors
and generalizations of the Whittaker functions introduced by
Bump-Friedberg-Hoffstein \cite{BFH}. The correspondence reduces
to the conjecture by Brubaker-Bump-Chinta-Gunnells \cite{BBCG}
when all the factorial parameters are set to zero.

We also derived the factorized form of the
domain wall boundary partition functions for both models.
As a consequence, we derived the dual Cauchy formulas
for the generalized type $B$ Whittaker functions
and the symplectic Schur functions.

It is interesting to apply the analysis given in this paper
to other boundary conditions.
Formulating analogues of the wavefunctions
by looking at the graphical representations of
the partition functions introduced and studied
by Kuperberg \cite{Ku} or Brubaker-Schultz \cite{BS},
and applying the Izergin-Korepin analysis
to find the explicit forms, and deriving algebraic identities
is an interesting problem to be studied.
It should make the connections between
number theory and integrable lattice models
more fruitful.

\section*{Acknowledgments}
This work was partially supported by grant-in-Aid
for and Scientific Research (C) No. 18K03205 and No. 16K05468.
The authors thank the referee for helpful comments and suggestions
to improve the paper.

\begin{appendix}

\section{Some calculations for Property (2) in Proposition 
\ref{ordinarypropertiesfordomainwallboundarypartitionfunction} and 
\ref{typetwoordinarypropertiesfordomainwallboundarypartitionfunction}} 
\label{app1}
Here we present some calculations for 
the proof of Property (2) in Proposition 
\ref{ordinarypropertiesfordomainwallboundarypartitionfunction}
and 
\ref{typetwoordinarypropertiesfordomainwallboundarypartitionfunction}. 
First let us show the 
commutation relation of the $B$-operators
\begin{equation}
\mathcal{B}(z_i)\mathcal{B}(z_j)=\frac{z_j+t z_i}{z_i+tz_j}
\mathcal{B}(z_j)\mathcal{B}(z_i),
\end{equation}
where this relation holds for both type I and II $B$-operators.
This commutation relation can be proven by use of the graphical representation
in the following manner.
\begin{equation}
\includegraphics[width=0.99\textwidth]{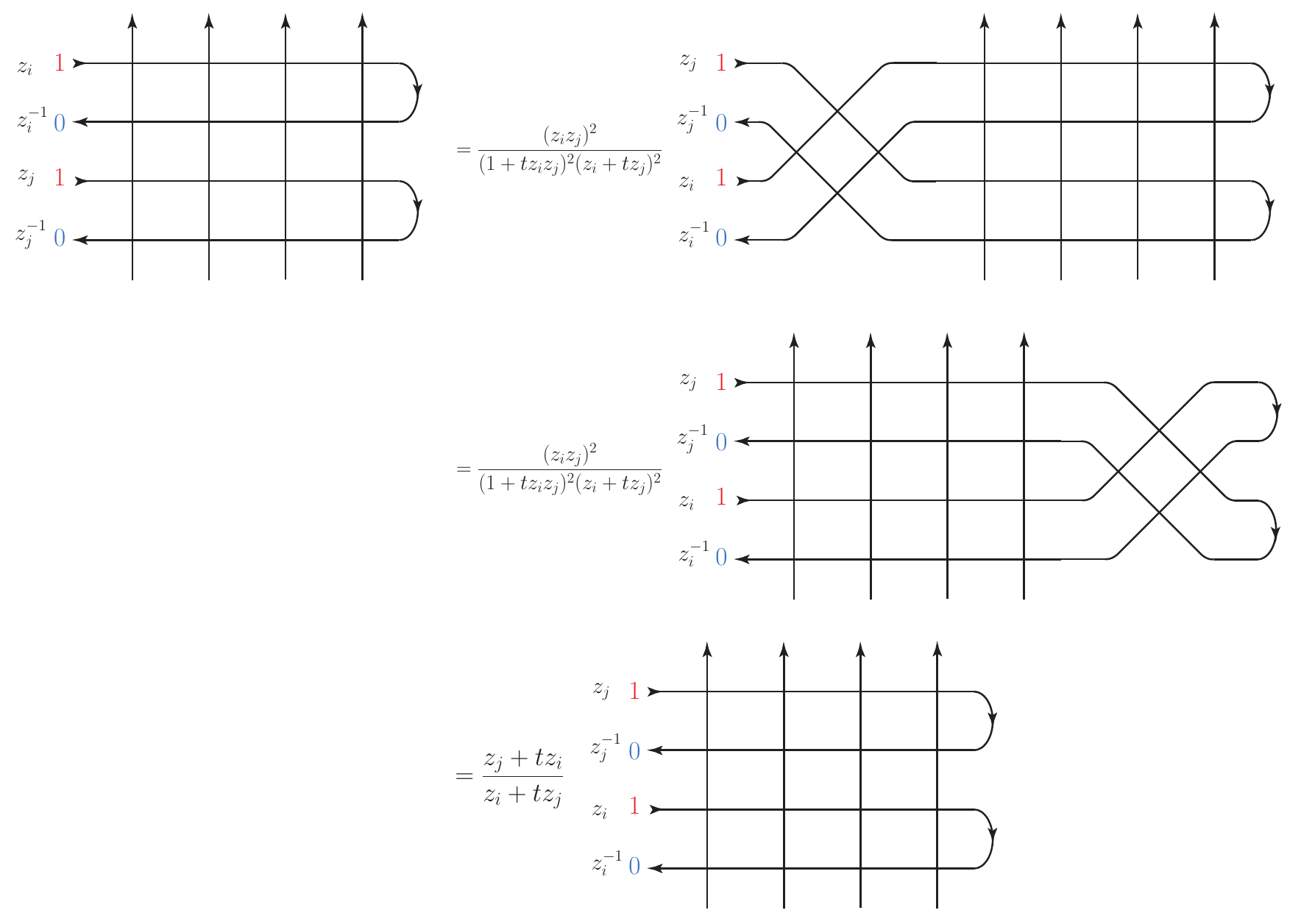}
\end{equation}
Let us explain in order. For the first equality, we have used the 
elements of the $R$-matrices shown in Figure \ref{pic-vertex}.
In going from the first to the second line, we have applied 
the Yang-Baxter relation \eqref{YBR} to commute the $B$-operators. Finally
the following ``caduceus relation" \cite{Iv,Ivthesis}
\begin{equation}
\includegraphics[width=0.75\textwidth]{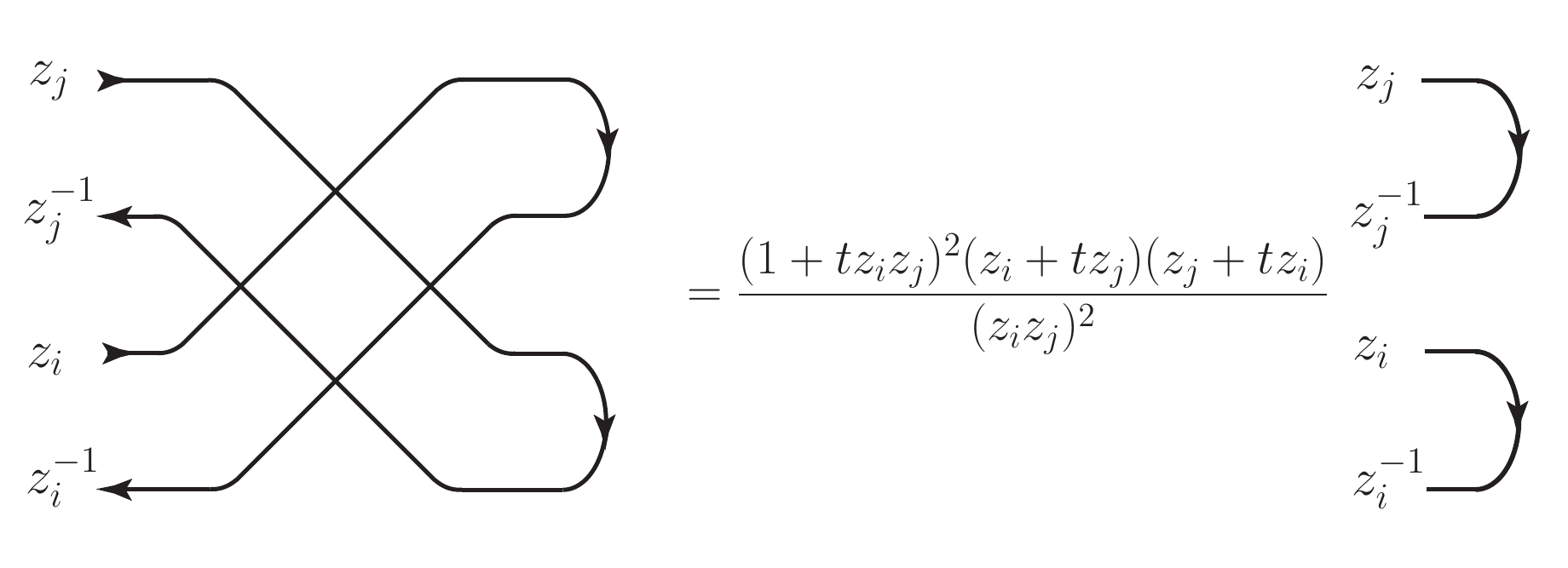}
\end{equation}
have been applied to the last line. The caduceus relation can be 
easily proved by the reflection equation
\eqref{RE}:
\begin{equation}
\includegraphics[width=0.8\textwidth]{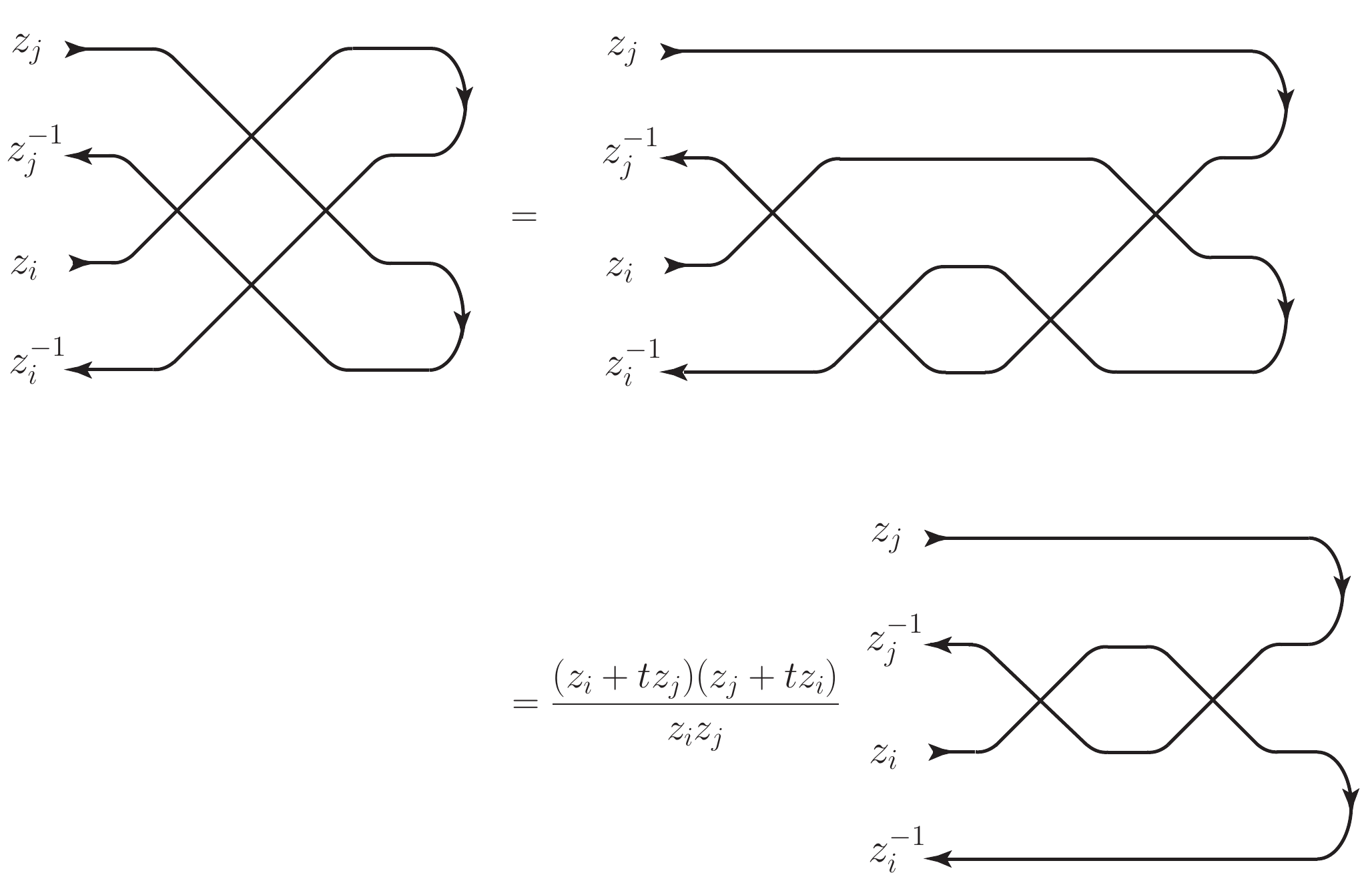}
\end{equation}
Here the reflection equation has been applied in the first equality.
In going from the first to the second equality and the second to the third 
equality, we have applied the following unitarity relations.
\begin{equation}
\includegraphics[width=0.98\textwidth]{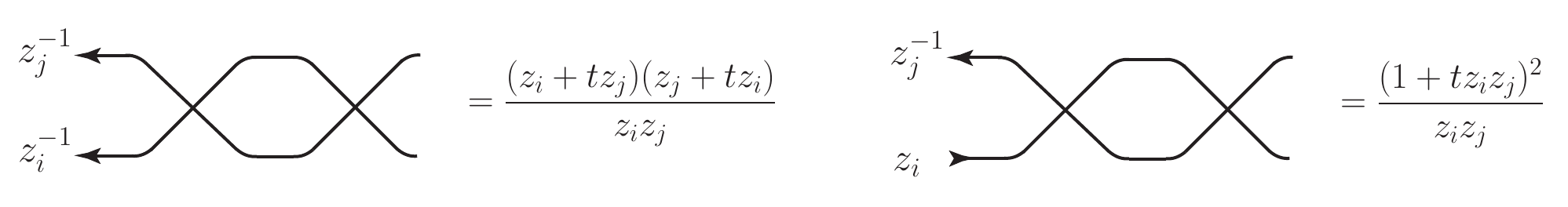}
\end{equation}

Thus the quantity 
\begin{equation}
\frac{\mathcal{B}(z_N)\cdots \mathcal{B}(z_1)}{\prod_{1\le j<k \le N}(z_k+t z_j)}
\label{comm-B}
\end{equation}
are symmetric with respect to the variables $z_1,\dots,z_N$, and hence
\eqref{symmetrywavefunction} and \eqref{typetwosymmetrywavefunction} are
 symmetric with respect to  $z_1,\dots,z_N$.

Next along the procedure in \cite{Iv,Ivthesis}, we show the quantities
\eqref{symmetrywavefunction} and \eqref{dualsymmetrywavefunction}
are invariant under the exchange $z_i\longleftrightarrow z_i^{-1}$. Due to the symmetry of 
\eqref{comm-B} with respect to $z_1,\dots,z_N$, it is sufficient to consider the case
$z_N\longleftrightarrow z_N^{-1}$ in Figure \ref{picturewavefunction}. In fact, 
the type $\Gamma$ $L$-operator on the last row in Figure 
\ref{picturewavefunction} can be replaced by the second type (type $\Delta$) $L$-operator 
$\widetilde{L}$ as follows. Because the elements of the $L$ appearing on the
last row are only the following three types:
\[
\includegraphics[width=0.4\textwidth]{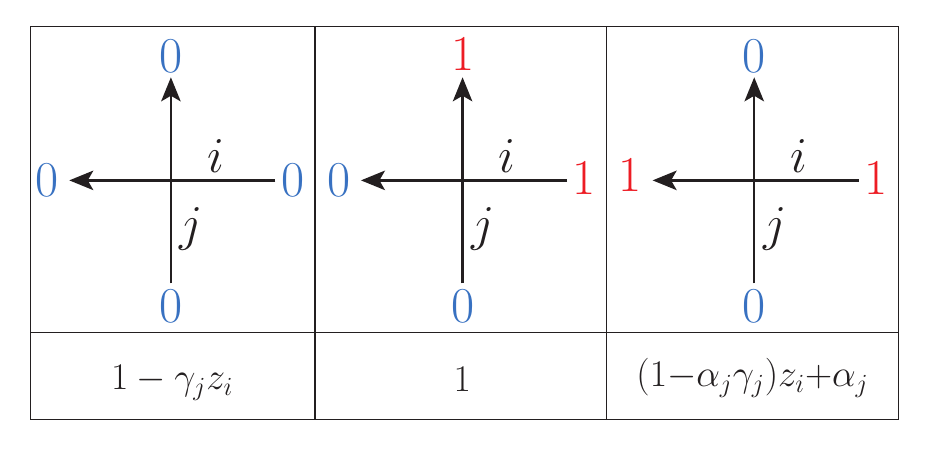}
\]
which are equivalent to the elements of $\widetilde{L}$:
\[
\includegraphics[width=0.4\textwidth]{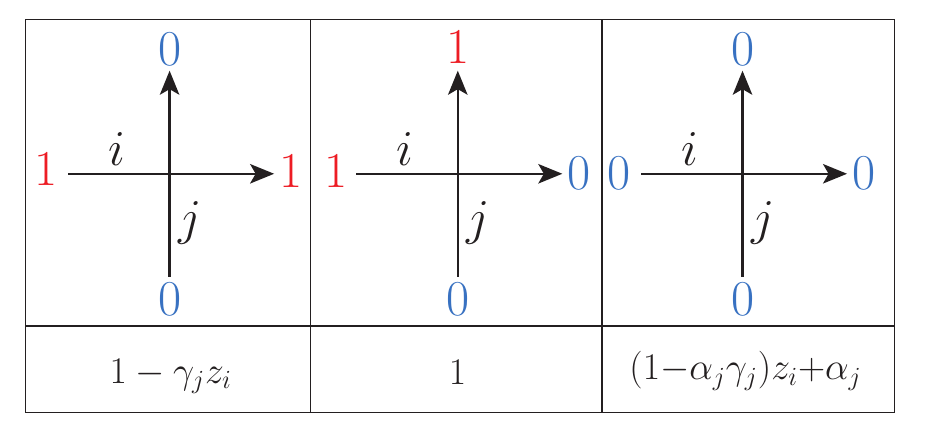}
\]
Thus we replace $\widetilde{L}$ to $L$ with $0\leftrightarrow 1$ without 
changing the wavefunctions.
\begin{equation}
\includegraphics[width=0.4\textwidth]{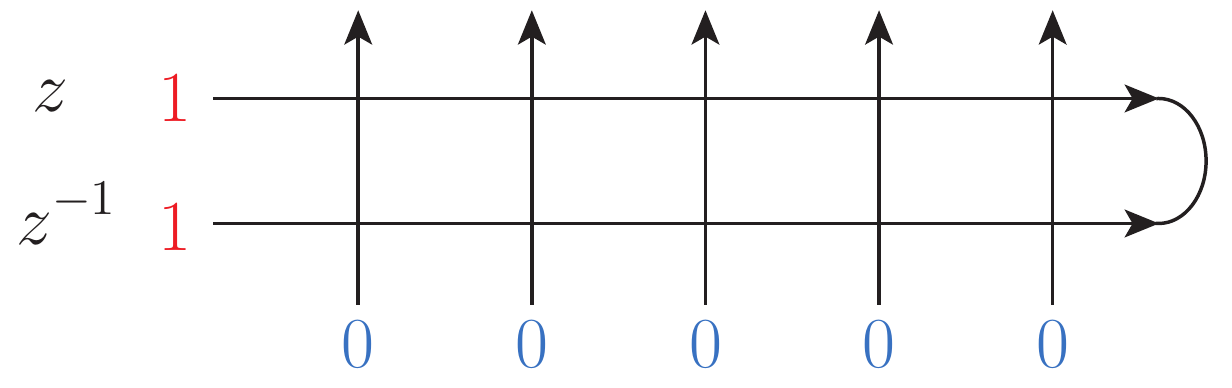}
\end{equation}
where the type I and type II $K$-matrices are, respectively, modified as
\[
\includegraphics[width=0.65\textwidth]{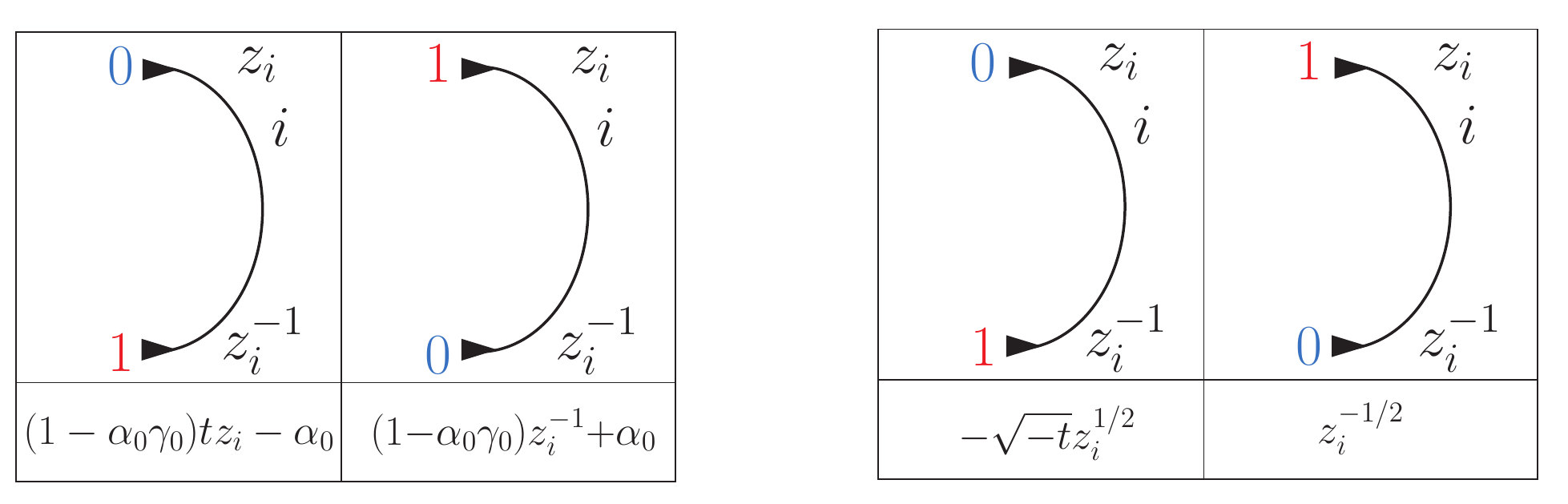}
\]
Multiplying the type $\Delta\Delta$ $R$-matrix  and using the Yang-Baxter
relation \eqref{YBR}, we obtain
\begin{equation}
\includegraphics[width=0.98\textwidth]{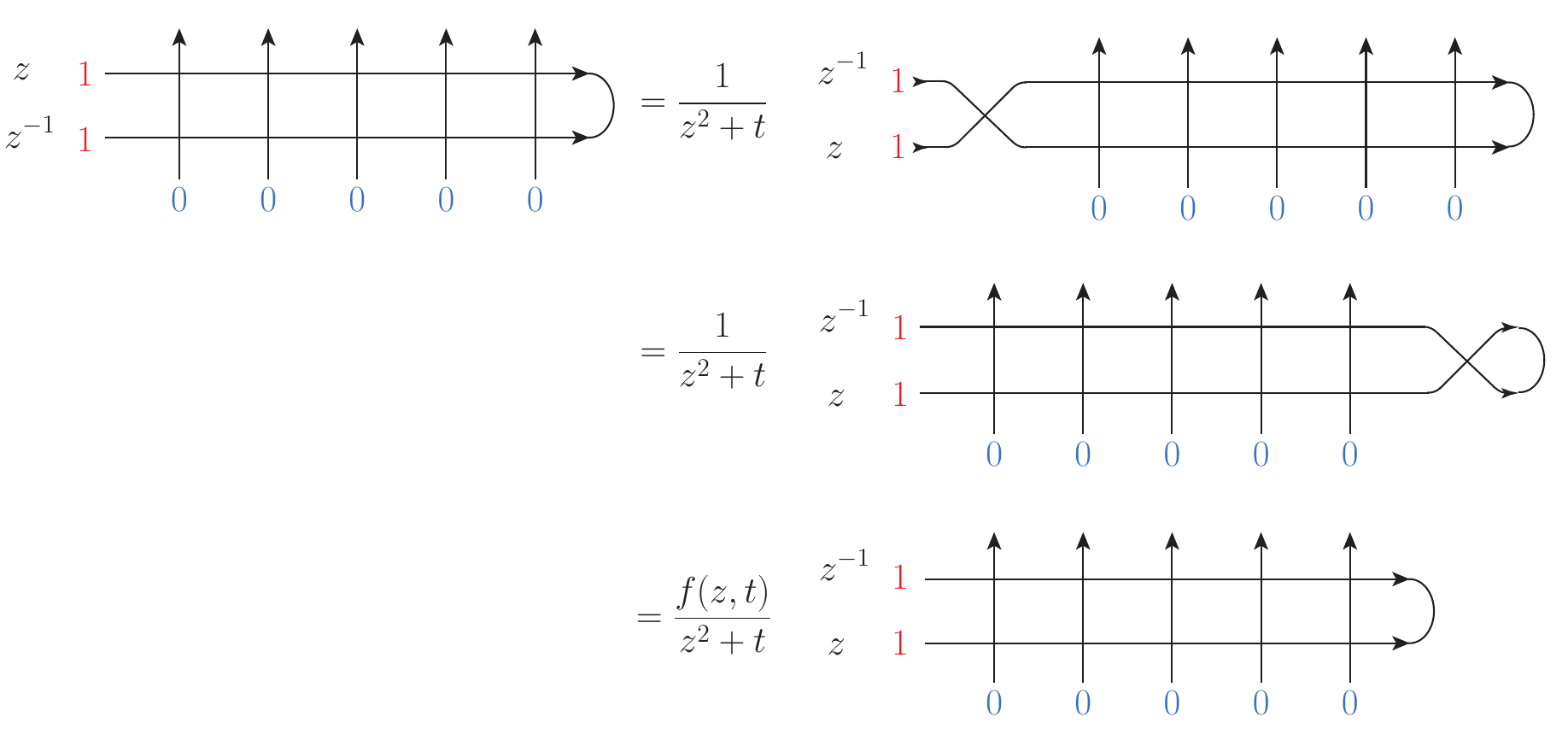}
\end{equation}
where we have used the elements of the $R$-matrix (see Figure \ref{pic-vertex})
and the following relation called ``fish relation" \cite{Iv,Ivthesis}:
\begin{equation}
\includegraphics[width=0.45\textwidth]{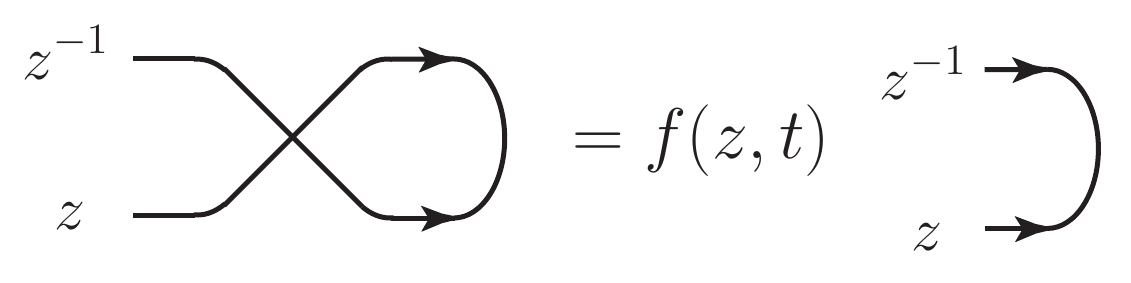}
\end{equation}
The overall factor $f(z,t)$ is given by
\begin{equation}
f(z,t)=\begin{cases}
1+tz^2 & \text{ for type I}\\
(\sqrt{-t}+z)(1-\sqrt{-t}z)  & \text{ for type II}
\end{cases}.
\end{equation}
Namely, the following expression holds:
\begin{equation}
\includegraphics[width=0.95\textwidth]{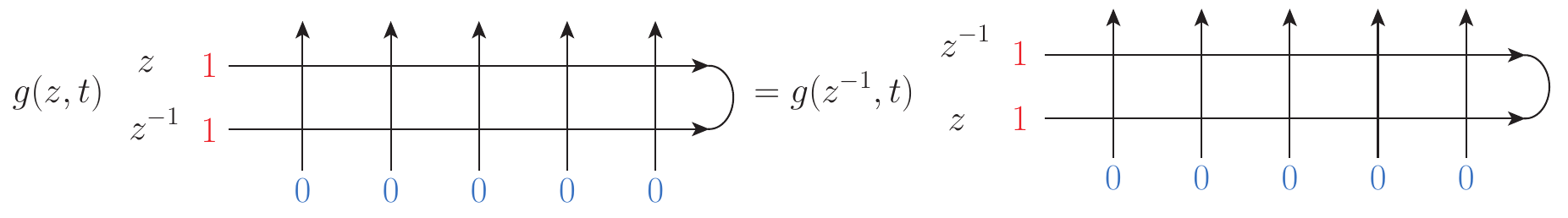}
\end{equation}
with
\begin{equation}
g(z,t)=\begin{cases}
\dfrac{z}{1+tz^2} & \text{ for type I}\\[5mm]
\dfrac{z^{1/2}}{1-\sqrt{-t}z}  & \text{ for type II}
\end{cases}.
\end{equation}
Therefore, the quantities \eqref{symmetrywavefunction} and \eqref{typetwosymmetrywavefunction} are
invariant under $z_i\longleftrightarrow z^{-1}_i$.

\section{Some calculations for the dual wavefunctions}

Let us show some computations
about the proof of the dual wavefunctions.
To prove
\eqref{dualmaintheorem}
in Theorem \ref{maintheoremstatement},
one must show that the following functions
$\overline{F}^{\rm I}_{M,N}(z_N,\dots,z_1|\gamma_1,\dots,\gamma_M|\overline{x_1},\dots,\overline{x_N})$ satisfy \eqref{remark}
in Proposition \ref{dualproposition}.

\begin{align}
\overline{F}^{\rm I}_{M,N}&(z_N,\dots,z_1|\gamma_1,\dots,\gamma_M|\overline{x_1},\dots,\overline{x_N})
\nonumber \\
&:=t^{N(M-N)}
\prod_{j=1}^N z_j^{j-1-N}(1+tz_j^2)
\prod_{1 \le j<k \le N}(1+tz_j z_k)(1+tz_j z_k^{-1})
\nonumber \\
&\quad 
\times sp_{\overline{\lambda}} 
( \{ tz \}_N |\{-\overline{\alpha} \}|\{-\overline{\gamma} \}) 
\biggl|_{z_i \to z_{N+1-i}} \nonumber \\
&=
\frac{t^{N(M-N)}
\prod_{j=1}^N z_{N+1-j}^{j-1-N}(1+tz_{N+1-j}^2)
\prod_{1 \le k<j \le N}(1+tz_j z_k)(1+tz_j z_k^{-1})
}{
(-1)^N \prod_{j=1}^N (tz_j)^{j-1-N}(1-t^2z_j^2)
\prod_{1 \le j<k \le N}(1-t^2 z_j z_k)(-1+z_j z_k^{-1})
} \nonumber \\
&\quad \times\sum_{\sigma \in S_N}
\sum_{\tau_1=\pm 1,\dots,\tau_N=\pm 1} (-1)^\sigma (-1)^{|\tau|}
\prod_{j=1}^N \prod_{k=0}^{\overline{x_j}-1}
\left\{-\alpha_k+(1-\alpha_k \gamma_k)(tz_{\sigma(j)})^{\tau_{\sigma(j)}}
\right\} \nonumber \\
&\quad \times
\prod_{j=1}^N \prod_{k=\overline{x_j}+1}^{M}
\left\{1+\gamma_k (tz_{\sigma(j)})^{\tau_{\sigma(j)}}\right\}
\prod_{j=1}^N \prod_{k=1}^{M}
\left\{1+\gamma_k (tz_{\sigma(j)})^{-\tau_{\sigma(j)}}\right\}. \label{appendixequation}
\end{align}
To check this, one first rewrites \eqref{appendixequation} as

\begin{align}
\overline{F}^{\rm I}_{M,N}&
(z_N,\dots,z_1|\gamma_1,\dots,\gamma_M|\overline{x_1},\dots,\overline{x_N})
\nonumber \\
&=
\frac{t^{N(M-N)}
\prod_{j=1}^N (1+tz_{j}^2)
\prod_{1 \le j<k \le N}(1+tz_j z_k)(1+tz_j^{-1} z_k)
}{
t^{-N(N+1)/2} (-1)^N \prod_{j=1}^N (1-t^2z_j^2)
\prod_{1 \le j<k \le N}(1-t^2 z_j z_k)(1-z_j^{-1} z_k)
} \nonumber \\
&\quad \times \sum_{\sigma \in S_N}
\sum_{\tau_1=\pm 1,\dots,\tau_N=\pm 1} (-1)^\sigma (-1)^{|\tau|}
\prod_{j=1}^N \prod_{k=0}^{\overline{x_j}-1}
\left(-\alpha_k+(1-\alpha_k \gamma_k)(tz_{\sigma(j)})^{\tau_{\sigma(j)}}
\right) \nonumber \\
&\quad \times
\prod_{j=1}^N \prod_{k=\overline{x_j}+1}^{M}
\left\{1+\gamma_k (tz_{\sigma(j)})^{\tau_{\sigma(j)}}\right\}
\prod_{j=1}^N \prod_{k=1}^{M}
\left\{1+\gamma_k (tz_{\sigma(j)})^{-\tau_{\sigma(j)}}\right\}. 
\label{appendixequationtwo}
\end{align}

Specializing $\gamma_M=-t^{-1} z_N^{-1}$ and $\overline{x_N}=M$,
only the summands satisfying $\sigma(N)=N$, $\tau_N=+1$ in 
\eqref{appendixequationtwo} survive,
from which we find that
$\overline{F}^{\rm I}_{M,N}
(z_1,\dots,z_N|\gamma_1,\dots,\gamma_M|\overline{x_1},\dots,\overline{x_N})
|_{\gamma_M=-t^{-1} z_N^{-1}}$ can be rewritten as
\begin{align}
&\overline{F}^{\rm I}_{M,N}
(z_1,\dots,z_N|\gamma_1,\dots,\gamma_M|\overline{x_1},\dots,\overline{x_N})
|_{\gamma_M=-t^{-1} z_N^{-1}}
\nonumber \\
&\,\, =-t^M \frac{1+tz_N^2}{1-t^2 z_N^2}
\frac{\prod_{j=1}^{N-1}(1+tz_j z_N)(1+tz_j^{-1} z_N)}
{\prod_{j=1}^{N-1}(1-t^2 z_j z_N) (1-z_j^{-1} z_N)} \nonumber \\
&\quad \times
\frac{t^{(N-1)(M-1-(N-1))}
\prod_{j=1}^{N-1} (1+tz_j^2)
\prod_{1 \le j<k \le N-1}(1+tz_j z_k)(1+tz_j^{-1} z_k)
}{t^{-(N-1)N/2}
(-1)^{N-1} \prod_{j=1}^{N-1} (1-t^2 z_j^2)
\prod_{1 \le j<k \le N-1}(1-t^2 z_j z_k)(1-z_j^{-1} z_k)
}
\nonumber \\
&\quad \times \sum_{\sigma \in S_{N-1}}
\sum_{\tau_1=\pm 1,\dots,\tau_{N-1}=\pm 1}
(-1)^\sigma (-1)^{|\tau|} \nonumber \\
&\quad \times \prod_{j=1}^{N-1} \prod_{k=0}^{\overline{x_j}-1}
\left\{-\alpha_k+(1-\alpha_k \gamma_k)(tz_{\sigma(j)})^{\tau_{\sigma(j)}}
\right\}
\prod_{k=0}^{M-1}
\left\{-\alpha_k+t(1-\alpha_k \gamma_k)z_N
\right\}
\nonumber \\
&\quad \times
\prod_{j=1}^{N-1} \prod_{k=\overline{x_j}+1}^{M-1}
\left\{1+\gamma_k (tz_{\sigma(j)})^{\tau_{\sigma(j)}}\right\}
\prod_{j=1}^{N-1}
\left\{1-\frac{(tz_{\sigma(j)})^{\tau_{\sigma(j)}}}{tz_N} \right\}
\nonumber \\
&\quad \times
\left(1-\frac{1}{t^2 z_N^2} \right)
\prod_{k=1}^{M-1}
\left(1+\frac{\gamma_k}{t z_N} \right)
\prod_{j=1}^{N-1}
\left\{1-\frac{(tz_{\sigma(j)})^{-\tau_{\sigma(j)}}}{tz_N} \right\}
\prod_{j=1}^{N-1} \prod_{k=1}^{M-1}
\left\{1+\gamma_k (tz_{\sigma(j)})^{-\tau_{\sigma(j)}}\right\}. 
\label{appendixequationthree}
\end{align}
Using the identity
\begin{align}
\prod_{j=1}^{N-1}
\left\{1-\frac{(tz_{\sigma(j)})^{\tau_{\sigma(j)}}}{tz_N} \right\}
\left\{1-\frac{(tz_{\sigma(j)})^{-\tau_{\sigma(j)}}}{tz_N} \right\}
=
\prod_{j=1}^{N-1}
\left(1-\frac{z_j}{z_N} \right)
\left(1-\frac{1}{t^2 z_N z_{j}} \right),
\end{align}
to simplify \eqref{appendixequationthree}
leads to
\begin{align}
\overline{F}^{\rm I}_{M,N}&
(z_1,\dots,z_N|\gamma_1,\dots,\gamma_M|\overline{x_1},\dots,\overline{x_N})
|_{\gamma_M=-t^{-1} z_N^{-1}}
\nonumber \\
&=
\prod_{j=1}^{N} \left(1+ \frac{1}{t z_N z_j} \right)
\prod_{j=1}^{N-1} \left(1+\frac{z_j}{tz_N} \right)
\prod_{j=0}^{M-1}\left\{t(1-\alpha_j \gamma_j)z_N-\alpha_j\right\}
\prod_{j=1}^{M-1} (t+\gamma_j z_N^{-1})
\nonumber \\
&\quad \times
\frac{t^{(N-1)(M-1-(N-1))}
\prod_{j=1}^{N-1} (1+tz_{j}^2)
\prod_{1 \le j<k \le N-1}(1+tz_j z_k)(1+tz_j^{-1} z_k)
}{
t^{-(N-1)N/2} (-1)^{N-1} \prod_{j=1}^{N-1} (1-t^2z_j^2)
\prod_{1 \le j<k \le N-1}(1-t^2 z_j z_k)(1-z_j^{-1} z_k)
} \nonumber \\
&\quad \times\sum_{\sigma \in S_{N-1}}
\sum_{\tau_1=\pm 1,\dots,\tau_{N-1}=\pm 1} (-1)^\sigma (-1)^{|\tau|}
\prod_{j=1}^{N-1} \prod_{k=0}^{\overline{x_j}-1}
\left\{-\alpha_k+(1-\alpha_k \gamma_k)(tz_{\sigma(j)})^{\tau_{\sigma(j)}}
\right\} \nonumber \\
&\quad \times
\prod_{j=1}^{N-1} \prod_{k=\overline{x_j}+1}^{M-1}
\left\{1+\gamma_k (tz_{\sigma(j)})^{\tau_{\sigma(j)}}\right\}
\prod_{j=1}^{N-1} \prod_{k=1}^{M-1}
\left\{1+\gamma_k (tz_{\sigma(j)})^{-\tau_{\sigma(j)}}\right\}
\nonumber \\
&=\prod_{j=1}^{N} \Bigg(1+ \frac{1}{t z_N z_j} \Bigg)
\prod_{j=1}^{N-1} \Bigg(1+\frac{z_j}{tz_N} \Bigg)
\prod_{j=0}^{M-1} \left\{t(1-\alpha_j \gamma_j)z_N-\alpha_j\right\}
\prod_{j=1}^{M-1} (t+\gamma_j z_N^{-1})
\nonumber \\
&\quad
\times \overline{F}^{\rm I}_{M-1,N-1}(z_{N-1},\dots,z_{1}|\gamma_1,\dots,\gamma_{M-1}|\overline{x_1},\dots,\overline{x_{N-1}}),
\end{align}
and we have shown
$\overline{F}^{\rm I}_{M,N}(z_N,\dots,z_1|\gamma_1,\dots,\gamma_M|\overline{x_1},\dots,\overline{x_N})$ satisfy \eqref{remark}
in Proposition \ref{dualproposition}.

Similarly, to prove
\eqref{typetwodualmaintheorem}
in Theorem \ref{typetwomaintheoremstatement},
one has to check that the following functions
$\overline{F}^{\rm II}_{M,N}(z_N,\dots,z_1|\gamma_1,\dots,\gamma_M|\overline{x_1},\dots,\overline{x_N})$ satisfy \eqref{typetworemark}
in Proposition \ref{typetwodualproposition}.

\begin{align}
\overline{F}^{\rm II}_{M,N}&(z_N,\dots,z_1|\gamma_1,\dots,\gamma_M|\overline{x_1},\dots,\overline{x_N})
\nonumber \\
&:=t^{N(M-N)}
\prod_{j=1}^N z_j^{j-1/2-N}(1-\sqrt{-t}z_j)
\prod_{1 \le j<k \le N}(1+tz_j z_k)(1+tz_j z_k^{-1})
\nonumber \\
&\quad \times o^{-}_{\overline{\lambda}} ( \{ tz \}_N |\{-\alpha \}|\{-\gamma \}|t)
\biggl|_{z_i \to z_{N+1-i}} \nonumber \\
&=
\frac{t^{N(M-N)}
\prod_{j=1}^N z_{N+1-j}^{j-1/2-N}(1-\sqrt{-t}z_{N+1-j})
\prod_{1 \le k<j \le N}(1+tz_j z_k)(1+tz_j z_k^{-1})
}{
(-1)^N \prod_{j=1}^N (tz_j)^{j-1-N}(1-t^2z_j^2)
\prod_{1 \le j<k \le N}(1-t^2 z_j z_k)(-1+z_j z_k^{-1})
} \nonumber \\
&\quad \times\sum_{\sigma \in S_N}
\sum_{\tau_1=\pm 1,\dots,\tau_N=\pm 1} (-1)^\sigma (-1)^{|\tau|}
\prod_{j=1}^N \left\{ (tz_{\sigma(j)})^{\tau_{\sigma(j)}}-\sqrt{-t}\right\}
\nonumber \\
&\quad \times \prod_{j=1}^N \prod_{k=1}^{\overline{x_j}-1}
\left\{-\alpha_k+(1-\alpha_k \gamma_k)(tz_{\sigma(j)})^{\tau_{\sigma(j)}}
\right\} \nonumber \\
&\quad \times
\prod_{j=1}^N \prod_{k=\overline{x_j}+1}^{M}
\left\{1+\gamma_k (tz_{\sigma(j)})^{\tau_{\sigma(j)}}\right\}
\prod_{j=1}^N \prod_{k=1}^{M}
\left\{1+\gamma_k (tz_{\sigma(j)})^{-\tau_{\sigma(j)}}\right\}. 
\label{typetwoappendixequation}
\end{align}

We again rewrite \eqref{typetwoappendixequation}
as
\begin{align}
\overline{F}^{\rm II}_{M,N}&
(z_N,\dots,z_1|\gamma_1,\dots,\gamma_M|\overline{x_1},\dots,\overline{x_N})
\nonumber \\
&=
\frac{t^{N(M-N)}
\prod_{j=1}^N z_j^{1/2}(1-\sqrt{-t}z_{j})
\prod_{1 \le j<k \le N}(1+tz_j z_k)(1+tz_j^{-1} z_k)
}{
t^{-N(N+1)/2} (-1)^N \prod_{j=1}^N (1-t^2z_j^2)
\prod_{1 \le j<k \le N}(1-t^2 z_j z_k)(1-z_j^{-1} z_k)
} \nonumber \\
&\quad \times\sum_{\sigma \in S_N}
\sum_{\tau_1=\pm 1,\dots,\tau_N=\pm 1} (-1)^\sigma (-1)^{|\tau|}
\prod_{j=1}^N \left\{ 
(tz_{\sigma(j)})^{\tau_{\sigma(j)}}-\sqrt{-t} \right\}
\nonumber \\
&\quad \times\prod_{j=1}^N \prod_{k=1}^{\overline{x_j}-1}
\left\{-\alpha_k+(1-\alpha_k \gamma_k)(tz_{\sigma(j)})^{\tau_{\sigma(j)}}
\right\} \nonumber \\
&\quad \times
\prod_{j=1}^N \prod_{k=\overline{x_j}+1}^{M}
\left\{1+\gamma_k (tz_{\sigma(j)})^{\tau_{\sigma(j)}}\right\}
\prod_{j=1}^N \prod_{k=1}^{M}
\left\{1+\gamma_k (tz_{\sigma(j)})^{-\tau_{\sigma(j)}}\right\}. \label{typetwoappendixequationtwo}
\end{align}

Only the summands satisfying $\sigma(N)=N$, $\tau_N=+1$ in 
\eqref{typetwoappendixequationtwo} survive
after the substitution $\gamma_M=-t^{-1} z_N^{-1}$ and 
$\overline{x_N}=M$
from which we finds

\begin{align}
&\overline{F}^{\rm II}_{M,N}
(z_1,\dots,z_N|\gamma_1,\dots,\gamma_M|\overline{x_1},\dots,\overline{x_N})
|_{\gamma_M=-t^{^1} z_N^{^1}}
\nonumber \\
&\,\,=-t^M \frac{z_N^{1/2} (1-\sqrt{-t}z_N)}{1-t^2 z_N^2}
\frac{\prod_{j=1}^{N-1}(1+tz_j z_N)(1+tz_j^{-1} z_N)}
{\prod_{j=1}^{N-1}(1-t^2 z_j z_N) (1-z_j^{-1} z_N)} \nonumber \\
&\quad \times
\frac{t^{(N-1)(M-1-(N-1))}
\prod_{j=1}^{N-1} z_j^{1/2} (1-\sqrt{-t}z_j)
\prod_{1 \le j<k \le N-1}(1+tz_j z_k)(1+tz_j^{-1} z_k)
}{t^{-(N-1)N/2}
(-1)^{N-1} \prod_{j=1}^{N-1} (1-t^2 z_j^2)
\prod_{1 \le j<k \le N-1}(1-t^2 z_j z_k)(1-z_j^{-1} z_k)
}
\nonumber \\
&\quad \times \sum_{\sigma \in S_{N-1}}
\sum_{\tau_1=\pm 1,\dots,\tau_{N-1}=\pm 1}
(-1)^\sigma (-1)^{|\tau|}
(tz_N-\sqrt{-t})
\prod_{j=1}^{N-1} \left\{
(tz_{\sigma(j)})^{\tau_{\sigma(j)}}-\sqrt{-t} \right\}
\nonumber \\
&\quad \times \prod_{j=1}^{N-1} \prod_{k=1}^{\overline{x_j}-1}
\left\{-\alpha_k+(1-\alpha_k \gamma_k)(tz_{\sigma(j)})^{\tau_{\sigma(j)}}
\right\}
\prod_{k=1}^{M-1}
\left\{-\alpha_k+t(1-\alpha_k \gamma_k)z_N
\right\}
\nonumber \\
&\quad \times
\prod_{j=1}^{N-1} \prod_{k=\overline{x_j}+1}^{M-1}
\left\{1+\gamma_k (tz_{\sigma(j)})^{\tau_{\sigma(j)}}\right\}
\prod_{j=1}^{N-1}
\left\{1-\frac{(tz_{\sigma(j)})^{\tau_{\sigma(j)}}}{tz_N} \right\}
\nonumber \\
&\quad \times
\left(1-\frac{1}{t^2 z_N^2} \right)
\prod_{k=1}^{M-1}
\left(1+\frac{\gamma_k}{t z_N} \right)
\prod_{j=1}^{N-1}
\left\{1-\frac{(tz_{\sigma(j)})^{-\tau_{\sigma(j)}}}{tz_N} \right\}
\prod_{j=1}^{N-1} \prod_{k=1}^{M-1}
\left\{1+\gamma_k (tz_{\sigma(j)})^{-\tau_{\sigma(j)}}\right\}.
\label{typetwoappendixequationthree}
\end{align}
We again use the identity
\begin{align}
\prod_{j=1}^{N-1}
\left\{1-\frac{(tz_{\sigma(j)})^{\tau_{\sigma(j)}}}{tz_N} \right\}
\left\{1-\frac{(tz_{\sigma(j)})^{-\tau_{\sigma(j)}}}{tz_N} \right\}
=
\prod_{j=1}^{N-1}
\left(1-\frac{z_j}{z_N} \right)
\left(1-\frac{1}{t^2 z_N z_{j}} \right),
\end{align}
to simplify \eqref{typetwoappendixequationthree} as
\begin{align}
&\overline{F}^{\rm II}_{M,N}(z_1,\dots,z_N|\gamma_1,\dots,\gamma_M|\overline{x_1},\dots,\overline{x_N})
|_{\gamma_M=-t^{-1} z_N^{-1}}
\nonumber \\
&\,\,=
-\sqrt{-t}z_N^{1/2} \prod_{j=1}^{N} \left(1+ \frac{1}{t z_N z_j} \right)
\prod_{j=1}^{N-1} \left(1+\frac{z_j}{tz_N} \right)
\prod_{j=1}^{M-1} \left\{t(1-\alpha_j \gamma_j)z_N-\alpha_j\right\}
\prod_{j=1}^{M-1} (t+\gamma_j z_N^{-1})
\nonumber \\
&\quad \times
\frac{t^{(N-1)(M-1-(N-1))}
\prod_{j=1}^{N-1} z_j^{1/2}(1-\sqrt{-t}z_{j})
\prod_{1 \le j<k \le N-1}(1+tz_j z_k)(1+tz_j^{-1} z_k)
}{
t^{-(N-1)N/2} (-1)^{N-1} \prod_{j=1}^{N-1} (1-t^2z_j^2)
\prod_{1 \le j<k \le N-1}(1-t^2 z_j z_k)(1-z_j^{-1} z_k)
} \nonumber \\
&\quad \times\sum_{\sigma \in S_{N-1}}
\sum_{\tau_1=\pm 1,\dots,\tau_{N-1}=\pm 1} (-1)^\sigma (-1)^{|\tau|}
\prod_{j=1}^{N-1} \left\{ 
(tz_{\sigma(j)})^{\tau_{\sigma(j)}}-\sqrt{-t} \right\}
\nonumber \\
&\quad \times\prod_{j=1}^{N-1} \prod_{k=0}^{\overline{x_j}-1}
\left\{-\alpha_k+(1-\alpha_k \gamma_k)(tz_{\sigma(j)})^{\tau_{\sigma(j)}}
\right\} \nonumber \\
&\quad \times
\prod_{j=1}^{N-1} \prod_{k=\overline{x_j}+1}^{M-1}
\left\{1+\gamma_k (tz_{\sigma(j)})^{\tau_{\sigma(j)}}\right\}
\prod_{j=1}^{N-1} \prod_{k=1}^{M-1}
\left\{1+\gamma_k (tz_{\sigma(j)})^{-\tau_{\sigma(j)}}\right\} \nonumber
\end{align}
\begin{align}
&\,\,=-\sqrt{-t}z_N^{1/2} \prod_{j=1}^{N} \left(1+ \frac{1}{t z_N z_j} \right)
\prod_{j=1}^{N-1} \left(1+\frac{z_j}{tz_N} \right)
\prod_{j=1}^{M-1} \left\{t(1-\alpha_j \gamma_j)z_N-\alpha_j\right\}
\prod_{j=1}^{M-1} (t+\gamma_j z_N^{-1})
\nonumber \\
&\quad \times \overline{F}^{\rm II}_{M-1,N-1}
(z_{N-1},\dots,z_{1}|\gamma_1,\dots,\gamma_{M-1}|\overline{x_1},
\dots,\overline{x_{N-1}}),
\end{align}
and we have shown
$\overline{F}^{\rm II}_{M,N}(z_N,\dots,z_1|\gamma_1,\dots,\gamma_M|\overline{x_1},\dots,\overline{x_N})$ satisfy \eqref{typetworemark}
in Proposition \ref{typetwodualproposition}.

\end{appendix}


\begin{thebibliography}{00}
%
\bibitem{Dr}
V. Drinfeld, Hopf algebras and the quantum Yang-Baxter equation,
Sov. math. Dokl. 32 (1985) 254.
%
\bibitem{J}
M, Jimbo, A $q$-difference analogue of $U(g)$ and the Yang-Baxter equation,
Lett. Math. Phys. 10 (1985) 63.
%
\bibitem{FST}
L.D. Faddeev, E.K. Sklyanin, L.A. Takhtajan,
The Quantum Inverse Problem Method. 1,
Theor. Math. Phys. 40 (1979) 194.
%
\bibitem{Baxter}
R.J. Baxter,
Exactly Solved Models in Statistical Mechanics,
Academic Press, London, 1982.
%
\bibitem{KBI}
V.E. Korepin, N.M. Bogoliubov, A.G. Izergin,
Quantum Inverse Scattering Method and Correlation functions,
Cambridge University Press, Cambridge, 1993.
%
\bibitem{Ko}
V.E. Korepin,
Calculation of Norms of Bethe Wavefunctions,
Comm. Math. Phys. 86 (1982) 391.
%
\bibitem{Iz}
A. Izergin,
Partition function of the 6-vertex model in a finite volume,
Sov. phys. Dokl. 32 (1987) 878.
%
\bibitem{Bre}
D. Bressoud,
Proofs and confirmations: The story of the
alternating sign matrix conjecture,
MAA Spectrum, Mathematical Association of America,
Washington, DC, 1999.
%
\bibitem{Ku}
G. Kuperberg,
Another proof of the alternating sign matrix conjecture,
Int. Math. Res. Not. 3 (1996) 139.
%
\bibitem{Bogo}
N. M. Bogoliubov,
Boxed plane partitions as an exactly solvable boson model,
J. Phys. A: Math. and Gen. 38 (2005) 9415.
%
\bibitem{BBF}
B. Brubaker, D. Bump, S. Friedberg,
Schur polynomials and the Yang-Baxter equation,
Comm. Math. Phys. 308 (2011) 281.
%
\bibitem{BMN}
D. Bump, P. McNamara, M. Nakasuji,
Factorial Schur functions and the Yang-Baxter equation,
Comm. Math. Univ. Sancti Pauli. 63 (2014) 23.
%
\bibitem{Lascoux}
A. Lascoux,
The 6 Vertex Model and Schubert Polynomials,
SIGMA 3 (2007) 029.
%
\bibitem{Mcnamara}
P.J. McNamara,
Factorial Schur functions via the six-vertex model,
arXiv:0910.5288.
%
\bibitem{KS}
C. Korff, C. Stroppel,
The $sl(n)$-WZNW Fusion Ring: a combinatorial construction and a realisation as quotient of quantum cohomology,
Adv. Math. 225 (2010) 200.
%
\bibitem{Korff}
C. Korff
Quantum cohomology via vicious and osculating walkers,
Lett. Math. Phys.
104 (2014) 771.
%
\bibitem{GK2}
V. Gorbounov, Korff,
Quantum Integrability and Generalised Quantum Schubert Calculus,
Adv. Math. 313 (2017) 282.
%
\bibitem{MSvertex}
K. Motegi, K. Sakai,
Vertex model, TASEP and Grothendieck polynomials,
J. Phys. A: Math. and Theor.
46 (2013) 355201.
%
\bibitem{MSboson}
K. Motegi, K. Sakai.,
$K$-theoretic boson-fermion correspondence and melting crystals,
J. Phys. A: Math. and Theor.
47 (2014) 445202.
%
\bibitem{MSW}
K. Motegi, K. Sakai, S. Watanabe.
Partition functions of integrable lattice models
and combinatorics of symmetric polynomials,
arXiv:1512.07955.
%
\bibitem{BW}
D. Betea, M. Wheeler,
Refined Cauchy and Littlewood Identities, Plane Partitions and Symmetry Classes of Alternating Sign Matrices,
J. Comb. Theory, Series A. 137 (2016) 126.
%
\bibitem{BWZ}
D. Betea, M. Wheeler, P. Zinn-Justin,
Refined Cauchy/Littlewood identities and six-vertex model partition functions: II. Proofs and new conjectures,
J. Alg. Comb. 42 (2015) 555.
%
\bibitem{WZ}
M. Wheeler, P. Zinn-Justin,
Refined Cauchy/Littlewood identities and six-vertex model partition functions: III. Deformed bosons,
Adv. Math. 299 (2016) 543.
%
\bibitem{Borodin}
A. Borodin,
On a family of symmetric rational functions,
Adv. Math. 306 (2017) 973.
%
\bibitem{BP1}
A. Borodin, L. Petrov,
Higher spin six vertex model and symmetric rational functions,
Sel. Math. New Series. 24 (2018) 1.
%
\bibitem{TakeyamaHecke}
Y. Takeyama, 
A deformation of affine Hecke algebra and integrable
stochastic particle system,
J. Phys. A: Math. and Theor. 47 (2014) 465203.
%
\bibitem{Takeyama}
Y. Takeyama, 
On the eigenfunctions for the multi-species $q$-Boson system,
Funkcialaj Ekvacioj 61 (2018) 349.
%
\bibitem{vDE}
J.F. van Diejen, E. Emsiz,
Orthogonality of Bethe Ansatz eigenfunctions for the Laplacian on a hyperoctahedral Weyl alcove,
Comm. Math. Phys. 350
(2017) 1017.
%
\bibitem{Motegi}
K. Motegi,
Symmetric functions and wavefunctions of XXZ-type six-vertex models and elliptic Felderhof models by Izergin-Korepin analysis,
J. Math. Phys. 59 (2018) 053505.
%
\bibitem{To}
T. Tokuyama,
A generating function of strict Gelfand patterns and some formulas on characters of general linear groups, J. Math. Soc. Japan.
40 (1988) 671.
%
\bibitem{OkTo}
S. Okada,
Alternating sign matrices and some deformations of Weyl's denominator formula,
J. Alg. Comb. 2 (1993) 155.
%
\bibitem{HK}
A.M. Hamel, R.C. King,
Tokuyama’s identity for factorial Schur $P$ and $Q$ functions,
Electronic J. Comb.
22 (2015) Paper P2.42.
%
\bibitem{Felderhof}
B. Felderhof,
Direct diagonalization of the transfer matrix of the zero-field free-fermion model, Physica 65 (1973) 421.
%
\bibitem{Mu}
J. Murakami,
The free-fermion model in presence of field related to the quantum group $U_q(sl_2)$ of affine type and the multi-variable Alexander polynomial of links,
Infinite analysis, Adv. Ser. in Math.
Phys. 16B (1991) 765.
%
\bibitem{DA}
T. Deguchi, Y. Akutsu.
Colored Vertex Models, Colored IRF Models and Invariants of Trivalent Colored Graphs, J. Phys. Soc. Japan. 62 (1993) 19.
%
\bibitem{PS}
J-H-H, Perk, C.L. Schultz,
New families of commuting transfer matrices in $q$-state vertex models,
Physics Letters A. 84 (1981) 407.
%
\bibitem{Yamane}
H. Yamane,
On Defining Relations of Affine Lie Superalgebras and Affine Quantized Universal Enveloping Superalgebras,
Publ. RIMS. 35 (1999) 321.
%
\bibitem{HK1}
A. Hamel, R.C. King.
Symplectic shifted tableaux and deformations of Weyl's denominator formula for $sp(2n)$, J. Alg. Comb.
16 (2002) 269.
%
\bibitem{HK2}
A. Hamel, R.C. King.
U-Turn Alternating Sign Matrices, Symplectic Shifted Tableaux and their Weighted Enumeration,
J. Alg. Comb.
21 (2005) 395.
%
\bibitem{ZZ}
S-Y. Zhao, Y-Z. Zhang.
Supersymmetric vertex models with domain wall boundary conditions,
J. Math. Phys.
48 (2007) 023504.
%
\bibitem{FCWZ}
A.D. Caradoc, O. Foda, M. Wheeler, M. Zuparic,
On the trigonometric Felderhof model with domain wall boundary conditions,
J. Stat. Mech.
2007 (2007) P03010.
%
\bibitem{ZYZ}
S-Y. Zhao,  W-L.Yang, Y-Z. Zhang,
Determinant representation of correlation functions for the 
$U_q(gl(1|1))$ free fermion model,
J. Math. Phys. 47 (2006) 013302.
%
\bibitem{Iv}
D. Ivanov,
Symplectic ice,
in Multiple Dirichlet Series, L-Functions, and Automorphic Forms.
D. Bump, S. Friedberg and D. Goldfield, eds.,
Progress in Mathematics. Birkhauser Boston,
300 (2012) 205-222.
%
\bibitem{Ivthesis}
D. Ivanov,
Part I, Symplectic ice, Part II, Global and local Kubota symbols.
PhD. Thesis,
Stanford University, USA (2010).
%
\bibitem{Tab}
S.J. Tabony,
Deformations of characters, metaplectic Whittaker functions
and the Yang-Baxter equation,
PhD. Thesis,
Massachusetts Institute of Technology, USA (2011).
%
\bibitem{BBCG}
B. Brubaker, D. Bump, G. Chinta, P.E. Gunnells.
Metaplectic functions and crystals of type B,
in Multiple Dirichlet Series, L-Functions, and Automorphic Forms.
D. Bump, S. Friedberg and D. Goldfield, eds.,
Progress in Mathematics. Birkhauser Boston,
300 (2012) 93-118.
%
\bibitem{BS}
B. Brubaker, A. Schultz,
The 6-vertex model and deformations of the Weyl character formula,
J. Alg. Comb. 42 (2015) 917.
%
\bibitem{BBB}
B. Brubaker, V. Buciumas, D. Bump,
A Yang-Baxter equation for metaplectic ice,
Communications in Number Theory and Physics.
13 (2019) 101.
%
\bibitem{BBBGduality}
B. Brubaker, V. Buciumas, D. Bump, N. Gray,
Duality for metaplectic ice,
Appendix to \cite{BBB}.
%
\bibitem{BBBF}
B. Brubaker, V. Buciumas, D. Bump, S. Friedberg,
Hecke modules from metaplectic ice,
Sel. Math. New Series 24 (2018) 2523.
%
\bibitem{Gray}
N. Gray,
Metaplectic ice for Cartan type $C$,
arXiv:1709.04971.
%
\bibitem{BBBG}
B. Brubaker,  V. Buciumas, D. Bump, H. Gustafsson,
Vertex operators, solvable lattice models and metaplectic Whittaker functions,
arXiv:1806.07776.
%
\bibitem{LMP}
K. Motegi,
Dual wavefunction of the Felderhof model,
Lett. Math. Phys. 107 (2017) 1235.
%
\bibitem{MoIK}
K. Motegi,
Izergin-Korepin Analysis on the Projected Wavefunctions of the Generalized Free-Fermion Model,
Adv.  Math. Phys.
2017 (2017)
Article ID 7563781.
%
\bibitem{Moelliptic}
K. Motegi,
Elliptic supersymmetric integrable model and multivariable elliptic functions,
Prog. Theor. Exp. Phys.
2017 (2017) 123A01.
%
\bibitem{FWZ}
O. Foda, M. Wheeler, M. Zuparic,
Two elliptic height models with factorized domain wall partition functions,
J. Stat. Mech.
2008 (2008) P02001.
%
\bibitem{Okado}
M. Okado,
Solvable Face Models Related to the Lie Superalgebra $sl(m|n)$,
Lett. Math. Phys. 22 (1991) 39.
%
\bibitem{DF}
T. Deguchi, A Fujii,
IRF models associated with representations of the Lie superalgebras $gl(m|n)$ and $sl(m|n)$,
Mod. Phys. Lett. A. 6 (1991) 3413.
%
\bibitem{DM}
T. Deguchi, P. Martin,
An algebraic approach to vertex models and transfer matrix spectra,
Int. J. Mod. Phys. A.
7, no. suupp01a (1992) 165.
%
\bibitem{BFH}
D. Bump, S. Friedberg, J. Hoffstein,
$p$-adic Whittaker functions on the metaplectic group,
Duke Math. J. 63 (1991) 379.
%
\bibitem{Wheeler}
M. Wheeler,
An Izergin-Korepin procedure for calculating scalar products in the six-vertex model, Nuclear Phys. B 852 (2011) 468.

\bibitem{Ts}
O. Tsuchiya,
Determinant formula for the six-vertex model with reflecting end,
J. Math. Phys.
39 (1998) 5946.
%
\bibitem{FK}
G. Filali, N. Kitanine,
The partition function of the trigonometric SOS model
with a reflecting end,
J. Stat. Mech.
2010 (2010) L06001.
%
\bibitem{Filali}
G. Filali,
Elliptic dynamical reflection algebra and partition function of SOS model with reflecting end,
J. Geom. Phys. 
61 (2011) 1789.
%
\bibitem{RK}
G.A.P. Ribeiro, V.E. Korepin,
Thermodynamic limit of the six-vertex model with reflecting end,
J. Phys. A: Math. and Theor.
48 (2015) 045205.
%
\bibitem{CMRV}
N. Crampe, K. Mallick, E. Ragoucy, M. Vanicat.
Inhomogeneous discrete-time exclusion processes,
J. Phys. A: Math. and Theor.
48 (2015) 484002.
%
\bibitem{MoRMP}
K. Motegi,
Dual wavefunction of the symplectic ice,
Rep. Math. Phys.
80 (2017) 414.
%
\bibitem{Mouqreflecting}
K. Motegi,
Izergin-Korepin analysis on the wavefunctions of the $U_q(sl_2)$ six-vertex model with reflecting end, to appear in
Annales de l'Institut Henri Poincar\'{e} D.
%
\bibitem{Chinesegroup}
W-L. Yang, X. Chen, J. Feng, K. Hao, K-J. Shi,
C-Y. Sun, Z-Y. Yang, Y-Z. Zhang,
Domain wall partition function of the eight-vertex model with a non-diagonal reflecting end,
Nuclear Phys. B 847 (2011) 367.
%
\bibitem{Chinesegroup2}
W-L. Yang, X. Chen, J. Feng, K. Hao, K. Wu, Z-Y. Yang, Y-Z. Zhang,
Scalar products of the open XYZ chain with non-diagonal boundary terms,
Nuclear Phys. B 848 (2011) 523.
%
\bibitem{Galleasone}
W. Galleas,
Multiple integral representation for the trigonometric SOS model with domain wall boundaries, Nuclear Phys. B
858 (2012) 117.
%
\bibitem{GL}
W. Galleas, J. Lamers,
Reflection algebra and functional equations,
Nuclear Phys. B
886 (2014) 1003.
%
\bibitem{Lamers}
J. Lamers,
Integral formula for elliptic SOS models with domain walls 
and a reflecting end, Nuclear Phys. B
901 (2015) 556.
%
\bibitem{PRS}
S. Pakuliak, V. Rubtsov, A. Silantyev,
The SOS model partition function and the elliptic weight functions,
J. Phys. A: Math. and Theor. 41 (2008) 295204.
%
\bibitem{Ros}
H. Rosengren,
An Izergin-Korepin-type identity for the 8VSOS model, with applications to alternating sign matrices,
Adv. Appl. Math. 43 (2009) 137.
%
\bibitem{YZ}
W-L. Yang and Y-Z. Zhang,
Partition function of the eight-vertex model with
domain wall boundary condition,
J. Math. Phys. 50 (2009) 083518.
%
\bibitem{Sklyanin}
E. Sklyanin,
Boundary conditions for integrable quantum systems,
J. Phys. A: Math. Gen. 21 (1988) 2375.
%
\bibitem{Morris}
A. Morris,
Spin representation of a direct sum and a direct product,
J. London Math. Soc. 33 (1958) 326.
%
\bibitem{King}
R.C. King,
Branching rules for classical Lie groups using tensor and spinor methods,
J. Phys. A: Math. and Gen. 8 (1975) 429.
%
\bibitem{JM}
M. Jimbo and T. Miwa,
On a duality of branching rules for affine Lie algebras,
Adv. Stud. in Pure Math.
6 (1985) 17.
%
\bibitem{Ha}
K. Hasegawa,
Spin module versions of Weyl's reciprocity theorem for classical Kac-Moody Lie algebras - An application to branching rule duality,
Publ. RIMS. 25 (1989) 741.
%
\bibitem{Te}
I. Terada,
A Robinson-Schensted-Type Correspondence for a Dual Pair on Spinors,
J. Comb. Theory, Series A.
63 (1993) 90.
%
\bibitem{BG}
D. Bump, A. Gamburd,
On the Averages of Characteristic Polynomials From Classical Groups,
Comm. Math. Phys.
265 (2006) 227.
%
\bibitem{HK3}
A. Hamel and R.C. King,
Bijective proof of a symplectic dual pair identity,
SIAM Journal on Discrete Math.
25 (2011) 539.
%






\end{thebibliography}
\end{document}